\newcommand{\nosemic}{\renewcommand{\@endalgocfline}{\relax}}
\newcommand{\dosemic}{\renewcommand{\@endalgocfline}{\algocf@endline}}
\let\oldnl\nl
\newcommand{\nonl}{\renewcommand{\nl}{\let\nl\oldnl}}
\newcolumntype{C}[1]{>{\centering\arraybackslash}m{#1}}
\tikzset{main node/.style={circle,draw,minimum size=0.3cm,inner sep=0pt},}
\definecolor{Gray}{gray}{0.5}
\newtheorem{thm}{Theorem}
\newtheorem{lem}{Lemma}
\newtheorem{defn}{Definition}
\def\id#1{\ensuremath{\mathit{#1}}}
\def\idrm#1{\ensuremath{\mathrm{#1}}}
\title{Fast Computation of Graph Edit Distance}
\author{%
{Xiaoyang Chen{$^{\dag}$}, Hongwei Huo{$^{*\dag}$},
Jun Huan{$^{\ddag}$}, Jeffrey~Scott~Vitter {$^{\S}$}}%

\vspace{1.6mm}\\
\fontsize{10}{10}\selectfont\itshape
$^{\dag}$ {\rm Dept. of Computer Science, Xidian University, chenxyu1991@gmail.com, hwhuo@mail.xidian.edu.cn} \\
$^{\ddag}$ {\rm Dept. of Electrical Engineering and Computer Science, The University of Kansas, jhuan@ku.edu} \\
$^{\S}$ {\rm Dept. of Computer and Information Science, The University of Mississippi,
JSV@OleMiss.edu} \\
}
\begin{document}
\maketitle
\begin{abstract}

The graph edit distance (GED) is a well-established distance measure
widely used in many applications. However, existing methods for the
GED computation suffer from several drawbacks including oversized
search space, huge memory consumption, and lots of expensive backtracking.
In this paper, we present {BSS\_GED}, a novel vertex-based mapping
method for the GED computation. First, we create a small search space
by reducing the number of invalid and redundant mappings involved in
the GED computation. Then, we utilize beam-stack search combined with
two heuristics to efficiently compute GED, achieving a flexible
trade-off between available memory and expensive backtracking. Extensive
experiments demonstrate that {BSS\_GED} is highly efficient for the GED
computation on sparse as well as dense graphs and outperforms the
state-of-the-art GED methods. In addition, we also apply {BSS\_GED} to
the graph similarity search problem and the practical results confirm
its efficiency.
\end{abstract}

\section{Introduction}

Graphs are widely used to model various complex structured data, including
social networks, molecular structures, etc. Due to extensive applications
of graph models, there has been a considerable effort in developing techniques
for effective graph data management and analysis, such as graph matching~\cite{RiesenH2009}
and graph similarity search~\cite{ChenHHJ2016, ZengTWFZ2009,ZhaoXLW2012}.

Among these studies, similarity computation between two graphs is a
core and essential problem. In this paper, we focus on the similarity
measure based on graph edit distance (GED) since it is applicable to virtually
all types of data graphs and can also precisely capture structural
differences. Due to the flexible and error-tolerant characteristics of
GED, it has been successfully applied in many applications,
such as molecular comparison in chemistry~\cite{MarinAD2008}, object
recognition in computer vision~\cite{ConteFSV2004} and graph
clustering~\cite{KellyH2005}.

Given two graphs $G$ and $Q$, the GED between them, denoted by $ged(G,Q)$,
is defined as the minimum cost of an edit path that transforms one graph
to another. Unfortunately, unlike the classical
graph matching problem, such as subgraph isomorphism~\cite{YanYH2004}, the fault
tolerance of GED allows a vertex of one graph to be mapped to any
vertex of the other graph, regardless of their labels and degrees.
As a consequence, the complexity of the GED computation is higher
than that of subgraph isomorphism, which has been proved to be an
NP-hard~\cite{ZengTWFZ2009} problem.

The GED computation is usually carried out by means of
a tree search algorithm which explores the space of all possible
mappings of vertices and edges of comparing graphs. The underlying
search space can be organized as an ordered search tree. Based on the
way of generating successors of nodes in the search tree, existing
methods can be divided into two broad categories: vertex-based and
edge-based mapping methods. When generating successors of a node,
the former extends unmapped vertices of comparing graphs, while the
later extends unmapped edges. {A$^\star$-GED}~\cite{RiesenFB2007, RiesenEB2013}
and {DF-GED}~\cite{AbuRRM2015} are two major vertex-based mapping methods.
{A$^\star$-GED} adopts the best-first search paradigm~\textbf{A$^\star$}~\cite{HartNB1968},
which picks up a partial mapping with the minimum induced edit cost
to extend each time. The first found complete mapping induces the GED
of comparing graphs. However, {DF-GED} carries out a depth-first search,
which quickly reaches a leaf node. The edit cost of a leaf node in fact
is an upper bound of~GED and hence can be used to prune nodes later to
accelerate~the GED computation. Different from the above two
methods, {CSI\_GED}~\cite{GoudaH2016} is a novel edge-based mapping
method based on \emph{common substructure isomorphism}, which
works well for the sparse and distant graphs. Similar to {DF-GED},
{CSI\_GED} also adopts the depth-first search paradigm.

Even though existing methods have achieved promising preliminary results,
they still suffer from several drawbacks. Both {A$^\star$-GED}
and {DF-GED} enumerate all possible mappings between
two graphs. However, among these mappings, some mappings
must not be optimal, called \emph{invalid mappings}, or they induce
the same edit cost, called \emph{redundant mappings}. For invalid
mappings, we do not have to generate them, and for redundant mappings,
we only need to generate one of them so as to avoid redundancy.
The search space of {A$^\star$-GED} and {DF-GED} becomes oversized
as they generate plenty of invalid and redundant mappings.

In addition, for {A$^\star$-GED}, it needs to store enormous
partial mappings, resulting in a huge memory consumption.
In practice, {A$^\star$-GED} cannot compute the GED of graphs with
more than 12 vertices. Though {DF-GED} performing a depth-first
search is efficient in memory, it is easily trapped into a local (i.e., suboptimal)
solution and hence produces lots of expensive backtracking. On the
other hand, for {CSI\_GED}, it adopts the depth-first search paradigm,
and hence also faces the expensive backtracking problem. Besides,
the search space of {CSI\_GED} is exponential with respect to
the number of edges of comparing graphs, making it naturally be unsuitable
for dense graphs.

To solve the above issues, we propose a novel vertex-based mapping method
for the GED computation, named {BSS\_GED}, based on \emph{beam-stack search}~\cite{ZhouH2005}
which has shown an excellent performance in AI
literature. Our contributions in this paper are summarized below.

\begin{itemize}
\item
  We propose a novel method of generating successors of nodes
  in the search tree, which reduces a large number of
  invalid and redundant mappings involved in the GED computation.
  As a result, we create a small search space. Moreover, we also
  give a rigorous theoretical analysis of the search space.

\item
  Incorporating with the beam-stack search paradigm into
  our method to compute GED, we achieve a flexible trade-off
  between available memory and the time overhead of backtracking and
  gain a better performance than the best-first and depth-first
  search paradigms.

\item
  We propose two heuristics to prune the search space, where the first
  heuristic produces tighter lower bound and the second heuristic
  enables to fast search of tighter upper bound.

\item
  We have conducted extensive experiments on both real and synthetic
  datasets. The experimental results show that {BSS\_GED} is highly
  efficient for the GED computation on sparse as well as dense graphs, and
  outperforms the state-of-the-art GED methods.

\item
  In addition, we also extend {BSS\_GED} as a standard graph
  similarity search query method and the practical
  results confirm its efficiency.

\end{itemize}

The rest of this paper is organized as follows: In Section~\ref{sec:Problem},
we introduce the problem definition and then give an overview of the vertex-based
mapping method for the GED computation. In Section~\ref{sec:Reduce}, we
create a small search space by reducing the number of invalid and redundant
mappings involved in the GED computation. In Section~\ref{sec:BSSGED}, we
utilize the beam-stack search paradigm to traverse the search space to compute
GED. In Section~\ref{sec:optimization}, we propose two heuristics to
prune the search space. In Section~\ref{sec:ExtensionBSS}, we
extend {BSS\_GED} as a standard graph similarity search query method.
In Section~\ref{sec:experiments}, we report the experimental results
and our analysis. Finally, we investigate research works related
to this paper in Section~\ref{sec:relatedWorks} and then make
concluding remarks in Section~\ref{sec:conclusion}.

\section{Preliminaries}
\label{sec:Problem}

In this section, we introduce basic notations. For simplicity
in exposition, we only focus on simple undirected graphs without
multi-edges or self-loops.

\subsection{Problem Definition}
\label{subsec:problemDefinition}

Let $\Sigma$ be a set of discrete-valued labels. A labeled graph
is a triplet $G =(V_G, E_G, L)$, where $V_G$ is the set of vertices,
$E_G \subseteq V_G \times V_G$ is the set of edges, $L: V_G \cup E_G \to \Sigma$
is a labeling function which assigns a label to a vertex or an edge.
For a vertex $u$, we use $L(u)$ to denote its label.
Similarly, $L(e(u,v))$ is the label of an edge $e(u,v)$. $\id{\idrm{\Sigma}_{V_G}}
= \{L(u): u \in V_G\}$ and $\id{\idrm{\Sigma}_{E_G}} = \{L(e(u,v)): e(u,v) \in E_G\}$
are the label multisets of~$V_G$ and $E_G$, respectively.
For a graph $G$, $S(G) = (V_G, E_G)$ is its unlabeled version,
i.e., its structure. In this paper, we refer $|V_G|$ to the
size of graph $G$.

\begin{defn}[Subgraph Isomorphism~\cite{YanYH2004}]
\label{def:isomorphism}

Given two graphs $G$ and $Q$, $G$ is subgraph isomorphic to $Q$,
denoted by $G \subseteq Q$, if there exists an injective
function~$\phi:V_G \to V_Q$, such that (1) $\forall u \in V_G$,
$\phi(u) \in V_Q$ and $L(u) = L(\phi(u))$. (2) $\forall e(u, v)\in E_G$,
$e(\phi(u), \phi(v)) \in E_Q$ and $L(e(u, v)) = L(e(\phi(u), \phi(v)))$.
If $G \subseteq Q$ and $Q \subseteq G$, then $G$ and $Q$ are graph
isomorphic to each other, denoted by $G \cong Q$.

\end{defn}

There are six edit operations can be used to transform
one graph to another~\cite{WangWYY2012, ZhengZLWZ2015}:
insert/delete an isolated vertex, insert/delete an edge,
and substitute the label of a vertex or an edge.
Given two graphs $G$ and $Q$, an~\emph{edit path}
$P=\langle p_1, \dots, p_k \rangle$ is a sequence of
edit operations that transforms one graph to another,
such as $G = G^{0} \xrightarrow{p_1}, \dots, \xrightarrow{p_k} G^k \cong Q$.
The edit cost of $P$ is defined as the sum of edit cost
of all operations in $P$, i.e., $\sum_{i=1}^{k}c(p_i)$,
where $c(p_i)$ is the edit cost of the edit operation $p_i$.
In this paper, we focus on the uniform cost model, i.e.,
$c(p_i)=1$ for $\forall i$, thus the edit cost of $P$ is
its length, denoted by $|P|$. For $P$, we call it is \emph{optimal}
if and only if it has the minimum length among all possible
edit paths.

\begin{defn}[Graph Edit Distance]
\label{def:editPath}

Given two graphs $G$ and~$Q$, the graph edit distance between them,
denoted by $ged(G, Q)$, is the length of an optimal edit path that
transforms $G$ to $Q$ (or vice versa).

\end{defn}

\noindent \textbf{Example 1}. In Figure~\ref{Fig:OptimalEditPath},
we show an optimal edit path $P$ that transforms graph $G$ to graph $Q$.
The length of $P$ is 4, where we {delete two} edges $e(u_{1}, u_{2})$
and $e(u_1, u_3)$, substitute the label of vertex $u_1$ with label A
and insert one edge $e(u_1, u_4)$ with label~a.

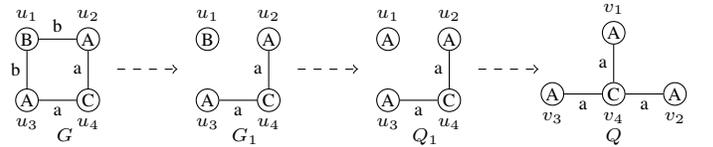
\begin{figure}[!htbp]
\centering
\begin{tikzpicture}[xshift=-1cm]
\tikzset{main node/.style={circle,draw,minimum size=0.3cm,inner sep=0pt},  }
\scriptsize{
    \node[main node] (1) [label=above:$u_1$] {B};
    \node[main node] (2) [right = 0.5cm  of 1,label=above:$u_2$] {A};
    \node[main node] (3) [below = 0.5cm  of 1,label=below:$u_3$] {A};
    \node[main node] (4) [right = 0.5cm  of 3,label=below:$u_4$] {C};

    \path
    (1) edge [above] node {b} (2)
    (1) edge [left]  node {b} (3)
    (2) edge [left]  node {a} (4)
    (3) edge [below] node {a} (4)
    ;

    \draw[->,dashed]  (1.2,-0.4) -- (2.0,-0.4);

    \begin{scope}[xshift=2.4cm]
    \node[main node] (5) [label=above:$u_1$] {B};
    \node[main node] (6) [right = 0.5cm  of 5,label=above:$u_2$] {A};
    \node[main node] (7) [below = 0.5cm  of 5,label=below:$u_3$] {A};
    \node[main node] (8) [right = 0.5cm  of 7,label=below:$u_4$] {C};

    \path[draw,thin]
    (6) edge [left] node {a} (8)
    (7) edge [below] node {a} (8)
    ;
    \end{scope}

    \draw[->,dashed]  (3.6,-0.4) -- (4.4,-0.4);

    \begin{scope}[xshift=4.8cm]
    \node[main node] (1) [label=above:$u_1$] {A};
    \node[main node] (2) [right = 0.5cm  of 1,label=above:$u_2$] {A};
    \node[main node] (3) [below = 0.5cm  of 1,label=below:$u_3$] {A};
    \node[main node] (4) [right = 0.5cm  of 3,label=below:$u_4$] {C};

    \path[draw,thin]
    (2) edge [left] node {a} (4)
    (3) edge [below] node {a} (4)
    ;
    \end{scope}

    \draw[->,dashed]  (6.0,-0.4) -- (6.8,-0.4);

    \begin{scope}[xshift=7.8cm,yshift=-0.73cm]
    \node[main node] (1) [label=below:$v_4$] {C};
    \node[main node] (2) [right = 0.5cm  of 1,label=below:$v_2$] {A};
    \node[main node] (3) [left = 0.5cm  of 1,label=below:$v_3$] {A};
    \node[main node] (4) [above = 0.5cm  of 1,label=above:$v_1$] {A};

    \path[draw,thin]
    (1) edge [left] node {a} (4)
    (1) edge [below] node {a} (2)
    (1) edge [below] node {a} (3)
    ;
    \end{scope}
\tkzText[below](0.5,-1.1){$G$}
\tkzText[below](2.9,-1.1){$G_1$}
\tkzText[below](5.3,-1.1){$Q_1$}
\tkzText[below](7.8,-1.1){$Q$}
}
\end{tikzpicture}
\caption{\small{An optimal edit path $P$ between graphs $G$ and $Q$.}}
\label{Fig:OptimalEditPath}
\end{figure}

\subsection{Graph Mapping}
\label{subsec:GraphMapping}

In this part, we introduce the graph mapping between two graphs,
which can induce an edit path between them. In order to match two unequal
size graphs $G$ and $Q$, we extend their vertex sets
as $V_{G}^{*}$ and $V_{Q}^{*}$ such that $V_{G}^{*} = V_G \cup \{u^n\}$
and $V_{Q}^{*} = V_Q \cup \{v^n\}$, respectively, where $u^n$ and $v^n$
are dummy vertices labeled with $\varepsilon$, s.t., $\varepsilon \notin \Sigma$.
Then, we define graph mapping as follows:

\begin{defn}[Graph Mapping]
\label{def:graphMapping}

A graph mapping from graph~$G$ to graph $Q$ is a bijection
$\psi: V_{G}^{*} \to V_{Q}^{*}$, such that
$\forall u \in V_{G}^{*}$, $\psi(u) \in V_{Q}^{*}$,
and at least one of $u$ and $\psi(u)$ is
not a dummy vertex.

\end{defn}

Given a graph mapping $\psi$ from $G$ to $Q$, it induces an
unlabeled graph $H = (V_H, E_H)$, where $V_H = \{u: u \in V_G \land \psi(u) \in V_Q\}$
and $E_H = \{e(u, v): e(u, v) \in E_G \land e(\psi(u),\psi(v)) \in E_Q\}$,
then $H \subseteq S(G)$ and $H \subseteq S(Q)$.
Let~$G^{\psi}$ (resp. $Q^{\psi}$) be the labeled version of $H$
embedded in $G$ (resp. $Q$). Accordingly, we obtain an edit
path $P_{\psi} : G \to G^{\psi} \to Q^{\psi} \to Q$.
Let $C_D(\psi)$, $C_S(\psi)$ and~$C_I(\psi)$ be the respective edit cost of
transforming $G$ to $G^{\psi}$, $G^{\psi}$ to~$Q^{\psi}$, and~$Q^{\psi}$ to~$Q$.
As $G^{\psi}$ is a subgraph of $G$, we only need to delete vertices and
edges that do not belong to $G^{\psi}$ when transforming $G$ to $G^{\psi}$.
Thus, $C_D(\psi) = |V_G|-|V_H| + |E_G|-|E_H|$.
Similarly, $C_I(\psi)= |V_Q|-|V_H| + |E_Q|-|E_H|$. Since $G^{\psi}$ and
$Q^{\psi}$ have the same structure $H$, we only need to substitute
the corresponding vertex and edge labels between $\id{G^{\psi}}$ and
$\id{Q^{\psi}}$, thus $C_S(\psi) = |\{u: u \in V_H \land L(u) \neq L(\psi(u)) \}|
+ |\{e(u, v):e(u, v) \in E_H \land L(e(u, v)) \neq L(e(\psi(u), \psi(v)))\}|$.

\begin{thm}[\cite{GoudaH2016}]
\label{thm:thm1}

Given a graph mapping $\psi$ from graph $G$ to graph $Q$.
Let $P_{\psi}$ be the edit path induced by~$\psi$, then
$|P_{\psi}| = C_D(\psi) + C_I(\psi) + C_S(\psi)$.

\end{thm}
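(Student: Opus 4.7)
The plan is to prove the equality by decomposing $P_\psi$ into its three natural stages and showing that each stage has exactly the claimed cost. Recall that the construction of $P_\psi$ already fixes the path as the concatenation $G \to G^\psi \to Q^\psi \to Q$, so I only need to argue that the three segments (i) use disjoint types of operations, (ii) together form a valid edit path, and (iii) individually have lengths $C_D(\psi)$, $C_S(\psi)$, and $C_I(\psi)$. Summing these three numbers will then yield $|P_\psi|$ directly, since edit path length is additive under concatenation.

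First I would unpack $H$, $G^\psi$ and $Q^\psi$ from Definition~\ref{def:graphMapping}. Since $H \subseteq S(G)$, the labeled graph $G^\psi$ is obtained from $G$ by removing exactly the $|V_G|-|V_H|$ vertices of $V_G \setminus V_H$ together with all edges of $E_G \setminus E_H$ (one needs to observe that the vertex deletions subsume the edges incident to removed vertices, while the remaining $|E_G|-|E_H|$ edges must still be deleted individually because both endpoints lie in $V_H$; a careful count confirms the total of $|V_G|-|V_H|+|E_G|-|E_H|$ operations matches $C_D(\psi)$). By a symmetric argument on $Q^\psi \to Q$, the insertion stage uses exactly $C_I(\psi) = |V_Q|-|V_H|+|E_Q|-|E_H|$ operations.

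For the middle segment $G^\psi \to Q^\psi$, I would use the fact that $\psi$ restricted to $V_H$ is a structure-preserving bijection, so the two graphs have identical underlying unlabeled structure $H$. Hence no insertions or deletions are needed, only label substitutions on those vertices of $V_H$ where $L(u) \neq L(\psi(u))$ and those edges of $E_H$ whose labels disagree under $\psi$; this yields precisely $C_S(\psi)$ operations.

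The main (and only mild) obstacle is the bookkeeping in the deletion/insertion stages: one has to be careful not to double-count edges that disappear ``for free'' when an incident vertex is removed. I would handle this by explicitly describing the order of operations, say, first delete all edges in $E_G \setminus E_H$ one by one, and then delete all isolated vertices in $V_G \setminus V_H$; after the edge deletions these vertices are indeed isolated because every edge of $G$ incident to a vertex outside $V_H$ lies in $E_G \setminus E_H$. This ordering justifies the additive count. Combining the three stage costs gives $|P_\psi| = C_D(\psi) + C_S(\psi) + C_I(\psi)$, completing the proof.
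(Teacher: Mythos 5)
Your proposal is correct and follows essentially the same route the paper takes: the paper does not give a separate proof of Theorem~\ref{thm:thm1} (it cites \cite{GoudaH2016}), but the derivation of $C_D(\psi)$, $C_I(\psi)$, and $C_S(\psi)$ as the exact costs of the three stages $G \to G^{\psi}$, $G^{\psi} \to Q^{\psi}$, and $Q^{\psi} \to Q$ in Section~\ref{subsec:GraphMapping} is precisely your stage-by-stage accounting. Your explicit ordering (delete the edges of $E_G \setminus E_H$ first so that the vertices of $V_G \setminus V_H$ become isolated before being deleted) is a worthwhile extra care, since the edit model only permits deleting isolated vertices.
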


\noindent \textbf{Example 2.} Consider graphs $G$ and $Q$
in Figure~\ref{Fig:OptimalEditPath}. Given a graph mapping
$\psi:\{u_1, u_2, u_3, u_4\} \to \{v_1, v_2, v_3, v_4\}$,
where $\psi(u_1) = v_1$, $\psi(u_2) = v_2$, $\psi(u_3) = v_3$,
and \mbox{$\psi(u_4) = v_4$}, we have $H = (\{u_1, u_2, u_3, u_4\},
\{e(u_2, u_4), e(u_3, u_4)\})$. Then~$\psi$ induces an edit path
$P_{\psi}: \id{G \to G^{\psi} \to Q^{\psi} \to Q}$ shown in Figure~\ref{Fig:OptimalEditPath},
where $G^{\psi} = G_1$ and $Q^{\psi} = Q_1$. By Theorem~\ref{thm:thm1},
we compute that $C_D(\psi) = 2$, $C_I(\psi) = 1$ and $C_S(\psi) = 1$,
thus $|P_{\psi}| = C_D(\psi) + C_I(\psi) + C_S(\psi) = 4$.

Hereafter, for ease of presentation, we assume that
$G$ and~$Q$ are the two comparing graphs, and
$V_G = \{u_1, \dots, u_{|V_G|}\}$ and $V_Q = \{v_1,\dots,
v_{|V_Q|}\}$. For a graph mapping $\psi$ from $G$ to~$Q$, we
call it is \emph{optimal} only when its induced edit path~$P_{\psi}$
is optimal. Next, we give an overview of the vertex-based mapping
method for computing $ged(G, Q)$ by enumerating all possible graph
mappings from $G$ to $Q$.

\subsection{GED computation: Vertex-based Mapping Approach}
\label{subsec:generalGED}

Assuming that vertices in $V_G^{*}$ are processed in the order
$(u_{i_1}, \dots, u_{i_{|V_G|}}, u^n, \dots, u^n)$, where
$i_1, \dots, i_{|V_G|}$ is a permutation of $1, \dots, |V_G|$
detailed in Section~\ref{subsec:orderingVertices}. Then, we denote
a graph mapping from $G$ to $Q$ as
$\psi = \bigcup_{l=1}^{|V_G^{*}|}\{(u_{i_l} \to v_{j_l})\}$ in the
following sections, such that (1) $u_{i_l} = u^n$ if $i_l > |V_G|$;
(2) $v_{j_l} = v^n$ if $j_l > |V_Q|$; and (3) $v_{j_l} = \psi(u_{i_l})$
for $1 \leq l \leq |V_G^{*}|$.

The GED computation is always achieved by means of an ordered
search tree, where inner nodes correspond to partial graph mappings
and leaf nodes correspond to complete graph mappings. Such a search
tree is created dynamically at runtime by iteratively generating
successors linked by edges to the currently considered node.
Let $\id{\psi_r = \{(u_{i_\idrm{1}} \to v_{j_\idrm{1}}), \dots},(u_{i_l} \to v_{j_l})\}$
be the (partial) mapping associated with a node $r$, where
$v_{j_k}$ is the mapped vertex of $u_{i_{k}}$ for $\id{\idrm{1} \leq k \leq l}$,
then Algorithm~\ref{alg:general} outlines the method of generating successors of~$r$.

Algorithm~\ref{alg:general} is easy to understand. First, we
compute the sets of unmapped vertices $C_G^{r}$ and $C_Q^{r}$
in $G$ and $Q$, respectively (line~2). Then, if $|C_{G}^{r}| > 0$,
for the vertex $u_{i_{l+1}}$ to be extended, we choose a vertex $z$
from $C_Q^{r}$ or $\{v^n\}$ as its mapped vertex, and finally
generate all possible successors of $r$ (lines 4--8); otherwise,
all vertices in $G$ were processed, then we insert all
vertices in $C_Q^{r}$ into $G$ and obtain a unique successor
leaf node (lines~10--11).

Staring from a dummy root node $root$ such that $\id{\psi_{root} = \emptyset}$,
we can create the search tree layer-by-layer by iteratively generating
successors. For a leaf node $r$, we compute the edit cost of its corresponding
edit path $P_{\psi_r}$ by Theorem~\ref{thm:thm1}. Thus, when we generate
all leaf nodes, we must find an optimal graph mapping and
then obtain $ged(G, Q)$.

\begin{algorithm}
\caption{{\ttfamily BasicGenSuccr}($r$)}
\label{alg:general}
\SetKwProg{myalg}{Algorithm}{}{}
\DontPrintSemicolon
\SetKwComment{Comment}{$\triangleright$\ }{}
$\psi_r \gets  \{(u_{i_1} \to v_{j_1}), \dots, (u_{i_l} \to v_{j_l})\}, \ succ \gets \emptyset;$\;
$C_{G}^{r} \gets V_G \backslash \{u_{i_1}, \dots, u_{i_l}\},
 C_{Q}^{r} \gets V_Q \backslash \{v_{j_1}, \dots, v_{j_l}\};$\;
\If{$|C_{G}^{r}| > 0 $}
{
    \ForEach{$z \in C_{Q}^{r}$}
    {
        generate successor $q$, s.t., $\psi_q \gets \psi_r \cup \{(u_{i_{l+1}} \to z)\};$\;
        $succ \gets succ \cup \{q\};$\;
    }
    generate successor $q$, s.t., $\psi_q \gets\; \psi_r \cup \{(u_{i_{l+1}} \to v^{n})\};$\;
    $succ \gets succ \cup \{q\};$\;
}
\Else
{
    generate successor $q$, s.t., $\psi_q \gets \psi_r \cup \bigcup_{z \in C_{Q}^{r}}\{(u^n \to z)\};$\;
    $succ \gets succ \cup \{q\};$\;
}
\Return $succ;$\;
\end{algorithm}

However, the above method {\ttfamily BasicGenSuccr} used in {A$^\star$-GED}~\cite{RiesenFB2007}
and {DF-GED}~\cite{AbuRRM2015} generates all possible successors.
As a result, both {A$^\star$-GED} and {DF-GED} enumerate all possible
graph mappings from $G$ to $Q$ and their search space size is $\mathcal{O}(|V_Q|^{|V_G|})$~\cite{GoudaH2016}. However, among these mappings,
some mappings certainly not be optimal, called~\emph{invalid mappings},
or they induce the same edit cost, called~\emph{redundant mappings}.
For invalid mappings, we do not have to generate them, and for redundant
mappings, we only need to generate one of them. Next, we present how to
create a small search space by reducing the number of invalid and
redundant mappings.

\section{Creating Small Search Space}
\label{sec:Reduce}

\subsection{Invalid Mapping Identification}
\label{sec:philen}
Let $|\psi|$ be the length of a graph mapping~$\psi$, i.e., $|V_G^{*}|$.
We give an estimation of~$|\psi|$ in Theorem~\ref{thm:thm2}, which
can be used to identify invalid mappings.

\begin{thm}
\label{thm:thm2}

Given an optimal graph mapping $\psi$ from graph~$G$ to graph $Q$,
then $|\psi| = \idrm{max}\{|V_G|, |V_Q|\}$.

\end{thm}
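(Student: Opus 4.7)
The plan is to reduce the theorem to a structural statement: in an optimal $\psi$ at most one of the two sides uses dummy vertices. Let $d_G$ count the pairs of the form $u^n\to v_j$ in $\psi$ and $d_Q$ the pairs $u_i\to v^n$. Because Definition~\ref{def:graphMapping} forbids any $u^n\to v^n$ pair, counting the mapping entries from either side yields the identity $|\psi|=|V_G|+d_G=|V_Q|+d_Q$, so $|\psi|\geq\max\{|V_G|,|V_Q|\}$ holds automatically. Proving the theorem is therefore equivalent to showing $\min\{d_G,d_Q\}=0$ whenever $\psi$ is optimal.

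First I would assume for contradiction that both $d_G\geq 1$ and $d_Q\geq 1$, pick witnesses $u^n\to v_j$ and $u_i\to v^n$ inside $\psi$, and define $\psi'$ by removing these two pairs and inserting the single pair $u_i\to v_j$. The construction preserves the no-dummy-to-dummy constraint, so $\psi'$ is a valid graph mapping with $|\psi'|=|\psi|-1$ and all other vertex images unchanged. The remaining task is to establish $|P_{\psi'}|<|P_\psi|$, which contradicts the optimality of $\psi$.

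The cost comparison uses Theorem~\ref{thm:thm1} together with a careful accounting of only those contributions that change. On vertices, $\psi$ pays $2$ (a deletion of $u_i$ plus an insertion of $v_j$) while $\psi'$ pays at most $1$ (the substitution $u_i\to v_j$, free when $L(u_i)=L(v_j)$), so the vertex cost strictly drops by at least one. On edges I would partition the edges of $E_G$ incident to $u_i$ and of $E_Q$ incident to $v_j$ by the type of image of the other endpoint: edges whose other endpoint is mapped to or from a dummy are deleted or inserted identically under both $\psi$ and $\psi'$, while for an edge whose other endpoint is part of a real-to-real pair $u_k\to v_{k'}$ one compares $(u_i,u_k)\in E_G$ with $(v_j,v_{k'})\in E_Q$ jointly. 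Under $\psi$ the joint contribution is one for each of these two edges that exists; under $\psi'$ the joint contribution is zero if neither exists, one if exactly one exists, and at most one if both exist (a possible label mismatch). A short case split yields that the edge cost under $\psi'$ is never larger than under $\psi$, so combined with the strict vertex saving we get $|P_{\psi'}|<|P_\psi|$.

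I expect the main obstacle to be the edge bookkeeping: making sure that every edge touched by the swap is charged exactly once under each of $\psi$ and $\psi'$, and cleanly separating the case where both $(u_i,u_k)\in E_G$ and $(v_j,v_{k'})\in E_Q$ (the only case giving a strict edge saving) from the asymmetric and empty cases. Once that analysis is in place the contradiction is immediate, and since $\min\{d_G,d_Q\}=0$ forces $|\psi|=\max\{|V_G|,|V_Q|\}$ via the counting identity, the theorem follows.
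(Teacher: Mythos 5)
Your proposal is correct and follows essentially the same route as the paper: assume both a $(u_i \to v^n)$ and a $(u^n \to v_j)$ pair coexist, merge them into $(u_i \to v_j)$, and show the induced edit cost strictly drops (the paper organizes the edge bookkeeping via the induced subgraphs $H$, $H'$ and the set $A_x$ of jointly-present neighbor edges, which is exactly your ``both edges exist'' case, and reaches the same bound $|P_{\psi'}| \leq |P_{\psi}| - (1+|A_x|)$). Your opening counting identity $|\psi| = |V_G| + d_G = |V_Q| + d_Q$ is a nice explicit justification of the step the paper states without proof, namely that $|\psi| > \max\{|V_G|,|V_Q|\}$ forces both kinds of dummy pair to appear.
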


\begin{proof}
Suppose for the purpose of contradiction that $|\psi| > \idrm{max}\{|V_G|, |V_Q|\}$.
Then $(x \to v^n)$ and $(u^n \to y)$ must be present simultaneously
in $\psi$, where $x \in V_G$ and $y \in V_Q$. We construct another
graph mapping $\psi'= (\psi \backslash \{(x \to v^n),
(u^n \to y)\}) \cup \{(x \to y)\}$, and then prove
$|P_{\psi'}| < |P_{\psi}|$ as follows:

Let $H$ and $H'$ be two unlabeled graphs induced by~$\psi$ and~$\psi'$,
respectively, then $V_{H'}=\{u: u \in V_G \land \psi'(u) \in V_Q\}
= \{u: u \in V_G \land \psi(u) \in V_Q\} \cup \{x: \psi'(x) \in V_Q\}
= V_H \cup \{x\}$. Let $A_x = \{z: z \in V_H \land e(x, z)\in E_G
\land e(y, \psi(z)) \in E_Q\}$,
then $E_{H'}=\{e(u,v):e(u,v) \in E_G \land e(\psi'(u), \psi'(v)) \in E_Q\}
=\{e(u, v): e(u,v) \in E_G \land e(\psi(u), \psi(v)) \in E_Q\} \cup
\{e(x, z): z \in V_{H'} \land e(x, z) \in E_G \land e(y, \psi(z)) \in E_Q\}
=E_{H} \cup \{e(x, z): z\in A_x\}$. As $x \notin V_H$,
$e(x, z) \notin E_H$ for $\forall z \in A_x$. Thus,
$|V_{H'}| = |V_H| + 1$ and $|E_{H'}| = |E_H| + |A_x|$.

As $C_D(\psi)= |V_G|-|V_H|+|E_G|-|E_H|$ and
$C_I(\psi) = |V_Q| - |V_H| + |E_Q| - |E_H|$, we have
$C_D(\psi') = C_D(\psi)-(1 + |A_x|)$ and $C_I(\psi') = C_I(\psi)-(1 + |A_x|)$.
Since $C_S(\psi)=|\{u:u \in V_H \land L(u) \neq L(\psi(u))\}| +
|\{e(u,v):e(u,v)\in E_H \land L(e(u,v)) \neq L(e(\psi(u),\psi(v)))\}|$,
we have $C_S(\psi') = C_S(\psi) + c(x \to y) + \sum_{z \in A_x}c(e(x, z) \to e(y, \psi(z)))$,
where $c(\cdot)$ gives the edit cost of relabeling a vertex or an edge,
such that $c(a \to b) = 0$ if $L(a) = L(b)$, and $c(a \to b) = 1$ otherwise,
and~$a$ (resp. $b$) is a vertex or an edge in $G$ (resp.~$Q$).
Thus, $C_S(\psi') \leq C_S(\psi) + 1 + |A_x|$. Therefore,
$|P_{\psi'}| = C_D(\psi')+C_I(\psi') + C_S(\psi') \leq
C_D(\psi)-(1 + |A_x|) + C_I(\psi)-(1 + |A_x|) + C_S(\psi)
+ 1 + |A_x| = |P_{\psi}| - (1+|A_x|) < |P_{\psi}|$.
This would be a contradiction that~$P_{\psi}$ is optimal.
Hence $|\psi| = \idrm{max}\{|V_G|, |V_Q|\}$.
\end{proof}

Theorem~\ref{thm:thm2} states that a graph mapping whose length
is greater than $|V|$ must be an invalid mapping, where
$|V| = \idrm{max}\{|V_G|, |V_Q|\}$. For example, considering
graphs $G$ and~$Q$ in Figure~\ref{Fig:OptimalEditPath},
and a graph mapping~$\psi = \{(u_1 \to v_1), (u_2 \to v^n),
(u_3 \to v_3), (u_4 \to v_4), (u^n \to v_2)\}$, we know
that~$\psi$ with an edit cost 7 must be invalid as
$|\psi| =5 > \idrm{max}\{|V_G|, |V_Q|\} = 4$.

\subsection{Redundant Mapping Identification}
\label{sec:redunids}

For a vertex $u$ in $V_Q$, its neighborhood information
is defined as $\id{N_Q(u)= \{(v, L(e(u, v))): v \in V_Q
\land e(u, v) \in E_Q\}}$.

\begin{defn}[Vertex Isomorphism]
\label{def:isoVertices}

Given two vertices \mbox{$u, v \in V_Q$}, $u$ is isomorphic to $v$,
denoted by $u \sim v$, if and only if $L(u) = L(v)$ and $N_Q(u) = N_Q(v)$.

\end{defn}

By Definition~\ref{def:isoVertices}, we know that
the isomorphic relationship between vertices is an equivalence
relation. Thus, we can divide~$V_Q$ into $\lambda_Q$ equivalent classes
$V_{Q}^1, \dots, V_Q^{\lambda_Q}$ of isomorphic vertices. Each
vertex~$u$ is said to belong to class $\pi(u)= i$ if $u \in V_Q^i$.
Dummy vertices in~$\{v^n\}$ are isomorphic to each other, and
let $\pi(v^n) = \lambda_Q + 1$.

\begin{defn}[Canonical Code]
\label{def:canonical}

Given a graph mapping $\psi = \bigcup_{l=1}^{|\psi|}\{(u_{i_l} \to v_{j_l})\}$,
where $v_{j_l} = \psi(u_{i_l})$ for $1 \leq l \leq |\psi|$.
The canonical code of $\psi$ is defined as $code(\psi) = \langle \pi(v_{j_1}),
\dots, \pi(v_{j_{|\psi|}}) \rangle$.

\end{defn}

Given two graph mappings $\psi$ and $\psi'$ such that $|\psi| = |\psi'|$,
we say that $code(\psi) = code(\psi')$ if and only if $\pi(v_{j_l}) = \pi(v_{j_l}')$,
where $v_{j_l} = \psi(u_{i_l})$ and $v_{j_l}'= \psi'(u_{i_l})$ for
$1 \leq l \leq |\psi|$.

\begin{thm}
\label{thm:thm3}

Given two graph mappings $\psi$ and $\psi'$. Let $P_{\psi}$ and
$P_{\psi'}$ be edit paths induced by~$\psi$ and $\psi'$, respectively.
If $code(\psi) = code(\psi')$, then we have $|P_{\psi}| = |P_{\psi'}|$.

\end{thm}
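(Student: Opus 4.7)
The plan is to apply Theorem~\ref{thm:thm1} and show that $C_D(\psi)=C_D(\psi')$, $C_I(\psi)=C_I(\psi')$, and $C_S(\psi)=C_S(\psi')$, so that $|P_\psi|=|P_{\psi'}|$ follows termwise. Writing $\psi=\bigcup_l\{(u_{i_l}\to v_{j_l})\}$ and $\psi'=\bigcup_l\{(u_{i_l}\to v_{j_l}')\}$, the hypothesis $code(\psi)=code(\psi')$ gives $\pi(v_{j_l})=\pi(v_{j_l}')$ for every $l$. Since the class $\lambda_Q+1$ contains exactly the dummy vertices, this forces that for each $l$ either both $v_{j_l}$ and $v_{j_l}'$ are dummies or both lie in $V_Q$ and satisfy $v_{j_l}\sim v_{j_l}'$ in the sense of Definition~\ref{def:isoVertices}.

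I would first pin down the induced vertex sets. Let $H$ and $H'$ be the unlabeled graphs induced by $\psi$ and $\psi'$. The condition $\psi(u_{i_l})\in V_Q$ is equivalent to $\pi(v_{j_l})\le\lambda_Q$, which by the observation above coincides with $\pi(v_{j_l}')\le\lambda_Q$, so $V_H=V_{H'}$ as subsets of $V_G$. Moreover, for each $u_{i_l}\in V_H$ we have $L(v_{j_l})=L(v_{j_l}')$, since isomorphic vertices share a label.

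The main obstacle is the analogous statement for edges: for $u_{i_l},u_{i_m}\in V_H$ with $e(u_{i_l},u_{i_m})\in E_G$, I claim that $e(v_{j_l},v_{j_m})\in E_Q$ with some label $\ell$ if and only if $e(v_{j_l}',v_{j_m}')\in E_Q$ with the same label $\ell$. My plan is to prove this by chasing the neighborhood equalities at \emph{both} endpoints. Assume $e(v_{j_l},v_{j_m})\in E_Q$ with label $\ell$, so $(v_{j_m},\ell)\in N_Q(v_{j_l})$; since $v_{j_l}\sim v_{j_l}'$ forces $N_Q(v_{j_l})=N_Q(v_{j_l}')$, this yields $(v_{j_m},\ell)\in N_Q(v_{j_l}')$, i.e., $e(v_{j_l}',v_{j_m})\in E_Q$ with label $\ell$. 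Reading that fact from $v_{j_m}$'s side gives $(v_{j_l}',\ell)\in N_Q(v_{j_m})=N_Q(v_{j_m}')$, hence $e(v_{j_l}',v_{j_m}')\in E_Q$ with label $\ell$. Symmetry gives the converse. Consequently $E_H=E_{H'}$ and the $Q$-side edge labels agree position-for-position.

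With these structural equalities in hand the theorem drops out mechanically: $C_D(\psi)=|V_G|-|V_H|+|E_G|-|E_H|$ and $C_I(\psi)=|V_Q|-|V_H|+|E_Q|-|E_H|$ depend only on $|V_H|$ and $|E_H|$, which coincide for $\psi$ and $\psi'$, while $C_S$ decomposes into indicator sums comparing $L(u)$ with $L(\psi(u))$ and $L(e(u,v))$ with $L(e(\psi(u),\psi(v)))$, each term of which is invariant under the substitution $\psi\mapsto\psi'$ by the label-preservation established above. Summing the three costs yields $|P_\psi|=|P_{\psi'}|$. The only genuinely delicate point is the double use of the neighborhood identity when both endpoints of an edge are replaced simultaneously; everything else is bookkeeping against Theorem~\ref{thm:thm1}.
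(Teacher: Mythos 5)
Your proposal is correct and follows essentially the same route as the paper's proof: decompose the cost via Theorem~\ref{thm:thm1}, show $V_H=V_{H'}$ from the fact that the dummy class is preserved, and establish $E_H=E_{H'}$ together with edge-label agreement by chasing the neighborhood identity $N_Q(\cdot)$ at both endpoints in two steps. The only (cosmetic) difference is that you carry the edge label $\ell$ through the neighborhood chase in one pass, whereas the paper first proves $E_H=E_{H'}$ and then repeats the same two-step chase separately for the labels in the $C_S$ comparison.
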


\begin{proof}
As discussed in Section~\ref{subsec:GraphMapping}, $|P_{\psi}|=
C_I(\psi)+ C_D(\psi)+ C_S(\psi)$. In order to prove $|P_{\psi}| = |P_{\psi'}|$,
we first prove $C_I(\psi) = C_I(\psi')$ and $C_D(\psi) = C_D(\psi')$,
then prove $C_S(\psi) = C_S(\psi')$.

Let $H$ and $H'$ be two unlabeled graphs induced by
$\psi$ and~$\psi'$, respectively. For a vertex $u$ in $V_H$,
$\psi(u)$ and $\psi'(u)$ are the mapped vertices of $u$,
respectively. Since $code(\psi) = code(\psi')$, we have
$\pi(\psi(u)) = \pi(\psi'(u))$ and hence obtain
$\psi(u) \sim \psi'(u)$. As $\psi(u) \neq v^{n}$, we have
$\psi'(u) \neq v^{n}$ by Definition~\ref{def:isoVertices}.
Thus $u \in V_{H'}$, and hence we obtain $V_H \subseteq V_{H'}$.
Similarly, we also obtain $V_{H'} \subseteq V_H$. So, $V_H = V_{H'}$.

For an edge $e(u, v)$ in $E_H$, $e(\psi(u), \psi(v))$ is its mapped
edge in $E_Q$. As $\pi(\psi(u)) = \pi(\psi'(u))$, we have
$\psi(u) \sim \psi'(u)$ and then obtain $N_Q(\psi(u)) = N_Q(\psi'(u))$.
Thus, we have $e(\psi'(u), \psi(v)) \in E_Q$.
Similarly, since $\pi(\psi(v)) = \pi(\psi'(v))$, we obtain
$\psi(v) \sim \psi'(v)$. Thus, there must exist edges between
$\psi(u)$ and $\psi'(v)$, $\psi'(u)$ and $\psi'(v)$ (an illustration
is shown in Fig.~\ref{Fig:IllustrateVertices}), thus $e(\psi'(u), \psi'(v)) \in E_Q$
and hence we obtain $e(u, v) \in E_{H'}$. Thus, $E_H \subseteq E_{H'}$.
Similarly, we also obtain $E_{H'} \subseteq E_H$.
So, $E_H = E_{H'}$.

Since $V_H = V_{H'}$ and $E_H = E_{H'}$,
we have $H = H'$. Thus, $C_I(\psi) = C_I(\psi')$ and
$C_D(\psi) = C_D(\psi')$. Next, we do not distinguish
$H$ and $H'$ anymore.

\begin{figure}[!htbp]
\centering
\begin{tikzpicture}[xshift=-1cm]
\tikzset{node/.style={circle, draw, minimum size=0.3cm, inner sep=0pt}, }
\scriptsize{
    \node[draw, circle] at (0, 0)        (1) [label=left:$u$]{};
    \node[draw, circle] at (0, -1.2)     (2) [label=left:$v$]{};
    \path
    (1) edge [] node {} (2)
    ;

    \node[draw, circle] at (2.5, 0)       (3) [label=left:$\psi(u)$]  {};
    \node[draw, circle] at (3.5, 0)       (4) [label=right:$\psi'(u)$] {};
    \node[draw, circle] at (1.84, -1.2)   (5) [label=left:$\psi(v)$]  {};
    \node[draw, circle] at (4.16, -1.2)   (6) [label=right:$\psi'(v)$] {};

    \path[draw,thin]
    (3) edge [thick] node {} (5)
    (3) edge []      node {} (6)
    (4) edge [thick] node {} (5)
    (4) edge [thick] node {} (6)
    ;
    \path[draw, dashdotted]
    (1) edge [->, thick, bend left=20]  node {} (3)
    (1) edge [->, thick, bend left=35]  node {} (4)
    (2) edge [->, thick, bend right=20] node {} (5)
    (2) edge [->, thick, bend right=30] node {} (6)
    ;
}
\end{tikzpicture}
\caption{\small{Illustration of isomorphic vertices.}}
\label{Fig:IllustrateVertices}
\end{figure}
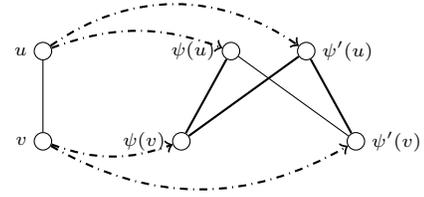

For any vertex $u$ in $V_H$, we have $L(\psi(u)) = L(\psi'(u))$
as $\psi(u) \sim \psi'(u)$. Thus, $|\{u: u \in V_H \land  L(u) \neq L(\psi(u))\}|
= |\{u: u \in V_H \land L(u) \neq L(\psi'(u))\}|$.
For any edge~$e(u, v)$ in $E_H$, since $\psi(u) \sim \psi'(u)$,
we know that $L(e(\psi(u), \psi(v))) = L(e(\psi'(u), \psi(v)))$.
Similarly, we obtain $L(e(\psi'(u), \psi(v))) = L(e(\psi'(u), \psi'(v)))$
as $\psi(v) \sim \psi'(v)$. Thus $L(e(\psi(u), \psi(v))) = L(e(\psi'(u), \psi'(v)))$.
Hence $|\{e(u, v): e(u, v) \in E_H \land L(e(u, v)) \neq L(e(\psi(u), \psi(v)))\}|
=|\{e(u, v): e(u,v) \in E_H \land L(e(u, v)) \neq L(e(\psi'(u), \psi'(v)))\}|$.
Therefore, $C_S(\psi) = C_S(\psi')$. So, we have $|P_\psi| = |P_{\psi'}|$.
\end{proof}

\noindent \textbf{Example 3.} Consider graphs~$G$ and~$Q$
in Figure~\ref{Fig:OptimalEditPath}. For $Q$, we know that
$\id{L(v_\idrm{1}) = L(v_\idrm{2}) = L(v_\idrm{3}) = \idrm{A}}$,
and $N_Q(v_1) = N_Q(v_2) = N_Q(v_3) =\{(v_4, \idrm{a})\}$,
thus $v_1 \sim v_2 \sim v_3$. So, we divide $V_Q$ into
equivalent classes $V_Q^1 = \{v_1, v_2, v_3\}$, $V_Q^2 = \{v_4\}$.
Given two graph mappings $\psi = \{(u_1 \to v_1), (u_2 \to v_2),$ $(u_3 \to v_3),
(u_4 \to v_4)\}$ and $\psi' = \{(u_1 \to v_2), (u_2 \to v_3),$ $(u_3 \to v_1),
(u_4 \to v_4)\}$, we have $code(\psi)= code(\psi') =
\langle 1, 1, 1, 2 \rangle$, and then obtain
$|P_{\psi}| = |P_{\psi'}| = 4$.

Theorem~\ref{thm:thm3} states that graph mappings with the same canonical
code induce the same edit cost. Thus, among these mappings, we only
need to generate one of them. Next, we apply Theorems~\ref{thm:thm2}, \ref{thm:thm3}
into the procedure~{\ttfamily GenSuccr} of generating successors,
which can prevent from generating the above invalid and redundant mappings.

\subsection{Generating Successors}
\label{subsec:generate}

Consider a node $r$ associated with a partial graph mapping
$\psi_r = \{{(u_{i_{1}} \to v_{j_{1}}), \dots, (u_{i_l} \to v_{j_l})}\}$ in the search tree.
Then, the sets of unmapped vertices in $G$ and $Q$ are $C_G^{r} = V_G \backslash \{u_{i_1}, \dots, u_{i_l}\}$ and $C_Q^{r} = V_Q \backslash \{v_{j_1}, \dots, v_{j_l}\}$, respectively.
For the vertex~$u_{i_{l+1}}$ to be extended, let $z \in C_Q^{r} \cup \{v^n\}$ be a possible
mapped vertex.

By Theorem~\ref{thm:thm2}, if $|V_G|\leq |V_Q|$,
we have $|\psi| = |V_Q|$, which means that none of the vertices in $V_G$
is allowed to be mapped to a dummy vertex, i.e., $(u \to v^n) \notin \psi$
for $\forall u \in V_G$.

\textbf{Rule 1}. If $|C_{G}^{r}| \leq |C_{Q}^{r}|$, then $z \in C_{Q}^{r}$;
otherwise $z = v^n$ or $z \in C_{Q}^{r}$.

Applying rule 1 into the process {\ttfamily GenSuccr} of generating
successors of each node, we know that if $|V_G| \leq |V_Q|$
then none of the vertices in $V_G$ will be mapped to a dummy vertex
otherwise only $|V_G|-|V_Q|$ vertices do. As a result, the
obtained graph mapping~$\psi$ must satisfy $|\psi| = \idrm{max}\{|V_G|, |V_Q|\}$.

\begin{defn}[Canonical Code Partial Order]
\label{def:CodeOrder}

Let $\psi$ and $\psi'$ be two graph mappings such that $\id{code(\psi) = code(\psi')}$.
We define that $\psi \preceq \psi'$ if $\exists l, 1 \leq l \leq |\psi|$,
s.t., $\id{\psi(u_{i_k}) = \psi'(u_{i_k})}$ for $1 \leq k < l$
and $\id{\psi(u_\id{i_{l}}) < \psi'(u_{i_{l}})}$.

\end{defn}

By Theorem~\ref{thm:thm3}, we know that graph mappings with the same
canonical code induce the same edit cost, thus among these mappings
we only need to generate the smallest according to the partial order
defined in Definition~\ref{def:CodeOrder}. For $u_{i_{l+1}}$, we only
map it to the smallest unmapped vertex in $V_Q^{m}$, for $\id{\idrm{1} \leq m \leq \lambda_Q}$.
This will guarantee that the obtained graph mapping is smallest among those
mappings with the same canonical code. Then we establish Rule~2 as follows:

\textbf{Rule 2}. $z \in \bigcup_{m=1}^{\lambda_Q}
\idrm{min} \{C_{Q}^{r} \cap V_{Q}^{m}\}$.

Based on the above Rule~1 and Rule~2, we give the method
of generating successors of $r$ in Algorithm~\ref{alg:replacement},
where lines 4--8 corresponds to Rule~2 and lines 9--11 corresponds
to Rule~1.

\begin{algorithm}
\caption{{\ttfamily GenSuccr}($r$)}
\label{alg:replacement}
\DontPrintSemicolon
\SetKwComment{Comment}{$\triangleright$\ }{}
$\psi_r \gets  \{(u_{i_1} \to v_{j_1}), \dots, (u_{i_l} \to v_{j_l})\}, \ succ \gets \emptyset;$\;
$C_{G}^{r} \gets V_G \backslash \{u_{i_1}, \dots, u_{i_l}\},
C_{Q}^{r} \gets V_Q \backslash \{v_{j_1}, \dots, v_{j_l}\};$\;
\If{$|C_{G}^{r}| > 0 $}
{
    \For{$m \gets \idrm{1 \ to} \ \lambda_{Q}$}
    {
        \If{$C_{Q}^{r} \cap V_{Q}^{m} \neq \emptyset$}
        {
            $z \gets \idrm{min}\{C_{Q}^{r} \cap V_{Q}^{m}\};$\;
            generate successor $q$, s.t., $\psi_q \gets \psi_r \cup \{(u_{i_{l+1}} \to z)\};$\;
            $succ \gets succ \cup \{q\};$\;
        }
    }
    \If{$|C_{G}^{r}| > |C_Q^{r}|$}
    {
        generate successor $q$, s.t., $\psi_q \gets \psi_r \cup \{(u_{i_{l+1}} \to v^{n})\};$\;
        $succ \gets succ \cup \{q\};$\;
    }
}
\Else
{
    generate successor $q$, s.t., $\psi_q \gets \psi_r \cup \bigcup_{z \in C_{Q}^{r}}\{(u^n \to z)\};$\;
    $succ \gets succ \cup \{q\};$\;
}
\Return $succ;$\;
\end{algorithm}

\noindent \textbf{Example 4}. Consider graphs $G$ and $Q$ in
Figure~\ref{Fig:OptimalEditPath}. Figure~\ref{Fig:SearchTree} shows
the entire search tree of $G$ and $Q$ created layer-by-layer
by using {\ttfamily GenSuccr}, where vertices in $G$ are processed in
the order $(u_1, u_2, u_3, u_4)$. In a layer, the values inside the
nodes are the possible mapped vertices (e.g., $v_1$ and~$v_4$ in layer one
are the possible mapped vertices of $u_1$). The sequence of vertices
on the path from root to each leaf node gives a complete graph mapping.
In this example, we totally generate~4 graph mappings, and then easily
compute that $ged(G, Q)= 4$.
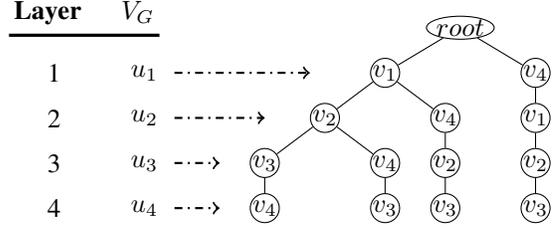
\begin{figure}[!htbp]
\centering
\begin{tikzpicture}[xshift=-1cm]
\tikzset{main node/.style={circle,draw,minimum size=0.3cm,inner sep=0pt},  }

    \draw[very thick](-1, 0.5) -- (1, 0.5);
    \node at (0, 0.8) {\textbf{Layer \ \ \ $V_G$}};

    \node at (-0.4, 0)              ()  {1};
    \node at (-0.4, -0.6)           ()  {2};
    \node at (-0.4, -1.2)           ()  {3};
    \node at (-0.4, -1.8)           ()  {4};

    \node at (0.8, 0)              (u1)  {$u_1$};
    \node at (0.8, -0.6)           (u2)  {$u_2$};
    \node at (0.8, -1.2)           (u3)  {$u_3$};
    \node at (0.8, -1.8)           (u4)  {$u_4$};

    \node[main node, shape=ellipse]  at (5, 0.6)  (root)   {$root$};
    \node[main node]   at (4, 0)       (v11)    {$v_1$};
    \node[main node]   at (6, 0)       (v41)    {$v_4$};
    \node[main node]   at (3.2, -0.6)  (v22)    {$v_2$};
    \node[main node]   at (4.8, -0.6)  (v42)    {$v_4$};
    \node[main node]   at (6, -0.6)    (v12)    {$v_1$};
    \node[main node]   at (2.4, -1.2)  (v33)    {$v_3$};
    \node[main node]   at (4, -1.2)    (v43)    {$v_4$};
    \node[main node]   at (4.8, -1.2)  (v23)    {$v_2$};
    \node[main node]   at (6, -1.2)    (v231)   {$v_2$};
    \node[main node]   at (2.4, -1.8)  (v44)    {$v_4$};
    \node[main node]   at (4, -1.8)    (v341)   {$v_3$};
    \node[main node]   at (4.8, -1.8)  (v342)   {$v_3$};
    \node[main node]   at (6, -1.8)    (v343)    {$v_3$};

    \path[draw,thin]
    (root) edge [] node {} (v11)
    (root) edge [] node {} (v41)
    (v11)  edge [] node {} (v22)
    (v11)  edge [] node {} (v42)
    (v41)  edge [] node {} (v12)
    (v12)  edge [] node {} (v231)
    (v22)  edge [] node {} (v33)
    (v22)  edge [] node {} (v43)
    (v42)  edge [] node {} (v23)
    (v33)  edge [] node {} (v44)
    (v43)  edge [] node {} (v341)
    (v23)  edge [] node {} (v342)
    (v231) edge [] node {} (v343)
    ;
    \draw[->, thick, dashdotted]  (1.2, 0) --    (3.0, 0);
    \draw[->, thick, dashdotted]  (1.2, -0.6) -- (2.4, -0.6);
    \draw[->, thick, dashdotted]  (1.2, -1.2) -- (1.8, -1.2);
    \draw[->, thick, dashdotted]  (1.2, -1.8) -- (1.8, -1.8);

\end{tikzpicture}
\caption{\small{Search tree created by {\ttfamily GenSuccr}.}}
\label{Fig:SearchTree}
\end{figure}

\subsection{Search Space Analysis}
\label{subsec:AnalysisSS}

Replacing {\ttfamily BasicGenSuccr} (see Alg.~\ref{alg:general}) with
{\ttfamily GenSuccr} to generate successors, we reduce
a large number of invalid and redundant mappings and then create a small
search tree. Next, we analyze the size of search tree, i.e.,
the total number of nodes in the search tree.

Nodes in the search tree are grouped into different layers
based on their distances from the root node. Hence, the
search tree is divided into layers, one for each depth.
When all vertices in $V_G$ were processed, for any node
in the layer $|V_G|$ (starting from 0), it generates
a unique successor leaf node, thus we regard this layer as the
last layer. Namely, we only need to generate the first $|V_G|$
layers. For the layer $l$, let $N_{l}$ be the total number of
nodes in this layer. So, the total number of nodes in the search
tree can be computed as $\mathcal{S_R} = \sum_{l=0}^{|V_G|}N_{l}$.

For the layer $l$, the set of vertices in $G$ that have been
processed is $B_{G}^{l} = \{u_{i_1}, \dots, u_{i_l}\}$, correspondingly,
we must choose $l$ vertices from $V_Q \cup \{v^n\}$ as their mapped
vertices. Let $B_{Q}^{l} = \{v_{j_1}, \dots, v_{j_l}\}$ be the $l$ selected vertices,
then we use a vector $\textbf{x} = [x_1, \dots, x_{\lambda_Q +1}]$ to represent it,
where $x_m$ is the number of vertices in $B_{Q}^{l}$ that belong to $V_Q^{m}$,
i.e., $x_m = |B_{Q}^{l} \cap V_Q^{m}|$, for $ 1 \leq m \leq \lambda_Q$,
and $x_{\lambda_Q + 1}$ is the number of dummy vertices in $B_{Q}^{l}$.
Thus, we have
\begin{align}
\label{eq:eq1}
\sum\limits_{m=1}^{\lambda_Q + 1}x_m = l.
\end{align}
where $0 \leq x_m \leq |V_Q^{m}|$ for $1 \leq m \leq \lambda_Q$,
and $x_{\lambda_Q + 1} \geq 0$.

For a solution \textbf{x} of equation~(\ref{eq:eq1}), it corresponds to
a unique~$B_Q^{l}$. The reason is as follows: In Rule~2,
each time we only select the smallest unmapped vertex in $V_Q^{m}$
as the mapped vertex, for $1 \leq m \leq \lambda_{Q}$.
Thus, for $x_m$ in \textbf{x}, it means that~$B_Q^{l}$ contains
the first $x_m$ smallest vertices in $V_Q^{m}$.
For example, let us consider the search tree in Figure~\ref{Fig:SearchTree}.
Let $l = 3$ and $\textbf{x} = [2, 1, 0]$, then
$B_Q^{l}$ contains the first 2 smallest vertices in $V_Q^{1}$, i.e.,
$v_1$ and $v_2$, and the smallest vertex in $V_Q^{2}$, i.e., $v_4$,
then we have $B_Q^{l} = \{v_1, v_2, v_4\}$.

Let $\Psi_{l}$ be the set of solutions of equation~(\ref{eq:eq1}), then
it covers all possible $B_Q^{l}$. For a solution \textbf{x}, it totally
produces $\frac{l!}{\prod_{m=1}^{\lambda_Q + 1}x_m!}$ different (partial)
canonical codes. For example, for $\textbf{x} = [2, 1, 0]$, it produces
3 partial canonical codes, i.e., $\langle 1, 1, 2 \rangle$,
$\langle 1, 2, 1\rangle$ and $\langle 2, 1, 1\rangle$.
As we know, each (partial) canonical code corresponds to
a (partial) mapping from $B_G^{l}$ to $B_Q^{l}$, which is
associated with a node in the layer $l$, thus
\begin{align}
\label{eq:eq2}
N_{l} = \sum\limits_{\textbf{x} \in \Psi_{l}} \frac{l!}{\prod_{m=1}^{\lambda_Q + 1}x_m!}.
\end{align}

In Rule~1, only when the number of unmapped vertices in~$G$ is
greater than that in~$Q$, we select a dummy vertex $v^n$ as the
mapped vertex. As a result, if $|V_G| \leq |V_Q|$ then the number of
dummy vertices in $B_{Q}^{|V_G|}$ is 0 otherwise is $|V_G| - |V_Q|$.
Let $l = |V_G|$ and we then discuss the following two cases:

Case 1. When $|V_G| > |V_Q|$. For any \textbf{x}, we have $x_{\lambda_{Q} + 1} = |V_G| - |V_Q|$.
Then equation~(\ref{eq:eq1}) is reduced to $\sum_{m=\idrm{1}}^{\lambda_Q} x_m = |V_Q|$.
Since $\sum_{m=1}^{\lambda_Q}|V_Q^{m}| = |V_Q|$ and $0 \leq x_m \leq |V_Q^{m}|$,
for $\id{\idrm{1} \leq m \leq \lambda_Q}$, then equation~(\ref{eq:eq1}) has a unique solution
$\textbf{x} = [|V_Q^{1}|, \dots, |V_Q^{\lambda_Q}|, |V_G| - |V_Q|]$,
Thus, $N_{|V_G|} = \frac{|V_G|!}{(|V_G| - |V_Q|)!\prod_{m=\idrm{1}}^{\lambda_Q}|V_Q^{m}|!}$.
As $N_0 = 1$ and $N_{1} \leq \dots \leq N_{|V_G|}$, we obtain
$\mathcal{S_R} \leq |V_G|\frac{|V_G|!}{(|V_G| - |V_Q|)!\prod_{m=1}^{\lambda_Q}|V_Q^{m}|!} + 1$.

Case 2. When $|V_G| \leq |V_Q|$. For any \textbf{x}, we have
$x_{\lambda_{Q}+1} = 0$. Then equation~(\ref{eq:eq1}) is
reduced to $\sum_{m=1}^{\lambda_Q}x_m = |V_G|$.
As $|V_G| \leq |V_Q|$, we have
$N_{|V_G|} = \sum_{\substack{\textbf{x} \in \Psi_{|V_G|}\\ x_{\lambda_{Q} + 1} = 0}}
\frac{|V_G|!}{\prod_{m=1}^{\lambda_Q}x_m!} \leq \sum_{\substack{\textbf{x} \in \Psi_{|V_Q|}\\ x_{\lambda_{Q} + 1} = 0}}\frac{|V_Q|!}{\prod_{m=1}^{\lambda_Q}x_m!}
= \frac{|V_Q|!}{\prod_{m=1}^{\lambda_Q}|V_Q^{m}|!}$.
Since $N_0 = 1$ and $N_{1} \leq \dots \leq N_{|V_G|}$, we have
$\mathcal{S_R} \leq |V_G|\frac{|V_Q|!}{\prod_{m=1}^{\lambda_Q}|V_Q^{m}|!}+1$.
However, this will overestimate $\mathcal{S}_{R}$ when $|V_G| \ll |V_Q|$.

For the layer $l$, if we do not consider the isomorphic vertices
in~$B_Q^{l}$, then there are ${l!}$ mappings from $B_{G}^{l}$ to $B_Q^{l}$.
Since there are at most $\binom{|V_Q|}{l}$ possible~$B_Q^{l}$,
we have $N_{l} \leq \binom{|V_Q|}{l} \cdot{l!}= \frac{|V_Q|!}{(|V_Q|-l)!}$.
So, $\mathcal{S}_R = \sum_{l=0}^{|V_G|}N_{l} \leq
\sum_{l=\idrm{1}}^{|V_G|}\frac{|V_Q|!}{(|V_Q|-l)!} + 1 \\
                 = \frac{|V_Q|!}{(|V_Q|-|V_G|)!} \sum_{l=1}^{|V_G|}
                    \frac{1}{\prod_{m=1}^{|V_G|-l}(|V_Q|-|V_G|+m)} + 1\\
               \leq \frac{|V_Q|!}{(|V_Q|-|V_G|)!}\sum_{l=1}^{|V_G|}\frac{1}{2^{|V_G|-l}} + 1
                    \leq 2\frac{|V_Q|!}{(|V_Q|-|V_G|)!} + 1.$

In summary, if $|V_G|> |V_Q|$, $\mathcal{S}_R = \mathcal{O}(\frac{|V_G||V_G|!}{(|V_G| - |V_Q|)!
\prod_{m=1}^{\lambda_Q}|V_Q^{m}|!})$; otherwise, $\mathcal{S}_R = \mathcal{O}(\idrm{min}\{\frac{|V_G||V_Q|!}{\prod_{m=1}^{\lambda_Q}|V_Q^{m}|!},
\frac{|V_Q|!}{(|V_Q| - |V_G|)!}\})$.

\section{GED Computation using Beam-stack Search}
\label{sec:BSSGED}

The previous section shows that we create a small search space.
However, we still need an efficient search paradigm to traverse
the search space to seek for an optimal graph mapping to compute GED.
In this section, based on the efficient search paradigm, \emph{beam-stack search}~\cite{ZhouH2005},
we give our approach for the GED computation.

\subsection{Data Structures}

For a node $r$ in the search tree, $\id{f(r)=g(r)+h(r)}$ is the
total edit cost assigned to $r$, where $g(r)$ is the edit cost of
the partial path accumulated so far, and $h(r)$ is the estimated
edit cost from $r$ to a leaf node, which is less than or equal to
the real cost. Before formally presenting the algorithm, we first
introduce the data structures used as follows:
\begin{itemize}

\item a beam stack ${bs}$, which is a generalized stack.
Each item in $bs$ is a half-open interval $[f_\id{min}, f_\id{max})$,
and we use~$bs[l]$ to denote the interval of layer $l$.
For a node~$r$ in layer~$l$, its successor $n$ in next
layer $l+1$ is allowed to be expanded only when $f(n)$ is
in the interval $bs[l]$, i.e., $bs[l].f_\id{min} \leq f(n) < bs[l].f_\id{max}$.

\item priority queues $open[0], \dots, open[|V_G|]$,
where $open[l]$ ($0 \leq l \leq |V_G|$) is used to store
the expanded nodes in layer $l$.

\item a table $new$, where $new[\mathcal{H}(r)]$ stores all
successors of~$r$ and $\mathcal{H}(r)$ is a hash function
which assigns a unique ID to~$r$.
\end{itemize}

\subsection{Algorithm}

Algorithm~\ref{alg:BSSGED} performs an iterative search
to obtain a more and more tight upper bound~$ub$ of GED
until $ub = ged(G, Q)$. In an iteration, we perform the
following two steps: (1) we utilize beam search~\cite{RussellN2003}
to quickly reach to a leaf node whose edit cost is an upper bound of GED,
then we update $ub$ (line~4). As beam search expands at most~$w$ nodes
in each layer, some nodes are \emph{inadmissible pruned} when the number
of nodes in a layer is greater than~$w$, where $w$ is the beam width;
Thus,~(2) we backtrack and pop items from $bs$ until
a layer~$l$ such that $bs$.top().$\id{f_\id{max} < ub}$
(lines~5--6), and then shift the range of $bs$.top() (line~9)
to re-expand those inadmissible pruned nodes
in next iteration to search for tighter~$ub$. If $\id{l=-\idrm{1}}$,
it means that we finish a complete search and then
obtain $ub = ged(G,Q)$ (lines~7--8).

In procedure {\ttfamily Search}, we perform a beam search starting from layer
$l$ to re-expand those inadmissible pruned nodes to search for tighter~$ub$,
where $PQL$ and $PQLL$ are two temporary priority queues used to record
expanded nodes in two adjacent layers. Each time we pop a node $r$
with the smallest cost to expand (line~4). If~$r$ is a leaf node,
then we update $ub$ and stop the search as $g(z) \geq g(r)$
holds for $\forall z \in PQL$ (line~7); otherwise, we call
{\ttfamily ExpandNode} to generate all successors of~$r$ that
are allowed to be expanded in next layer and then insert them
into $PQLL$ (lines~\mbox{8--9}). As at most $w$ successors
are allowed to be expanded, we only keep the best $w$ nodes
(i.e., the smallest cost) in $PQLL$, and the nodes left are
inadmissible pruned (lines~11--13). Correspondingly, line~12,
we modify the right boundary of $bs$.top() as the lowest cost
among all inadmissible pruned nodes to ensure that the cost of
the $w$ successors currently expanded is in this interval.

In procedure {\ttfamily ExpandNode}, we generate all successors
of~$r$ that are allowed to be expanded. Note that, all nodes first
generated are marked as false. If~$r$ has not been visited,
i.e., $\id{r.visited = \idrm{false}}$, then we call {\ttfamily GenSuccr}
(i.e., Alg.~\ref{alg:replacement}) to generate all successors of $r$ and
mark $r$ as visited (\mbox{lines 3--4}); otherwise, we directly read
all successors of~$r$ from~$new$ (line~6). For a successor~$n$ of $r$,
if $f(n) \geq ub $ or $n.visited = {\idrm{true}}$, then we safely
prune it, see Lemma~\ref{lem:lem1}. Meanwhile, we delete all successors
of~$n$ from~$new$ (line 9); otherwise, if $bs$.top().$\id{f_\id{min} \leq f(n) < bs}$.top().$\id{f_\id{max}}$, we expand~$n$. If all successors of~$r$
are safely pruned, we safely prune~$r$, and delete~$r$ from
$open[l]$ and its successors from $new$, respectively (line~13).

\begin{lem}
\label{lem:lem1}
In {\ttfamily ExpandNode}, if $f(n) \geq ub $
or $n.visited = {true}$, i.e., line~8,
we safely prune $n$.
\end{lem}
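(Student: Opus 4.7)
The plan is to split the argument along the disjunctive hypothesis and treat the two pruning criteria separately.

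For the first case, $f(n)\geq ub$, I would rely directly on the admissibility of the heuristic $h$, which the paper states never overestimates the true remaining edit cost from $n$ to a leaf. Consequently, for every leaf descendant $d$ reachable from $n$ in any extension of the current partial mapping, the induced edit path satisfies $|P_{\psi_d}|\geq g(n)+h(n)=f(n)\geq ub$. Since a complete mapping of cost $ub$ has already been recorded, no leaf reachable through $n$ can strictly improve the current upper bound, so discarding $n$ cannot eliminate an optimal solution; this side of the lemma reduces to a short admissibility argument.

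For the second case, $n.visited=\mathrm{true}$, I would argue from the invariants maintained by the beam-stack bookkeeping. By construction of \texttt{ExpandNode}, a node acquires $visited=\mathrm{true}$ precisely at the moment it is itself selected as the expanding node, at which point its full successor set is generated and stored in $new[\mathcal{H}(n)]$, and each successor has been dispatched by beam search as expanded, inadmissibly pruned (with its layer's interval recorded in $bs$), or discarded by the bound. I would then formulate the loop invariant that, at the top of every iteration of Algorithm~\ref{alg:BSSGED}, for every node $m$ with $m.visited=\mathrm{true}$, every leaf descendant of $m$ whose induced cost is strictly less than the current $ub$ has already been reached in some previous iteration and folded into $ub$. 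Granted this invariant, the lemma is immediate: if some descendant of $n$ could still strictly beat $ub$, its existence would contradict the invariant applied to $n$.

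The main obstacle will be establishing that loop invariant rigorously. I expect to proceed by induction on the iteration count, combined with a careful case analysis of how \texttt{Search} updates $new$ and $open$, how inadmissibly pruned successors of a visited node are stashed through modifications of $bs$.top(), and, crucially, how the outer loop's popping rule (pop until $bs$.top().$f_{\max}<ub$) guarantees that only hopeless pruned branches, namely those whose $f$-value is at least the updated $ub$, are ever abandoned when a beam-stack entry is discarded. Once those bookkeeping details are pinned down, the pruning of $n$ in \texttt{ExpandNode} follows at once, since any descendant of $n$ that could lower $ub$ has either already contributed to $ub$ via a prior traversal, or is still pending in a beam-stack record at a layer strictly below $n$'s and will therefore be revisited through a separate backtracking path rather than through the current re-expansion of $r$.
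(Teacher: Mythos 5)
Your overall route is the same as the paper's: the $f(n)\geq ub$ case is dispatched by admissibility of $h$ (the paper just calls it trivial), and the $n.visited=\mathrm{true}$ case rests on the observation that the outer loop only pops a beam-stack entry when its $f_{\max}$ is at least the current $ub$, so every node inadmissibly pruned under a popped entry has $f\geq f_{\max}\geq ub$, and $ub$ only decreases thereafter. The paper does not package this as a loop invariant; it argues directly by descending from $n$: since the current iteration restarted at a layer $k\leq l$, the entries $bs[l+1],\dots,bs[|V_G|]$ from the iteration in which $n$ was expanded have all been popped with $f_{\max}\geq ub$, so each descendant of $n$ was either inadmissibly pruned with $f\geq ub$ or carried down to a leaf $z$ with $g(z)\geq ub$ because $ub$ was updated with $g(z)$. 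Your invariant, proved by the induction you sketch, yields the same conclusion, and the mechanism you single out (the popping rule) is exactly the right one.

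One step in your closing sentence is wrong, however, and would derail the induction if carried through as written: you allow for a descendant of $n$ that ``is still pending in a beam-stack record at a layer strictly below $n$'s and will therefore be revisited through a separate backtracking path rather than through the current re-expansion of $r$.'' No such descendant can exist at the moment line~8 fires, and if one did, pruning $n$ would be \emph{unsafe}, not safe: the search tree is a tree, so every descendant of $n$ is reachable only through $n$, and once $n$ is discarded and $new[\mathcal{H}(n)]$ is cleared, nothing below $n$ can ever be re-expanded by any backtracking path. The correct statement is the stronger one your invariant already encodes: because all beam-stack entries for layers deeper than $n$'s were popped with $f_{\max}\geq ub$ before the current iteration began, nothing is pending below $n$ at all --- every inadmissibly pruned descendant satisfies $f\geq ub$ and every fully explored leaf descendant has already been folded into $ub$. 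Replace the ``will be revisited separately'' disjunct with ``cannot have $f<ub$'' and the argument closes.
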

\begin{proof}
For the case $f(n) \geq ub$, it is trivial. Next
we prove it in the other case.

Consider $bs$ in the last iteration. Assuming that in this iteration
we perform {\ttfamily Search} starting from layer $k$ (i.e., backtracking to
layer $k$ in the last iteration, see lines 5--6 in Alg.~\ref{alg:BSSGED}),
and node $r$ and its successors $n$ are in layers~$l$ and~$l+1$,
respectively, then $k \leq l$ and $\id{bs[m].f_{max} \geq ub}$ for
$\id{l + \idrm{1} \leq m \leq |V_G|}$.
If $n.visited = \idrm{true}$, then we must have called {\ttfamily ExpandNode}
to generate successors of $n$ in the last iteration.
For a successor $x$ of $n$ in layer $l+2$,
if $x$ is inadmissible pruned, then $f(x) \geq bs[l+1].f_{max} \geq ub$,
thus we safely prune $x$; otherwise, we consider a
successor of $x$ and repeat this decision process until
a leaf node $z$. Then, it must satisfy that $f(z) = g(z) \geq ub$.
Thus, none of descendants of~$n$ can produce tighter $ub$. So,
we safely prune it.
\end{proof}

\begin{algorithm}
\renewcommand\baselinestretch{1.0}\selectfont
\caption{{BSS\_GED($G, Q, w$)}}
\label{alg:BSSGED}
\SetKwFunction{search}{Search}
\SetKwFunction{expand}{ExpandNode}
\DontPrintSemicolon
\SetKwComment{Comment}{$\triangleright$\ }{}
$\psi_\id{root} \gets \emptyset$, $bs \gets \emptyset, open[] \gets \emptyset, new[]\gets \emptyset$,
$l\gets 0, ub \gets \infty;$\;
$bs.\idrm{push}([0, ub)), open[0].\idrm{push}(\id{root});$\;
\While{$bs \neq \emptyset$}
{
    {\ttfamily Search}$(l, ub, bs, open, new);$\;
    \While{$bs.\idrm{top}().f_\id{max} \geq ub$}
    {
        $bs.\idrm{pop()}, l \gets l-1;$
    }
    \If{$l = -1$}
    {
       \Return $ub$;
    }
    $bs.\idrm{top()}.f_\id{min} \gets bs.\idrm{top}().f_\id{max}, bs.\idrm{top()}.f_\id{max} \gets ub;$ \;
}
\Return $ub$;\;

\setcounter{AlgoLine}{0}
\nonl\SetKwProg{myproc}{procedure}{}{}
\myproc{\search{$l, ub, bs, open, new$}}
{
    $PQL \gets open[l]$, $PQLL \gets \emptyset;$\;
    \While{$PQL \neq \emptyset \ \idrm{or} \ PQLL \neq \emptyset$}
    {
        \While{$PQL \neq \emptyset$}
        {
            $r \gets \idrm{arg \ min}_{n}\{f(n): n \in PQL\};$\;
            $PQL \gets PQL \backslash \{r\};$\;
            \If{$\psi_{r}$ is a complete graph mapping}
            {
                $ub \gets \idrm{min}\{ub, g(r)\}$, \Return;\;
            }
            $\id{succ} \gets ${\ttfamily ExpandNode}$(r, l, ub, open, new);$\;
            $PQLL \gets PQLL \cup succ;$\;
        }
        \If{$|PQLL| > w$}
        {
            $\id{keepNodes} \gets $ the best $w$ nodes in $PQLL;$\;
            $bs.\idrm{top()}.f_{\id{max}} \gets \idrm{min}\{f(n):n\in PQLL
                \land n \notin \id{keepNodes}\};$\;
            $PQLL \gets \id{keepNodes};$\;
        }
        $open[l+1] \gets PQLL, PQL \gets PQLL;$\;
        $PQLL \gets \emptyset, l \gets l+1, bs.\idrm{push}([0, ub));$\;
    }
}
\setcounter{AlgoLine}{0}
\nonl\SetKwProg{myproc}{procedure}{}{}
\myproc{\expand{$r, l, ub, open, new$}}
{
    $expand \gets \emptyset;$\;
    \If{$r.visited = \idrm{false}$}
    {
        $succ \gets ${\ttfamily GenSuccr}$(r);$\;
        $new[\mathcal{H}(r)] \gets succ, r.visited \gets \idrm{true};$\;
    }
    \Else
    {
        $succ \gets new[\mathcal{H}(r)];$\;
    }
    \ForEach{$n \in succ$}
    {
        \If{$f(n) \geq ub \ \idrm{or} \ n.visited = \idrm{true} $}
        {
            $new[\mathcal{H}(n)] \gets \emptyset;$\;
        }
        \ElseIf{$bs.\idrm{top()}.f_{\id{min}} \leq f(n) < bs.\idrm{top()}.f_{\id{max}}$}
        {
            $expand \gets expand \cup \{n\};$\;
        }
    }
    \If {$\forall n \in succ, f(n) \geq ub \ \idrm{or} \ n.visited = \idrm{true}$}
    {
        $open[l] \gets open[l] \backslash \{r\}, new[\mathcal{H}(r)] \gets \emptyset;$\;
    }

    \Return $expand;$ \;
}
\end{algorithm}

\begin{lem}
\label{lem:lem2}
A node $r$ is visited at most $\mathcal{O}(|V_Q|)$ times.
\end{lem}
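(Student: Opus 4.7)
My approach is to bound the visits of $r$ by the number of successors it can generate via Algorithm~\ref{alg:replacement}. By inspecting {\ttfamily GenSuccr}, the successor set of $r$ has size at most $\lambda_Q + 1 \leq |V_Q| + 1 = \mathcal{O}(|V_Q|)$: line~7 contributes at most one successor per equivalence class of $V_Q$ (so at most $\lambda_Q$ in total) and line~10 contributes at most one dummy-mapped successor. The rest of the argument charges each visit of $r$ to a distinct successor or to a one-shot termination event.

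I would then analyse how a successor $n$ of $r$ evolves across visits. On the first visit of $r$, {\ttfamily ExpandNode} invokes {\ttfamily GenSuccr} once, caches the whole successor list in $new[\mathcal{H}(r)]$, and sets $r.visited \gets \idrm{true}$; every subsequent visit simply re-scans this cached list. The key monotonic property is that once $n$ is expanded, the algorithm eventually sets $n.visited \gets \idrm{true}$, so on every later visit of $r$ the guard on line~8 of {\ttfamily ExpandNode} prunes $n$ and it is never expanded again. Hence each of the $\mathcal{O}(|V_Q|)$ successors can be newly expanded on at most one visit of $r$. Furthermore, the first visit at which every surviving successor satisfies $f(n) \geq ub$ or $n.visited = \idrm{true}$ triggers line~13 and removes $r$ from $open[l]$, after which $r$ cannot be visited again.

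I expect the main obstacle to be the intermediate case in which a visit neither newly expands any successor nor removes $r$ from $open[l]$, because some successors remain deferred to a future interval with $f(n) \in [bs.\idrm{top}().f_{\idrm{max}}, ub)$. To close this gap I would invoke the beam-stack invariant that the sequence of intervals $bs[l]$ appearing at layer $l$ across outer iterations of Algorithm~\ref{alg:BSSGED} are pairwise disjoint with strictly increasing lower endpoints: the shift operations on line~9 of {\ttfamily BSS\_GED} and line~12 of {\ttfamily Search} only enlarge this partition. Because each successor $n$ of $r$ has its $f$-value in at most one such interval, the number of intervals at layer $l$ that can actually trigger a re-entry of $r$ into {\ttfamily ExpandNode} is bounded by the number of distinct successor $f$-values, which is itself $\mathcal{O}(|V_Q|)$. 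Combining the three counts — at most one newly expanded successor per productive visit, one terminal removal visit, and $\mathcal{O}(|V_Q|)$ intervals that meet $r$'s successor $f$-values — yields the claimed $\mathcal{O}(|V_Q|)$ bound.
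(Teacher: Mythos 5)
Your opening count is fine (at most $\lambda_Q+1\leq|V_Q|+1$ successors per node, one terminal visit that fires lines~12--13 of {\ttfamily ExpandNode}), but the step you yourself flag as the main obstacle is where the argument breaks, and the patch you propose does not work. First, a node $r$ in layer $l$ is re-entered on \emph{every} backtrack to layer $l$ while it remains in $open[l]$: {\ttfamily Search} pops every element of $PQL=open[l]$ and calls {\ttfamily ExpandNode} on it unconditionally, whether or not the current interval $bs[l]$ contains any of $r$'s successor $f$-values. Those interval boundaries are set in line~12 of {\ttfamily Search} as $\min\{f(n): n\in PQLL \land n\notin keepNodes\}$ over the successors of \emph{all} nodes in the layer, so an interval can easily miss every successor of $r$ while still triggering a visit of $r$ (an ``unproductive'' visit in your terminology). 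Hence you cannot charge such visits to $r$'s own distinct successor $f$-values; the claim that only intervals meeting $r$'s successor $f$-values can trigger a re-entry is false. Second, your monotonicity claim that each successor is newly expanded on at most one visit of $r$ also fails: a successor $n$ that is returned by {\ttfamily ExpandNode} but then dropped by the beam truncation ($n\notin keepNodes$) is discarded \emph{before} it is ever marked $visited$, and the shifted interval $[f_{\id{max}},ub)$ re-admits it on the next pass, so the same $n$ can be handed back by {\ttfamily ExpandNode} on several consecutive visits of $r$.

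The paper closes both holes with a layer-wide amortized count rather than a per-node one: the pool of successors of all nodes in $open[l]$ has size at most $|open[l]|\cdot(|V_Q|+1)$, every backtrack to layer $l$ (except the last) pushes exactly $w$ \emph{new} members of that pool into $open[l+1]$, and $|open[l]|\leq w$, so the number of backtracks to layer $l$ --- and therefore of re-visits of $r$ --- is at most $|open[l]|\cdot(|V_Q|+1)/w\leq|V_Q|+1$. Adding the visit that first generates $r$ and the final visit on which all successors are pruned or marked (so that Lemma~\ref{lem:lem1} lets lines~12--13 remove $r$ from $open[l]$) gives the $|V_Q|+3=\mathcal{O}(|V_Q|)$ bound. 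Your proof needs this global ``each backtrack consumes $w$ fresh successors from the layer's pool'' argument, or some equivalent use of $|open[l]|\leq w$; without it the bound on unproductive visits is unsupported.
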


\begin{proof}
For a node $r$ in layer~$l$, it generates at most \mbox{$|V_Q| + 1$} successors by
{\ttfamily GenSuccr}. In order to fully generate all successors in layer $l+1$,
we backtrack to this layer at most $|open[l]|\cdot(|V_Q| + 1)/w
\leq |V_Q| + 1$ times as $|open[l]| \leq w$. After that,
when we visit $r$ once again, all successors of $r$ are either
pruned or marked, thus we safely prune them by Lemma~\ref{lem:lem1}.
So, $r$ cannot produce tighter~$ub$ in this iteration and we
safely prune it, i.e., lines 12--13 in {\ttfamily ExpandNode}.
Plus the first time when generating~$r$, we totally visit~$r$ at
most $|V_Q|+3$ times, i.e., $\mathcal{O}(|V_Q|)$.
\end{proof}

\begin{thm}
\label{thm:BSSGED}
Given two graphs $G$ and $Q$, {BSS\_GED} must return
$ged(G, Q)$.
\end{thm}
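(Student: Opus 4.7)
The plan is to show two facts: (i) $ub$ is always a valid upper bound on $ged(G,Q)$ during execution, and (ii) upon return, $ub = ged(G,Q)$. Fact~(i) is immediate. The only place $ub$ is modified is line~7 of \textsc{Search}, where $r$ is a leaf node and $\psi_r$ is a complete graph mapping. By Theorem~\ref{thm:thm1}, $g(r) = |P_{\psi_r}|$, the length of a valid edit path between $G$ and $Q$, hence $ub \geq ged(G,Q)$ throughout the run.

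For fact~(ii), I would first observe that the reduced search tree unfolded by \textsc{GenSuccr} still contains at least one optimal graph mapping $\psi^{*}$. Rule~1 is justified by Theorem~\ref{thm:thm2}, since it forbids only assignments that would force $|\psi| > \max\{|V_G|, |V_Q|\}$, which no optimal mapping has. Rule~2 is justified by Theorem~\ref{thm:thm3}, since among mappings with identical canonical code, all of which induce the same edit cost, it keeps exactly the lexicographically smallest one. Thus $\psi^{*}$ does appear as some leaf of the tree that \textsc{GenSuccr} can generate, with $f(\psi^{*}) = ged(G,Q)$.

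The crux of the argument, and the main obstacle, is showing that beam-stack search together with the safe prunings of Lemma~\ref{lem:lem1} eventually reaches every leaf whose $f$-value could improve $ub$. I plan to argue by induction on outer-loop iterations of \textsc{BSS\_GED}, maintaining the invariant: for each layer index $l$ whose entry is currently on $bs$, every node $n$ ever generated in layer $l+1$ with $f(n) < bs[l].f_\id{min}$ has already been either fully expanded or safely pruned per Lemma~\ref{lem:lem1}. Initially $bs$ holds only $[0,\infty)$ with $f_\id{min}=0$, so the invariant is vacuous. It is preserved because line~12 of \textsc{Search} sets $bs.\idrm{top}().f_\id{max}$ to the smallest $f$-value among successors pushed out by the beam, and line~9 of \textsc{BSS\_GED} then slides $f_\id{min}$ up to that value; in the following call to \textsc{Search}, \textsc{ExpandNode} re-admits exactly those successors whose $f$-values were previously filtered out and still fall below $ub$.

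When the pop loop finally empties $bs$ (so $l = -1$), the invariant at layer~$0$ implies that every leaf of the reduced tree with $f < ub$ has been visited during some prior iteration. Since $f(\psi^{*}) = ged(G,Q) \leq ub$, the leaf corresponding to $\psi^{*}$ cannot have been inadmissibly pruned, so it was expanded at some point, driving $ub$ down to $ged(G,Q)$. Lemma~\ref{lem:lem2} guarantees that the process terminates in finite time. Combined with fact~(i), we obtain $ub = ged(G,Q)$ at return.
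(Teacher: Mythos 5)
Your proposal is correct in outline but takes a genuinely different route from the paper. The paper argues by contradiction: assuming $ub > ged(G,Q)$ at termination, it takes the leaf $n$ realizing $ged(G,Q)$ and the leaf realizing $ub$, locates their deepest common ancestor $x$, and examines the successor $z$ of $x$ on the path toward $n$; since $f(z) \leq f(n) < ub$, $z$ can only have been eliminated by line~8 of {\ttfamily ExpandNode}, and either branch of that test ($f(z) \geq ub$, or $z.visited = \idrm{true}$ combined with the argument of Lemma~\ref{lem:lem1}) yields a contradiction. You instead run a forward argument: a per-layer invariant asserting that everything below the current $f_\id{min}$ of each stacked interval has already been dealt with, so that when the stack empties the whole reduced tree up to cost $ub$ has been swept. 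Two things you do are genuinely valuable and absent from the paper's proof: you explicitly verify that the tree produced by {\ttfamily GenSuccr} still contains an optimal leaf (via Theorems~\ref{thm:thm2} and~\ref{thm:thm3}), which the paper silently assumes, and you isolate the monotone upper-bound property of $ub$ as a separate fact. Termination via Lemma~\ref{lem:lem2} is handled the same way in both.

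The weak point is the final inference. Your invariant is stated for a single layer transition (nodes of layer $l+1$ relative to $bs[l].f_\id{min}$), and ``fully expanded'' for a node only means its immediate successors were generated, not that its subtree was exhausted; moreover, once $bs$ is empty the invariant quantifies over nothing. So the jump from ``the invariant at layer~0'' to ``every leaf with $f < ub$ has been visited'' needs either a layer-by-layer descent through the tree --- using Lemma~\ref{lem:lem1} to discharge the safely-pruned branches and the fact that a layer is popped only after its swept intervals cover $[0, ub)$ --- or, equivalently, the paper's walk along the path from the deepest common ancestor to the optimal leaf. The ingredients are all present in your sketch, but that descent is exactly where the real work lies, and as written it is asserted rather than proved.
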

\begin{proof}
By Lemma~\ref{lem:lem2}, a node is visited at most $\mathcal{O}(|V_Q|)$ times,
thus all nodes are totally visited at most $\mathcal{O}(|V_Q|\mathcal{S}_{R})$
times (see $\mathcal{S}_{R}$ in Section~\ref{subsec:AnalysisSS}), which is finite.
So, {BSS\_GED} always terminates. In {\ttfamily Search}, we always update
$ub = \idrm{min}\{ub, g(r)\}$ each time. Thus, $ub$ becomes more and more
tight. Next, we prove that $ub$ converges to $ged(G, Q)$ when {BSS\_GED}
terminates by contradiction.

Suppose that $ub > ged(G, Q)$. Let $r$ and $n$ be the leaf nodes
whose edit cost is $ub$ and $ged(G, Q)$, respectively. Let~$x$ in
layer $l$ be the common ancestor of~$r$ and~$n$, which is farthest
from $root$. Let $z$ in layer $l+1$ be a successor of $x$,
which is the ancestor of~$n$. Then $f(z) \leq f(n) = ged(G, Q) < ub$.

For $z$, it is not in the path from $root$ to $r$, thus it
must be pruned in an iteration, i.e., $f(z) \geq ub$ or $z.visited = \idrm{false}$,
line 8 in {\ttfamily ExpandNode} (if $z$ has been inadmissible pruned,
we backtrack and shift the range of $bs[l]$ to re-expand it until that~$z$
is pruned or marked). For the case $f(z) \geq ub$, it contradicts
that $f(z) < ub$, and for the other case, we conclude that $f(n) \geq ub$
by using the same analysis in Lemma~\ref{lem:lem1}, which contradicts
that $f(n) < ub$. Thus, $ub = ged(G, Q)$.
\end{proof}

\section{Search Space Pruning}
\label{sec:optimization}

In {BSS\_GED}, for a node $r$, if $f(r) = g(r) + h(r) \geq ub$,
then we safely prune $r$. As $g(r)$ is the irreversible edit cost,
the upper bound $ub$ and lower bound $h(r)$ are the keys to perform
pruning. Here, we give two heuristics to prune the search space
as follows:
(1) proposing an efficient heuristic function to obtain tighter $h(r)$;
(2) ordering vertices in $G$ to enable to fast find of tighter $ub$.

\subsection{Estimating $h(r)$}
\label{subsec:lowerBounds}

Let $P$ be an optimal edit path that transforms~$G$ to $Q$,
then~it contains at least $\idrm{max}\{|V_G|, |V_Q|\} -|\Sigma_\id{V_G} \cap \Sigma_\id{V_Q}|$
edit operations performed on vertices. Next, we only consider the edit
operations in $P$ performed on edges. Assuming that we first
delete~$\gamma_1$ edges to obtain~$G_1$, then insert~$\gamma_2$
edges to obtain~$G_2$, and finally change~$\gamma_3$ edge
labels to obtain~$Q$.

When transforming $G$ to $G_1$ by deleting $\gamma_1$ edges,
we have $\id{\idrm{\Sigma}_{E_{G_\idrm{1}}} \subseteq \idrm{\Sigma}_{E_G}}$,
thus $|\idrm{\Sigma}_{E_G} \cap \idrm{\Sigma}_{E_Q}| \geq |\idrm{\Sigma}_{E_{G_\idrm{1}}} \cap \idrm{\Sigma}_{E_Q}|$. When transforming~$G_1$ to $G_2$ by
inserting $\gamma_2$ edges, for each inserted edge, we no longer change
its label, thus $|\idrm{\Sigma}_\id{E_{G_\idrm{2}}} \cap \idrm{\Sigma}_\id{E_Q}|
= |\idrm{\Sigma}_\id{E_{G_\idrm{1}}} \cap \idrm{\Sigma}_\id{E_Q}| + \gamma_\idrm{2}$.
When transforming~$G_2$ to $Q$ by changing~$\gamma_3$ edge labels,
we need to substitute at least $\id{|\idrm{\Sigma}_{E_{Q}}| - |\idrm{\Sigma}_{E_{G_\idrm{2}}}
\cap \idrm{\Sigma}_{E_Q}|}$ edge labels, thus
$\id{\gamma_\idrm{3} \geq |\idrm{\Sigma}_{E_{Q}}| - |\idrm{\Sigma}_{E_{G_\idrm{2}}}
\cap \idrm{\Sigma}_{E_Q}|}$. So, we have
\begin{align}
\label{eq:lower1}
|\Sigma_\id{E_G} \cap \Sigma_\id{E_Q}| + \gamma_2 + \gamma_3 \geq |E_Q|.
\end{align}

Let $lb(G,Q) = \idrm{max}\{|V_G|, |V_Q|\} -|\Sigma_{V_G} \cap \Sigma_{V_Q}|
+ \sum_{i=1}^{3}\gamma_i$, then $ged(G, Q) \geq lb(G,Q)$. Obviously,
the lower bound $lb(G,Q)$ should be as tight as possible.
In order to achieve this goal, we utilize the degree sequence of a graph.

For a vertex $u$ in $G$, its degree $d_u$ is the number of
edges adjacent to $u$. The degree sequence $\delta_G = [\delta_G[1], \dots, \delta_G[|V_G|]]$
of~$G$ is a permutation of $d_1, \dots, d_{|V_G|}$ such that $\id{\delta_G[i]
\geq \delta_G[j]}$ for $\id{i < j}$. For unequal size $G$ and $Q$,
we extend  $\delta_G$ and $\delta_Q$ as $\delta_G' =
[\delta_G[1], \dots, \delta_G[|V_G|], 0_1, \dots, 0_{|V| - |V_G|}]$
and $\delta_Q' = [\delta_Q[1], \dots, \delta_Q[|V_Q|], 0_1, \dots,
0_{|V|-|V_Q|}]$, resp., where $|V|=\idrm{max}\{|V_G|, |V_Q|\}$.
Let $\Delta_1(G, Q)= \lceil\sum_{\delta_{G}'[i] >
\delta_{Q}'[i]}(\delta_{G}'[i]-\delta_{Q}'[i])/2\rceil$ and
$\Delta_2(G, Q)= \lceil \sum_{\delta_{G}'[i] \leq
\delta_{Q}'[i]}(\delta_{Q}'[i]-\delta_{G}'[i])/2\rceil$,
for $1 \leq i \leq |V|$, then
we give the respective lower bounds of~$\gamma_1$ and~$\gamma_2$ as follows:

\begin{thm}[\cite{ChenHHJ2016}]
\label{thm:degree}
Given two graphs $G$ and $Q$, we have
$\gamma_1 \geq \Delta_1(G, Q)$ and $\gamma_2 \geq \Delta_2(G, Q)$.
\end{thm}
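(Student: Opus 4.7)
The plan is to analyze an optimal edit path $P: G \to G_1 \to G_2 \to Q$, exploiting the vertex bijection $\psi$ it induces on $V_G^* \cup V_Q^*$ (with dummies padding the smaller side). Since only edge deletions and edge insertions alter degrees, while relabelings do not, I would track, for each $v \in V_G^*$ mapped to $u = \psi(v) \in V_Q^*$, the number $a(v)$ of edges incident to $v$ in $G$ that are deleted by $P$ and the number $b(v)$ of edges incident to $u$ in $Q$ that are inserted by $P$, with dummy vertices being assigned degree $0$.

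First I would establish the bookkeeping identity
\begin{equation*}
d_G(v) - d_Q(\psi(v)) \;=\; a(v) - b(v),
\end{equation*}
obtained by counting edges preserved at $v$ during $P$: the surviving edges at $v$ in $G$ are exactly the surviving edges at $\psi(v)$ in $Q$. Summing $a$ over $V_G^*$ yields $\sum_v a(v) = 2\gamma_1$, since each deleted edge contributes to two endpoints; analogously $\sum_v b(v) = 2\gamma_2$. Because $a(v), b(v) \geq 0$, the identity forces $a(v) \geq d_G(v) - d_Q(\psi(v))$ whenever the right-hand side is positive, and symmetrically $b(v) \geq d_Q(\psi(v)) - d_G(v)$ when that is positive. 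Aggregating over $v$ and using the integrality of $\gamma_1, \gamma_2$, I get
\begin{align*}
\gamma_1 &\;\geq\; \left\lceil \tfrac{1}{2}\!\!\sum_{v:\, d_G(v) > d_Q(\psi(v))}\!\!\bigl(d_G(v) - d_Q(\psi(v))\bigr) \right\rceil, \\
\gamma_2 &\;\geq\; \left\lceil \tfrac{1}{2}\!\!\sum_{v:\, d_G(v) < d_Q(\psi(v))}\!\!\bigl(d_Q(\psi(v)) - d_G(v)\bigr) \right\rceil.
\end{align*}

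To turn these $\psi$-dependent bounds into the sorted-degree quantities $\Delta_1(G,Q)$ and $\Delta_2(G,Q)$, I would invoke a rearrangement lemma: among all bijections between the multisets underlying $\delta_G'$ and $\delta_Q'$, the pairing that matches the $i$-th largest entries of the two sequences simultaneously minimizes both sums above. This can be proved by a standard two-element swap argument, by checking that whenever two pairs $(\delta_G'[i], \delta_Q'[\sigma(i)])$ and $(\delta_G'[j], \delta_Q'[\sigma(j)])$ with $i < j$ are out of sorted order, exchanging their $Q$-partners does not increase either clipped-positive-part sum. Iterating yields the sorted pairing, and substituting it into the previous inequalities gives exactly $\gamma_1 \geq \Delta_1(G,Q)$ and $\gamma_2 \geq \Delta_2(G,Q)$.

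The main obstacle I anticipate is the rearrangement step: the classical rearrangement inequality governs $\sum |x-y|$ or $\sum x_i y_{\sigma(i)}$, whereas here the target functional is the asymmetric clipped sum $\sum \max(0, x-y)$, so the swap analysis requires a careful case split on the relative order of the four values involved. A secondary subtlety is the dummy-vertex bookkeeping: edges incident to $G$-side dummies contribute to $a$ and must be identified with edges that are truly deleted from $G$ (rather than counted against the vertex-deletion operations already absorbed into the $\max\{|V_G|, |V_Q|\} - |\Sigma_{V_G}\cap\Sigma_{V_Q}|$ term), and I would verify this by noting that isolated vertex deletions touch no edges and therefore do not affect $a(v)$ or $b(v)$ for any $v$.
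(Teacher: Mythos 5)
The paper does not actually prove Theorem~\ref{thm:degree}; it is imported by citation from \cite{ChenHHJ2016}, so there is no in-paper argument to compare against. Judged on its own, your proposal is a correct and essentially complete proof. The local identity $d_G(v)-d_Q(\psi(v))=a(v)-b(v)$ holds because relabelings preserve degrees and deletions/insertions change them by exactly one per incident edge (with vertices destined for deletion mapped to degree-$0$ dummies, which your bookkeeping handles), the handshake-style sums $\sum_v a(v)=2\gamma_1$ and $\sum_v b(v)=2\gamma_2$ are right, and nonnegativity of $a,b$ then gives $\gamma_1\geq\lceil\tfrac12\sum_v\max(0,d_G(v)-d_Q(\psi(v)))\rceil$ with the ceiling justified by integrality. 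The rearrangement step you flag as the main obstacle does go through: for fixed $s=x_1+x_2-y_a-y_b$ the function $u\mapsto f(u)+f(s-u)$ with $f(t)=\max(0,t)$ is convex and symmetric about $s/2$, and the sorted pairing produces arguments closer to $s/2$ than any transposed pairing, so the two-element swap never increases the clipped sum; moreover, since $\max(0,x-y)-\max(0,y-x)=x-y$ and $\sum_i(x_i-y_{\sigma(i)})$ is independent of $\sigma$, the same sorted pairing minimizes both clipped sums simultaneously, which is exactly what you need to pass from the $\psi$-dependent bound to $\Delta_1$ and $\Delta_2$. The only point worth making explicit is that the multiset of degrees paired by $\psi$ on the $Q$-side, after appending zeros for dummies, coincides with the multiset underlying $\delta_Q'$ (and likewise for $G$), which follows from Theorem~\ref{thm:thm2} forcing $|\psi|=\max\{|V_G|,|V_Q|\}$; with that observation your argument is airtight.
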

\vspace{-5pt}
Based on inequality~(\ref{eq:lower1}) and Theorem~\ref{thm:degree},
we then establish the following lower bound of GED in Theorem~\ref{thm:lowerbounds}.
\vspace{-5pt}
\begin{thm}
\label{thm:lowerbounds}
Given two graphs $G$ and $Q$, we have $ged(G, Q) \geq LB(G, Q)$, where
$LB(G, Q) =  \idrm{max}\{|V_G|, |V_Q|\} - |\Sigma_{V_G} \cap \Sigma_{V_Q}| +
\idrm{max}\{\Delta_1(G, Q) + \Delta_2(G, Q), \Delta_1(G, Q) + |E_Q| - |\Sigma_{E_G} \cap
\Sigma_{E_Q}|\}$.
\end{thm}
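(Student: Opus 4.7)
The plan is to decompose the length of an optimal edit path $P$ from $G$ to $Q$ into its vertex-operation part and its edge-operation part, and then combine the three bounds already developed in the excerpt. Writing $ged(G,Q)$ as (vertex operations in $P$) $+ \gamma_1 + \gamma_2 + \gamma_3$, the paragraph immediately before inequality~(\ref{eq:lower1}) already justifies that the vertex-operation count of any optimal path is at least $\max\{|V_G|, |V_Q|\} - |\Sigma_{V_G} \cap \Sigma_{V_Q}|$: any vertex label appearing in $V_G$ or $V_Q$ but not absorbed by the label-multiset intersection forces a deletion, insertion, or substitution of a vertex. It therefore suffices to establish
\[
\gamma_1 + \gamma_2 + \gamma_3 \;\geq\; \max\bigl\{\,\Delta_1(G,Q) + \Delta_2(G,Q),\ \Delta_1(G,Q) + |E_Q| - |\Sigma_{E_G} \cap \Sigma_{E_Q}|\,\bigr\}.
\]

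Next I would produce the two quantities inside the max separately. For the first, Theorem~\ref{thm:degree} gives both $\gamma_1 \geq \Delta_1(G,Q)$ and $\gamma_2 \geq \Delta_2(G,Q)$; summing these and dropping the nonnegative $\gamma_3$ yields $\gamma_1 + \gamma_2 + \gamma_3 \geq \Delta_1(G,Q) + \Delta_2(G,Q)$. For the second, I would rearrange inequality~(\ref{eq:lower1}) as $\gamma_2 + \gamma_3 \geq |E_Q| - |\Sigma_{E_G} \cap \Sigma_{E_Q}|$ and add the degree-based bound $\gamma_1 \geq \Delta_1(G,Q)$ to obtain the second term. Because both displayed inequalities hold for the same optimal $P$, their maximum is also a valid lower bound on $\gamma_1+\gamma_2+\gamma_3$; combining with the vertex-operation estimate completes the proof of $ged(G,Q) \geq LB(G,Q)$.

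The main obstacle here is conceptual rather than technical: one has to notice that $\gamma_1$ should be paired with $\gamma_2$ on the degree-sequence side, but with $\gamma_2+\gamma_3$ on the edge-label-multiset side, so that the two resulting lower bounds are complementary and neither dominates in general. The degree-based term is tight when $G$ and $Q$ differ mainly in structure, whereas the label-multiset term is tight when their edge-label distributions diverge; taking the max retains whichever argument is stronger for the specific pair $(G,Q)$. Beyond this observation no new computation is required, since every ingredient (the vertex-label counting, inequality~(\ref{eq:lower1}), and Theorem~\ref{thm:degree}) has already been prepared above the statement.
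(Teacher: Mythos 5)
Your proposal is correct and follows essentially the same route as the paper: the paper also decomposes the optimal path into vertex operations (bounded by $\idrm{max}\{|V_G|,|V_Q|\} - |\Sigma_{V_G} \cap \Sigma_{V_Q}|$) and edge operations $\gamma_1+\gamma_2+\gamma_3$, then combines Theorem~\ref{thm:degree} with inequality~(\ref{eq:lower1}) to get the two terms inside the max, exactly as you pair $\gamma_1$ with $\gamma_2$ for the degree bound and with $\gamma_2+\gamma_3$ for the label-multiset bound. The paper leaves this combination implicit ("Based on inequality~(\ref{eq:lower1}) and Theorem~\ref{thm:degree}, we then establish..."), so your write-up simply makes the intended argument explicit.
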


Next, we discuss how to estimate~$h(r)$ based on Theorem~\ref{thm:lowerbounds}.
Let $\psi_r = \{(u_{i_\idrm{1}} \to v_{j_\idrm{1}}), \dots, (u_{i_l} \to v_{j_l})\}$
be the partial mapping associated with $r$, then we divide
$G$ into two parts~$G_{1}^{r}$ and $G_{2}^{r}$, where $G_{1}^{r}$
is the mapped part of $G$, s.t., $V_{G_{1}^{r}} = \{u_{i_1},\dots, u_{i_l}\}$
and $E_{G_{1}^{r}} = \{e(u, v): u, v \in V_{G_1^r} \land e(u, v) \in E_G\}$,
and $G_2^r$ is the unmapped part, s.t., $V_{G_2^r} = V_G \backslash V_{G_1^{r}}$
and $E_{G_2^r} = \{e(u, v): u, v \in V_{G_2^r} \land e(u, v) \in E_G\}$.
Similarly, we also obtain $Q_1^r$ and $Q_2^r$. For $r$, by Theorem~\ref{thm:lowerbounds}
we know that $LB(G_2^r, Q_2^r)$ is lower bound of $ged(G_2^r, Q_2^r)$ and
hence can adopt it as $h(r)$. However, for the potential edit cost
on the edges between $G_1^r$ (resp. $Q_1^r$) and $G_2^r$ (resp. $Q_2^r$),
$LB(G_2^{r}, Q_2^{r})$ has not covered~it.

\begin{defn}[Outer Edge Set]
\label{def:outerEdgeSet}
For a vertex $u$ in $G_1^r$, we define its outer edge set
as $O_u = \{e(u, v): v \in V_{G_2^{r}} \land e(u, v) \in E_G\}$,
which consists of edges adjacent to $u$ that neither belong to
$E_{G_1^{r}}$ nor $E_{G_2^{r}}$.
\end{defn}

Correspondingly, $O_{\psi(u)}$ is the outer edge set of $\psi(u)$.
Note that, if $\psi(u) = v^n$, then $O_{\psi(u)} = \emptyset$.
Thus, $\Sigma_{O_u} = \{L(e(u, v)): e(u,v) \in O_u\}$ is the
label multiset of $O_u$. In order to make $O_u$ and
$O_{\psi(u)}$ have the same label multiset, assuming that we
first delete $\xi_1^{u}$ and then insert $\xi_2^{u}$ edges on~$u$,
and finally substitute $\xi_3^{u}$ labels on the outer edges
adjacent to~$u$. Similar to the previous analysis of obtaining
inequality~(\ref{eq:lower1}), we have
\begin{align}
\label{eq:lower5}
\left \{
\begin {array}{l}
       |O_u| - \xi_1^{u} + \xi_2^{u} = |O_{\psi(u)}|\\
       |\Sigma_{O_{u}} \cap \Sigma_{O_{\psi(u)}}| + \xi_2^{u} + \xi_3^{u} \geq |O_{\psi(u)}| \\
\end{array} \right.
\end{align}

Thus, $\sum_{i=1}^{3}\xi_{i}^{u} \geq |O_{\psi(u)}| - |\Sigma_{O_u}
\cap \Sigma_{O_{\psi(u)}}|$. As $|O_u| + \xi_2^{u} = |O_{\psi(u)}| + \xi_1^{u}$,
we have $\sum_{i=1}^{3}\xi_{i}^{u} \geq |O_{\psi(u)}| - |\Sigma_{O_u} \cap \Sigma_{O_{\psi(u)}}|
+ \xi_1^{u} = |O_{u}| - |\Sigma_{O_u} \cap \Sigma_{O_{\psi(u)}}| +
\xi_2^{u} \geq |O_{u}| - |\Sigma_{O_u} \cap \Sigma_{O_{\psi(u)}}|$.
So, $\sum_{i=1}^{3}\xi_{i}^{u} \geq  \idrm{max}\{|O_u|, |O_{\psi(u)}|\}
- |\Sigma_{O_u} \cap \Sigma_{O_{\psi(u)}}|$.
Adding all vertices in $G_1^r$, we obtain the lower bound $LB_1^{r}$ as follows:
\begin{align}
\label{eq:lower6}
\begin{split}
LB_1^{r}=LB(G_2^r, Q_2^r) + \sum\limits_{u \in V_{G_1^{r}}}
(\idrm{max}\{|O_{u}|, |O_{\psi(u)}|\} \\ - |\Sigma_{O_{u}} \cap \Sigma_{O_{\psi(u)}}|).
\end{split}
\vspace{-10pt}
\end{align}

\begin{defn}[Outer Vertex Set]
\label{def:outerVertexSet}
For a vertex $u$ in $G_1^r$, we define its outer vertex set as
$A_u = \{v: v \in V_{G_{2}^{r}} \land e(u, v) \in E_G\}$,
which consists of vertices in $G_2^r$ adjacent to $u$.
\end{defn}

Correspondingly, $A_{\psi(u)}$ denotes the outer vertex set of~$\psi(u)$.
Note that, if $\psi(u) = v^{n}$, then $A_{\psi(u)} = \emptyset$.
Thus, $A_G^r = \bigcup_{u \in V_{G_1^{r}}}A_u$ denotes the set of vertices
in~$G_2^r$ adjacent to those outer edges between $G_1^{r}$ and $G_2^{r}$.
Similarly, we obtain $A_Q^r = \bigcup_{z \in V_{Q_1^{r}}}A_z$.
If $|A_{G}^{r}| \leq |A_{Q}^{r}|$, then we need to insert
at least $|A_{Q}^{r}|-|A_{G}^{r}|$ outer edges on some vertices in $G_{1}^{r}$, hence
$\sum_{u\in V_{G_1^{r}}}\xi_2^{u} \geq |A_{Q}^{r}|-|A_{G}^{r}|$; otherwise,
$\sum_{u\in V_{G_1^{r}}}\xi_1^{u} \geq |A_{G}^{r}|-|A_{Q}^{r}|$.
Considering equation~(\ref{eq:lower5}), for a vertex $u$ in $G_{1}^{r}$,
we have $\xi_2^{u} + \xi_3^{u} \geq |O_{\psi(u)}| - |\Sigma_{O_u} \cap \Sigma_{O_{\psi(u)}}|$.
Thus, $\sum_{u\in V_{G_1^{r}}}(\xi_1^{u} + \xi_2^{u}
+ \xi_3^{u}) \geq \sum_{u\in V_{G_1^{r}}}(|O_{\psi(u)}| - |\Sigma_{O_u} \cap \Sigma_{O_{\psi(u)}}|)
+ \idrm{max}\{0, |A_{G}^{r}|-|A_{Q}^{r}|\}$. As $|O_u|+ \xi_2^{u}
= |O_{\psi(u)}| + \xi_1^{u}$, \\we have $\sum_{u\in V_{G_1^{r}}}(\xi_1^{u}
+ \xi_2^{u} + \xi_3^{u}) \geq \sum_{u\in V_{G_1^{r}}} (|O_{u}|
- |\Sigma_{O_u} \cap \Sigma_{O_{\psi(u)}}|) + \idrm{max}\{0,
|A_{Q}^{r}|-|A_{G}^{r}|\}$. So, we obtain the lower bounds
$LB_2^r$ and $LB_3^r$ as follows:
\begin{align}
\begin{split}
LB_2^{r} = LB(G_2^r, Q_2^r) + \sum_{u\in V_{G_1^{r}}}(|O_{\psi(u)}| -
|\Sigma_{O_u} \cap \Sigma_{O_{\psi(u)}}|)
\\ + \idrm{max}\{0, |A_{G}^{r}|-|A_{Q}^{r}|\}.
\end{split}\\
\begin{split}
LB_3^{r} = LB(G_2^r, Q_2^r) + \sum_{u\in V_{G_1^{r}}}(|O_{u}| -
|\Sigma_{O_u} \cap \Sigma_{O_{\psi(u)}}|)
\\ + \idrm{max}\{0, |A_{Q}^{r}|-|A_{G}^{r}|\}.
\end{split}
\end{align}

Based on the above lower bounds $LB_1^r$, $LB_2^r$ and $LB_3^r$, we
adopt $h(r) = \idrm{max}\{LB_1^r, LB_2^r, LB_3^r\}$ as the
heuristic function to estimate the edit cost of a node $r$ in {BSS\_GED}.

\noindent \textbf{Example 5.} Consider graphs $G$ and $Q$ in
Figure~\ref{Fig:OutEdges}. For a node $r$ associated with a
partial mapping $\psi(r) = \{(u_1 \to v_1), (u_2 \to v_2)\}$,
then $G_2^{r} = (\{u_3, u_4, u_5\}, \{e(u_3, u_5), e(u_4, u_5)\}, L)$
and $Q_2^{r} = (\{v_3, v_4, v_5, v_6\}, \{e(v_3, v_6), e(v_4, v_6),
e(v_5, v_6)\}, L)$. By Theorem~\ref{thm:lowerbounds}, we compute
$LB(G_2^{r}, Q_2^{r}) = 2$. Considering vertices~$u_1$ and $u_2$ that
have been processed, we have $O_{u_1} = \{e(u_1, u_3), e(u_1, u_4)\}$ and
$O_{u_2} = \{e(u_2, u_4)\}$, and then obtain $\Sigma_{O_{u_1}}
= \{\idrm{a},\idrm{a}\}$ and $\Sigma_{O_{u_2}} = \{\idrm{b}\}$.
Similarly, we have $\Sigma_{O_{v_1}} = \{\idrm{a}, \idrm{a}\}$
and $\Sigma_{O_{v_2}} = \{\idrm{b}\}$. Thus $LB_1^r = LB(G_2^{r}, Q_2^{r})
+\sum_{u \in \{u_1, u_2\}}(\idrm{max}\{|O_{u}|, |O_{\psi(u)}|\} - |\Sigma_{O_{u}}
\cap \Sigma_{O_{\psi(u)}}|) = 2$. As $A_G^{r} = \{u_3, u_4\}$ and
$A_Q^{r} = \{v_3, v_4, v_5\}$, we obtain
$LB_2^r =LB(G_2^{r}, Q_2^{r}) + \sum_{u \in \{u_1, u_2\}}
(|O_{\psi(u)}| - |\Sigma_{O_u} \cap \Sigma_{O_{\psi(u)}}|)
+ \idrm{max}\{0, |A_G^{r}| - |A_Q^{r}|\} = 2$,
and $LB_3^r = LB(G_2^{r}, Q_2^{r}) + \sum_{u \in \{u_1, u_2\}}
(|O_{u}| - |\Sigma_{O_u} \cap \Sigma_{O_{\psi(u)}}|) + \idrm{max}\{0, \id{|A_Q^{r}|
- |A_G^{r}|}\}\\=3$.
So, $h(r) = \idrm{max}\{LB_1^r, LB_2^r, LB_3^r\} = \idrm{max}\{2, 2, 3\} = 3$.

\begin{figure}[htbp]
\centering
\begin{tikzpicture}[xshift=-1cm]
\tikzset{node/.style={circle, draw, minimum size=0.2cm, inner sep=0pt},}
\node[draw, circle, fill=gray!30] at (0.3, 0)        (u1) [label=left:$u_1$]{\scriptsize{A}};
\node[draw, circle, fill=gray!30] at (2.05, 0)       (u2) [label=right:$u_2$]{\scriptsize{B}};
\node[draw, circle]               at (0.3, -0.9)     (u3) [label=left:$u_3$]{\scriptsize{A}};
\node[draw, circle]               at (2.05,-0.9)     (u4) [label=right:$u_4$]{\scriptsize{A}};
\node[draw, circle]               at (1.2, -1.8)     (u5) [label=left:$u_5$]{\scriptsize{C}};
\path
(u1) edge [thick, left] node {a} (u3)
(u1) edge [thick, above] node {a} (u4)
(u2) edge [thick, right] node {b} (u4)
(u3) edge [left] node {a} (u5)
(u4) edge [right] node {b} (u5)
;

\node[draw, circle, fill=gray!30] at (4.39, 0)      (v1) [label=left:$v_1$]{\scriptsize{A}};
\node[draw, circle, fill=gray!30] at (5.6, 0)       (v2) [label=right:$v_2$]{\scriptsize{B}};
\node[draw, circle]               at (3.84, -0.9)   (v3) [label=left:$v_3$]{\scriptsize{A}};
\node[draw, circle]               at (5.1, -0.9)    (v4) [label=left:$v_4$]{\scriptsize{A}};
\node[draw, circle]               at (6.36, -0.9)   (v5) [label=right:$v_5$]{\scriptsize{B}};
\node[draw, circle]               at (4.84, -1.8)   (v6) [label=left:$v_6$]{\scriptsize{C}};

\path[draw, thin]
(v1) edge [thick, left] node {a} (v3)
(v1) edge [thick, right] node {a} (v4)
(v2) edge [thick, right] node {b} (v5)
(v3) edge [left] node {a} (v6)
(v4) edge [left] node {a} (v6)
(v5) edge [right] node {b} (v6)
;

\tkzText[below](1.2,-2.1){\small{$G$}}
\tkzText[below](4.9,-2.1){\small{$Q$}}

\end{tikzpicture}
\caption{\small{Example of two comparing graphs $G$ and $Q$.}}
\label{Fig:OutEdges}
\end{figure}
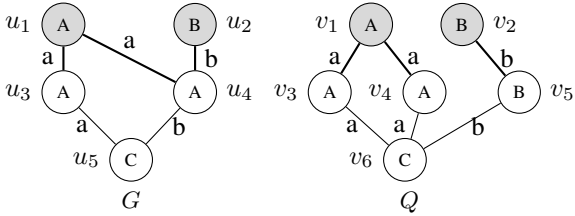

\subsection{Ordering Vertices in $G$}
\label{subsec:orderingVertices}

In {BSS\_GED}, we use {\ttfamily GenSuccr} to generate
successors. However, we need to determine
the processing order of vertices in $G$ at first, i.e., $(u_{i_1}, \dots, u_{i_{|V_G|}})$
(see Section~\ref{subsec:generalGED}). The most primitive
way is to adopt the default vertex order in~$G$,
i.e., $(u_1, \dots, u_{|V_G|})$, which is used in
{A$^\star$-GED}~\cite{RiesenFB2007} and {DF-GED}~\cite{AbuRRM2015}.
However, this may be inefficient as it has not considered
the structure relationship between vertices.

For vertices $u$ and $v$ in $G$ such that $e(u,v) \in E_G$,
if $u$ has been processed, then in order to early obtain the
edit cost on~$e(u,v)$, we should process $v$ as soon as possible.
Hence, our policy is to traverse $G$ in a depth-first order to
obtain $(u_{i_1}, \dots, u_{i_{|V_G|}})$. However, starting from
different vertices to traverse, we may obtain different orders.

In section~\ref{subsec:lowerBounds}, we have proposed the heuristic
estimate function $h(r)$, where an important part is $LB(G_2^{r}, Q_2^{r})$
presented in Theorem~\ref{thm:lowerbounds}. As we know,
the more structure $G_2^{r}$ and~$Q_2^{r}$ keep, the tighter
lower bound $LB(G_2^{r}, Q_2^{r})$ we may obtain. As a result,
we preferentially consider vertices with small degrees. This
is because that when we first process those vertices,
the left unmapped parts $G_2^{r}$ and~$Q_2^{r}$ could keep the
structure as much as possible.

\begin{defn}[Vertex Partial Order]
\label{def:vertexPartialOrder}

For two vertices $u$ and~$v$ in $G$, we define that $u \prec v$
if and only if $d_u < d_v$ or $d_u = d_v \land u < v$.
\end{defn}

In Algorithm~\ref{alg:determineOrder}, we give the method to compute
the order $(u_{i_1}, \dots, u_{i_{|V_G|}})$. First, we sort vertices to obtain
a global order array $rank$ based on the partial order $\prec$ (line~2).
Then, we call {\ttfamily DFS} to traverse $G$ in a depth-first order (lines 3--6).

In {\ttfamily DFS}, we sequentially insert $u$ into $order$ and
then mark~$u$ as visited, i.e., set $F[u] = \idrm{true}$ (line~1).
Then, we obtain the set $N_u$ of vertices adjacent to~$u$ (line~2).
Finally, we select a smallest unvisited vertex $v$ from $N_u$ based on the
partial order $\prec$, and then recursively call {\ttfamily DFS}
to traverse the subtree rooted at $v$ (lines 3--7).

\noindent \textbf{Example 6.} For the graph $G$ in Figure~\ref{Fig:OutEdges},
we first compute $rank = [u_2, u_1, u_3, u_5, u_4]$. Starting from
$u_2$, we traverse $G$ in a depth-first order, and finally
obtain $order=[u_2, u_4, u_1, u_3, u_5]$. Thus, we process
vertices in $G$ in the order $(u_2, u_4, u_1, u_3, u_5)$ in {BSS\_GED}.

\section{Extension of BSS\_GED}
\label{sec:ExtensionBSS}
In this section, we extend {BSS\_GED} to solve the GED based
graph similarity search problem: Given a graph database $\mathcal{G} =
\{\mathcal{G}_1, \mathcal{G}_2, \dots \}$, a query graph $\mathcal{Q}$
and an edit distance threshold $\tau$, the problem aims to find
all graphs in $\mathcal{G}$ satisfy $\id{ged(\mathcal{G}_i, \mathcal{Q}) \leq \tau}$.
As computing GED is an NP-hard problem, most of the existing methods,
such as~\cite{WangWYY2012, ChenHHJ2016, ZhaoXLW2012, ZhengZLWZ2015},
all use the filter-and-verify schema, that is, first filtering
some graphs in~$\mathcal{G}$ to obtain candidate graphs,
and then verifying them.

Here, we also use this strategy. For each data graph $\mathcal{G}_{i}$,
we compute the lower bound $LB(\mathcal{G}_{i}, \mathcal{Q})$
by Theorem~\ref{thm:lowerbounds}. If $LB(\mathcal{G}_{i}, \mathcal{Q}) > \tau$,
then $ged(\mathcal{G}_{i}, \mathcal{Q}) \geq LB(\mathcal{G}_{i}, \mathcal{Q}) > \tau$
and hence we filter $\mathcal{G}_{i}$; otherwise, $\mathcal{G}_{i}$
becomes a candidate graph.

For a candidate graph $\mathcal{G}_{i}$, we need to compute
$ged(\mathcal{G}_i,\mathcal{Q})$ to verify it. The standard
method is that we first compute $ged(\mathcal{G}_i, \mathcal{Q})$
and then determine $\mathcal{G}_{i}$ is a required graph or not
by judging $ged(\mathcal{G}_i, \mathcal{Q}) \leq \tau$.
Incorporating~$\tau$ with {BSS\_GED}, we can further
accelerate it as follows: First, we set the initial
upper bound $ub$ as $\tau + 1$ (line 1 in Alg.~\ref{alg:BSSGED}).
Then, during the execution of {BSS\_GED},
when we reach to a leaf node $r$, if the cost of~$r$ (i.e., $g(r)$)
satisfies $g(r) \leq \tau$, then $\mathcal{G}_{i}$ must be
a required graph and we stop running of {BSS\_GED}.
The reason is that $g(r)$ is an upper bound of GED and hence
we know that $ged(\mathcal{G}_{i}, \mathcal{Q}) \leq g(r) \leq \tau$.
\begin{algorithm}
\caption{{\ttfamily DetermineOrder}$(G)$}
\label{alg:determineOrder}
\SetKwFunction{DFS}{DFS}
\DontPrintSemicolon
\SetKwComment{Comment}{$\triangleright$}{}
$\id{F}[1..|V_G|] \gets \idrm{false}, \id{order}[] \gets \emptyset$, $\id{count} \gets 1;$\;
$\id{rank} \gets$ sort vertices in $V_G$ according to the partial order $\prec$;\;
\For{$i \gets 1 \ \idrm{to} \ |V_G|$}
{
   $u \gets \id{rank}[i];$\;
   \If{$F[u] = \idrm{false}$}
   {
        {\ttfamily DFS$(u, F, rank, order, count)$}\;
   }
}
\Return $\id{order}$\;
\setcounter{AlgoLine}{0}
\nonl\SetKwProg{myproc}{procedure}{}{}
\myproc{\DFS{$u, F, rank, order, count$}}
{
    $order[count] \gets u, count \gets count + 1, F[u] \gets \idrm{true};$\;
    $N_u \gets \{v: v \in V_G \land e(u, v) \in E_G\};$\;
    \While{$|N_u| > 0$}
    {
        $v\gets \idrm{argmin}_{j}\{\id{rank}[j]: j \in N_u\};$\;
        $N_u \gets N_u \backslash\{v\};$\;
        \If{$F[v] = \idrm{false}$}
        {
            {\ttfamily DFS$(v, F, rank, order, count);$}\;
        }
    }
}
\end{algorithm}

\begin{figure*}[htbp]
\centering
    \begin{minipage}{0.28\linewidth}
		\centerline{\includegraphics[width=1\textwidth, height=3cm]{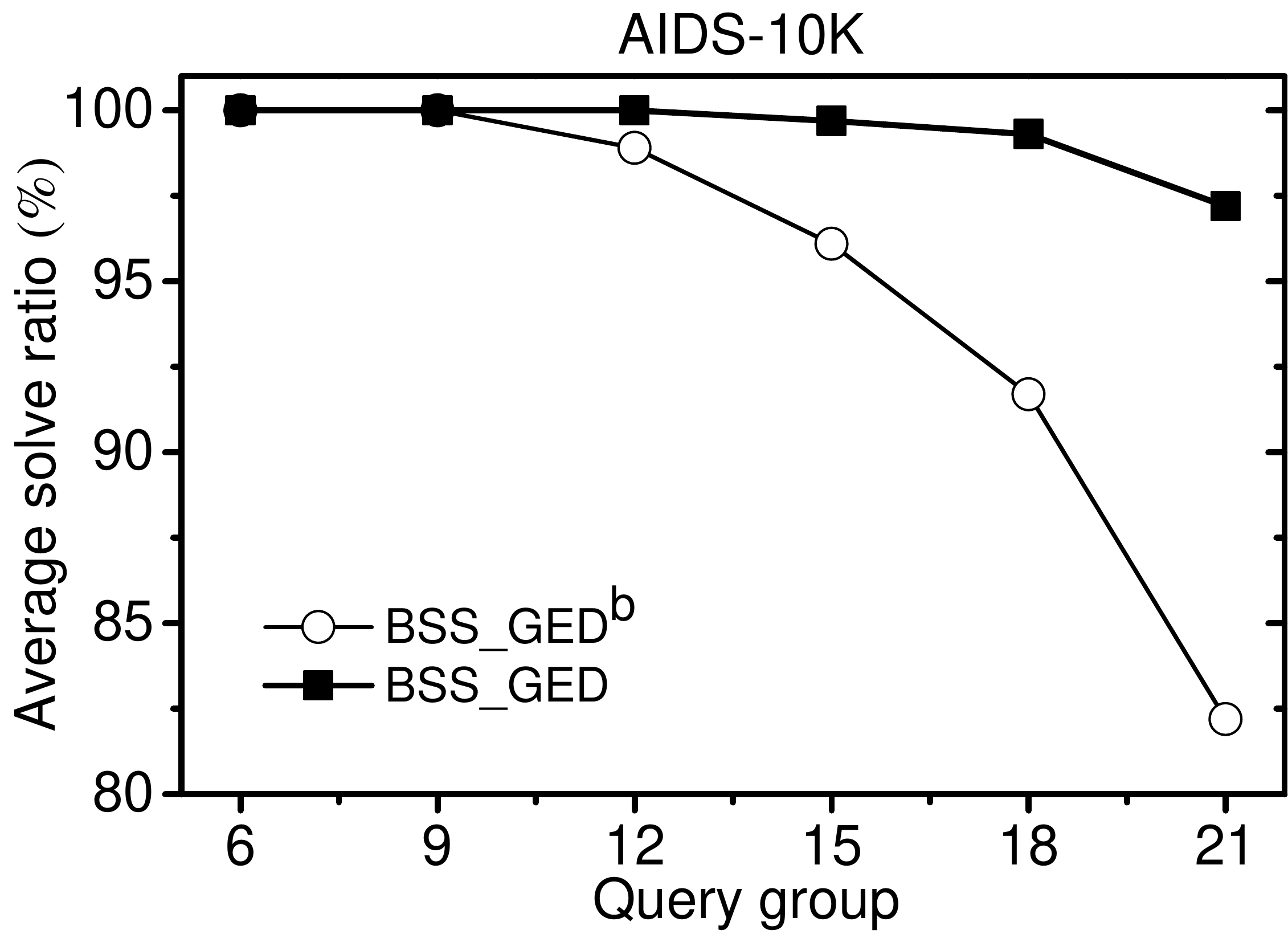}}
	\end{minipage}
	\qquad
	\begin{minipage}{0.28\linewidth}
		\centerline{\includegraphics[width=1\textwidth, height=3cm]{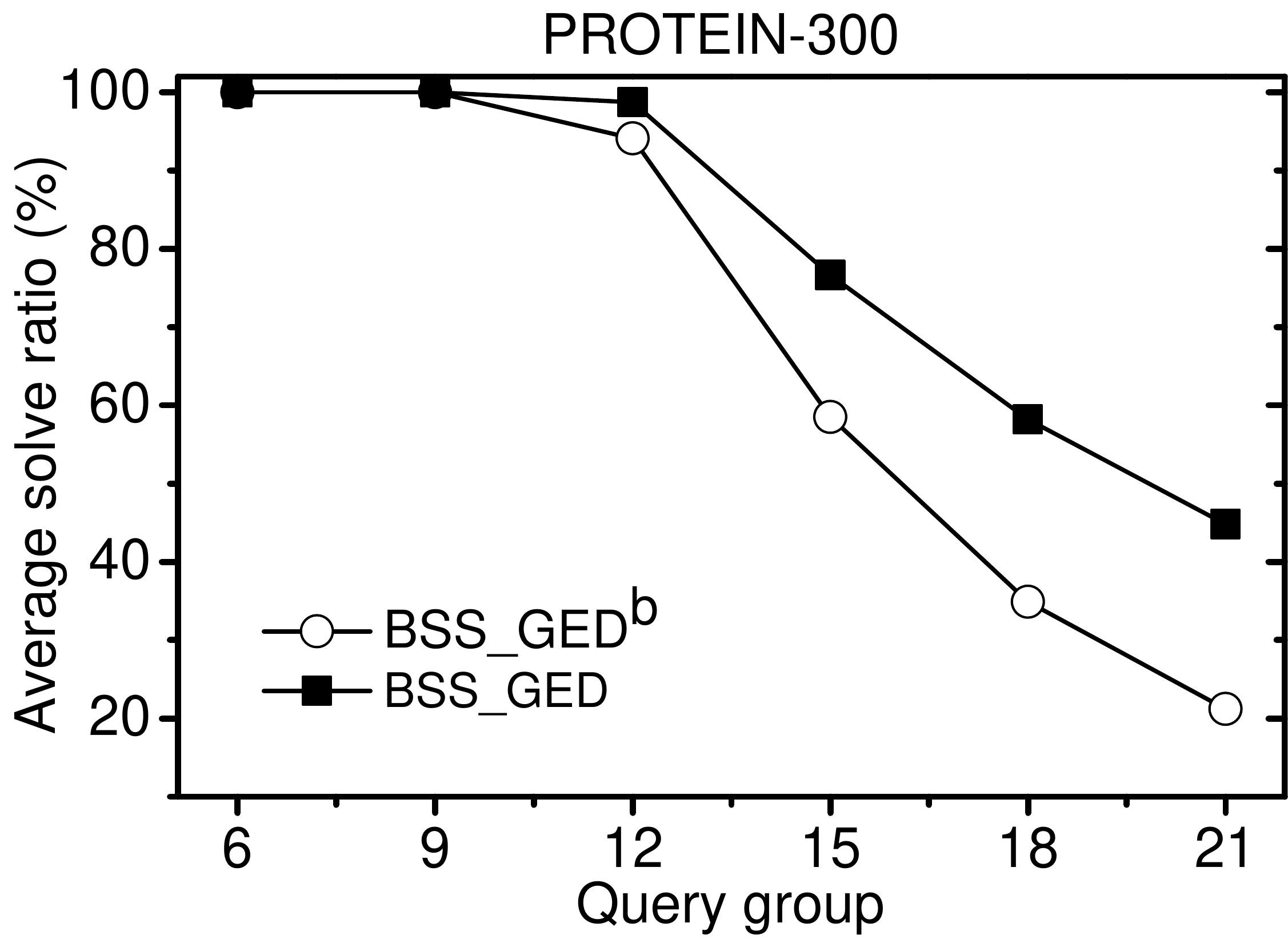}}
	\end{minipage}
	\qquad
	\begin{minipage}{0.28\linewidth}
		\centerline{\includegraphics[width=1\textwidth, height=3cm]{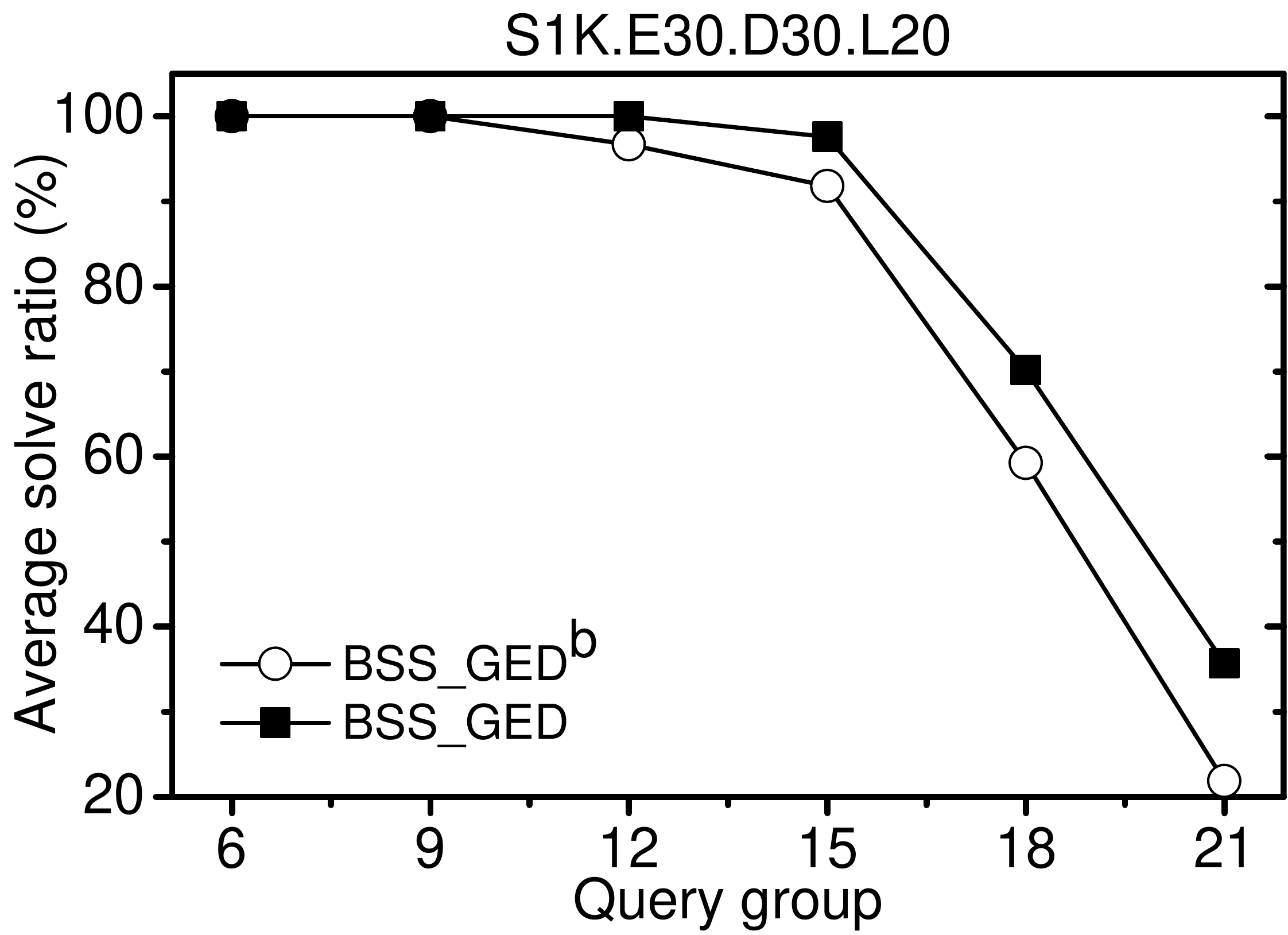}}
    \end{minipage}
    \vfill
    \begin{minipage}{0.28\linewidth}
		\centerline{\includegraphics[width=1\textwidth, height=3cm]{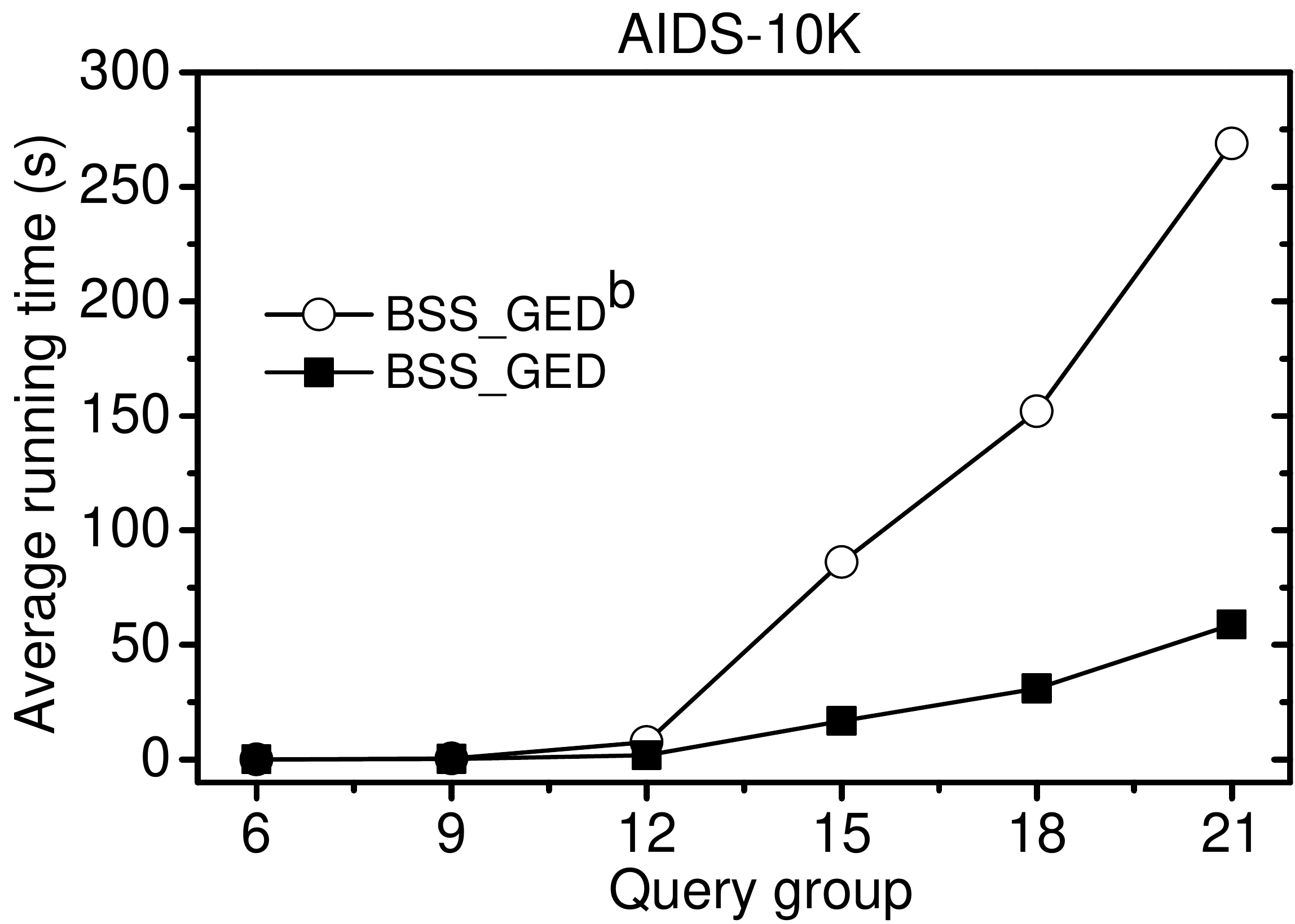}}
	\end{minipage}
	\qquad
	\begin{minipage}{0.28\linewidth}
		\centerline{\includegraphics[width=1\textwidth, height=3cm]{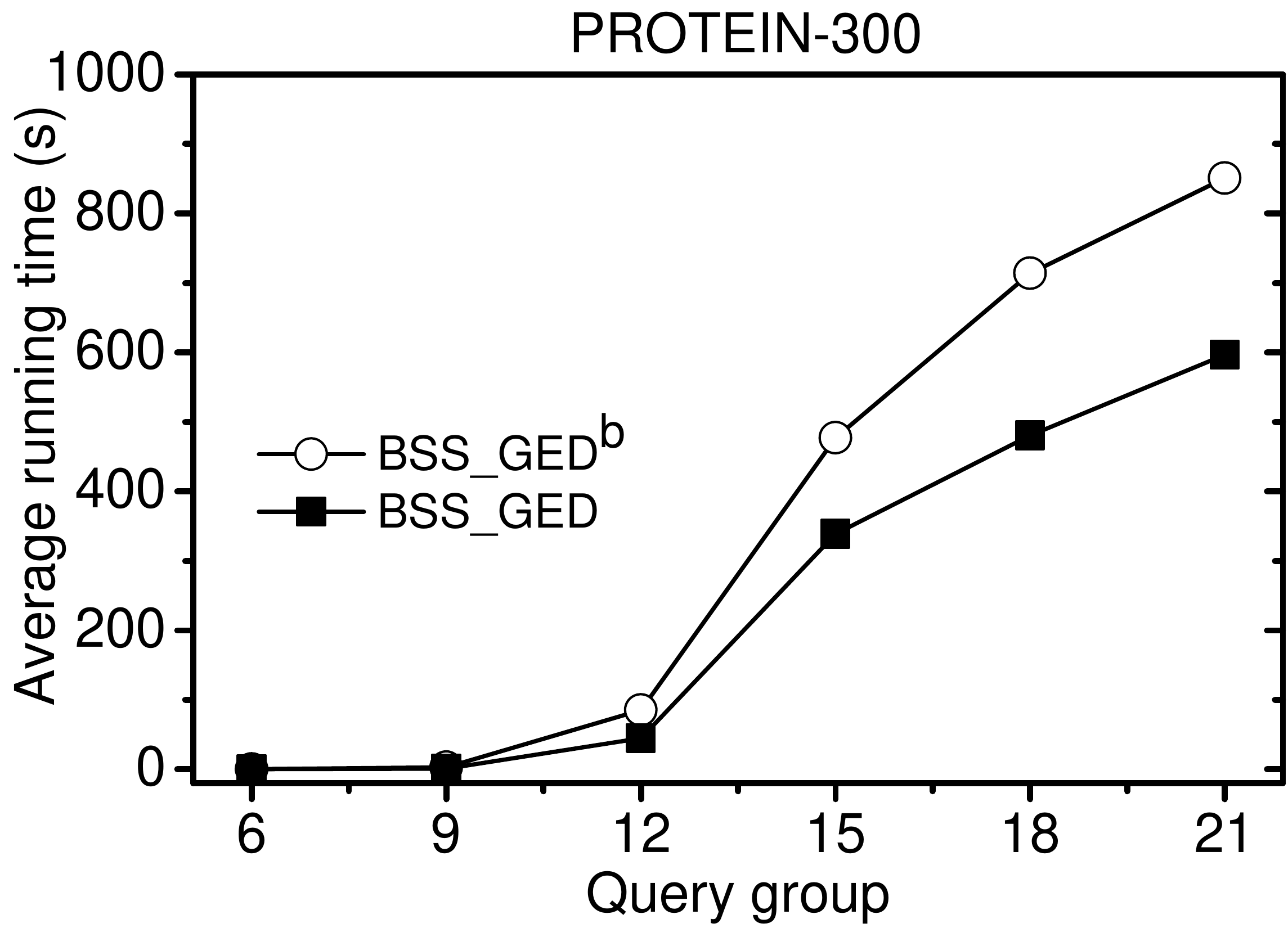}}
	\end{minipage}
	\qquad
	\begin{minipage}{0.28\linewidth}
		\centerline{\includegraphics[width=1\textwidth, height=3cm]{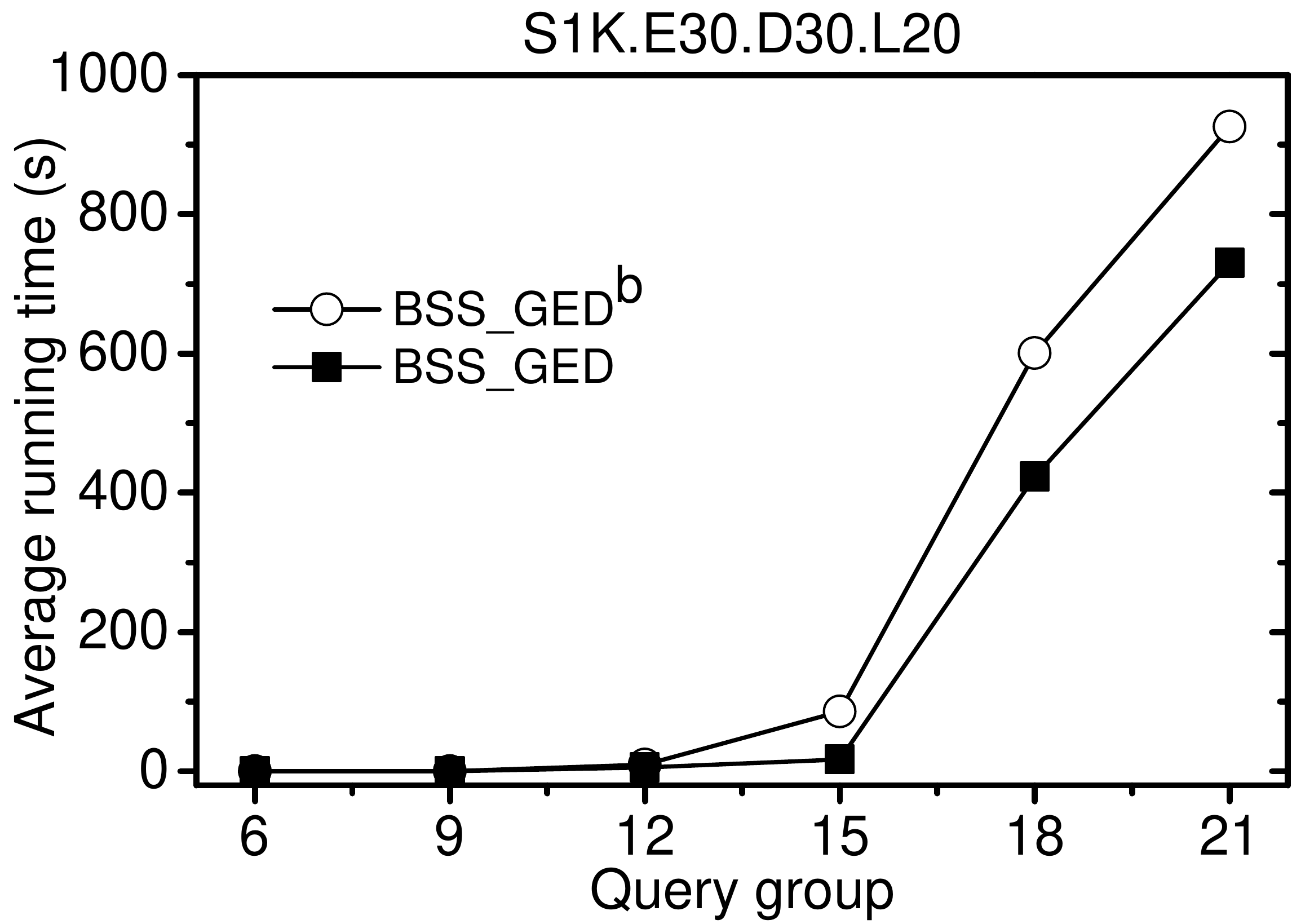}}
	\end{minipage}
    \caption{Effect of {\ttfamily GenSuccr} on the performance of {BSS\_GED}}
    \label{fig:Fig05}
\end{figure*}

\section{Experimental Results}
\label{sec:experiments}

In this section, we perform comprehensive experiments
and then analyse the obtained results.

\subsection{Datasets and Settings}
\label{subsec:Datasets}

We choose several real and synthetic datasets used in the experiment,
described as follows:

\begin{itemize}

\item AIDS\footnote{http://dtp.nci.nih.gov/docs/aids/aidsdata.html}.
It is an antivirus screen compound dataset from the Development
and Therapeutics Program in NCI/NIH, which contains 42687 chemical
compounds. We generate labeled graphs from these chemical compounds
and omit Hydrogen atom as did in~\cite{WilliamsHW2007,ChenHHJ2016}.

\item
PROTEIN\footnote{http://www.fki.inf.unibe.ch/databases/iam-graph-database/download-the-iam-graph-database}. It is a protein database from the Protein Data Bank, constituted of 600 protein structures.
Vertices represent secondary structure elements and are labeled with
their types (helix, sheet or loop). Edges are labeled to indicate if
two elements are neighbors or not.

\item Synthetic. The synthetic dataset is generated by the synthetic
graph data generator GraphGen\footnote{http://www.cse.ust.hk/graphgen/}.
In the experiment, we generate a density graph dataset S1K.E30.D30.L20,
which means that this dataset contains 1000 graphs;
the average number of edges in each graph is 30; the density\footnote{the
density of a graph $G$ is defined as $\frac{2|E_G|}{|V_G|(|V_G|-1)}$.}
of each graph is 30\%; and the distinct vertex and edge labels are
20 and~5, respectively.

\end{itemize}

Due to the hardness of computing GED, existing methods,
such as {A$^\star$-GED}~\cite{RiesenFB2007}, {DF-GED}~\cite{AbuRRM2015}
and {CSI\_GED}~\cite{GoudaH2016}, cannot obtain GED
of large graphs within a reasonable time and memory.
Therefore, for AIDS and PROTEIN, we exclude large graphs
with more than 30 vertices as did in~\cite{Blumenthal2017}, and
then randomly select 10000 and 300 graphs to make up the datasets
AIDS-10K and PROTEIN-300. For S1K.E30.D30.L20, we
use the entire dataset.

As suggested in~\cite{Blumenthal2017, GoudaH2016}, for each
dataset, we \mbox{randomly} se-lect~6 query groups, where each group
consists of~3 data graphs having three consecutive graph sizes.
{Specifically}, the number of vertices of each group
is in the range: $\idrm{6\pm1}, \idrm{9\pm1}, 12\pm1, 15\pm1, 18\pm1$ and $21\pm1$.

For the tested database $\mathcal{D} = \{\mathcal{D}_{1}, \mathcal{D}_{2}, \dots \}$
and query group $\mathcal{T} = \{\mathcal{T}_1, \mathcal{T}_{2}, \dots\}$,
we need to perform $|\mathcal{D}| \times |\mathcal{T}|$ times GED
computation. For each pair of the GED computation, we set the available time
and memory be 1000s and 24GB, respectively, and then define the metric \emph{average solve
ratio} as follows:
\begin{align}
\label{eq:sloveRate}
sr =\frac{\sum_{i=1}^{|\mathcal{D}|}\sum_{j=1}^{|\mathcal{T}|}
slove(\mathcal{D}_{i},\mathcal{T}_{j})}{|\mathcal{D}| \times |\mathcal{T}|}.
\end{align}
where $slove(\mathcal{D}_{i}, \mathcal{T}_{j}) = 1$ if we obtain
$ged(\mathcal{D}_{i}, \mathcal{T}_{j})$ within both 1000s and 24GB,
and $slove(\mathcal{D}_{i}, \mathcal{T}_{j}) = 0$ otherwise.
Obviously, $sr$ should be as large as possible.

We have conducted all experiments on a HP Z800 PC with a 2.67GHz
GPU and 24GB memory, running Ubuntu 12.04 operating system. We
implement our algorithm in C++, with -O3 to compile and run.
For {BSS\_GED}, we set the beam width $\id{w = \idrm{15}}$ for
the sparse graphs in datasets AIDS-10K and PROTEIN-300, and
$\id{w=\idrm{50}}$ for the density graphs in dataset S1K.E30.D30.L20.

\subsection{Evaluating GenSuccr}
\label{subsec:evalutingGS}

In this section, we evaluate the effect of {\ttfamily GenSuccr} on the
performance of {BSS\_GED}. To make a comparison, we replace
{\ttfamily GenSuccr} with {\ttfamily BasicGenSuccr} (i.e., Alg.~\ref{alg:general})
and then obtain {BSS\_GED}$^\idrm{b}$, where {\ttfamily BasicGenSuccr}
is the basic method of generating successors
used in {A$^\star$-GED}~\cite{RiesenFB2007} and {DF-GED}~\cite{AbuRRM2015}.
In {BSS\_GED}$^\idrm{b}$, we also use the same heuristics proposed
in Section~\ref{sec:optimization}. Figure~\ref{fig:Fig05} shows the
average solve ratio and running time.

As shown in Figure~\ref{fig:Fig05}, the average solve ratio of
\mbox{BSS\_GED} is much higher than that of {BSS\_GED}$^\idrm{b}$, and
the gap between them becomes larger as the query graph size increases.
This indicates that {\ttfamily GenSuccr} provides more reduction on
the search space for larger graphs. Regarding the running time,
{BSS\_GED} achieves the respective 1x--5x, 0.4x--1.5x and \mbox{0.1x--4x} speedup
over {BSS\_GED}$^\idrm{b}$ on AIDS-10K, \mbox{PROTEIN-300} and
S1K.E30.D50.L20. Thus, we create a small search space by {\ttfamily GenSuccr}.

\subsection{Evaluating BSS\_GED}
\label{subsec:EvaluatingBSS}

In this section, we evaluate the effect of beam-stack search and
heuristics on the performance of {BSS\_GED}. We fix datasets AIDS-10K,
PROTEIN-300 and S1K.E30.D30.L20 as the tested datasets and select
their corresponding groups $15\pm1$ as the query groups, respectively.

\noindent (1). \emph{Effect of} $w$

As we know, beam-stack search achieves a flexible trade-off between available
memory and expensive backtracking by setting \mbox{different $w$}, thus we
vary~$w$ to evaluate its effect on the performance. Figure~\ref{fig:Fig06}
shows the average solve ratio and running time.

\begin{figure}[htbp]
\centering
	\begin{minipage}{0.47\linewidth}
		\centerline{\includegraphics[width=1\textwidth, height=2.7cm]{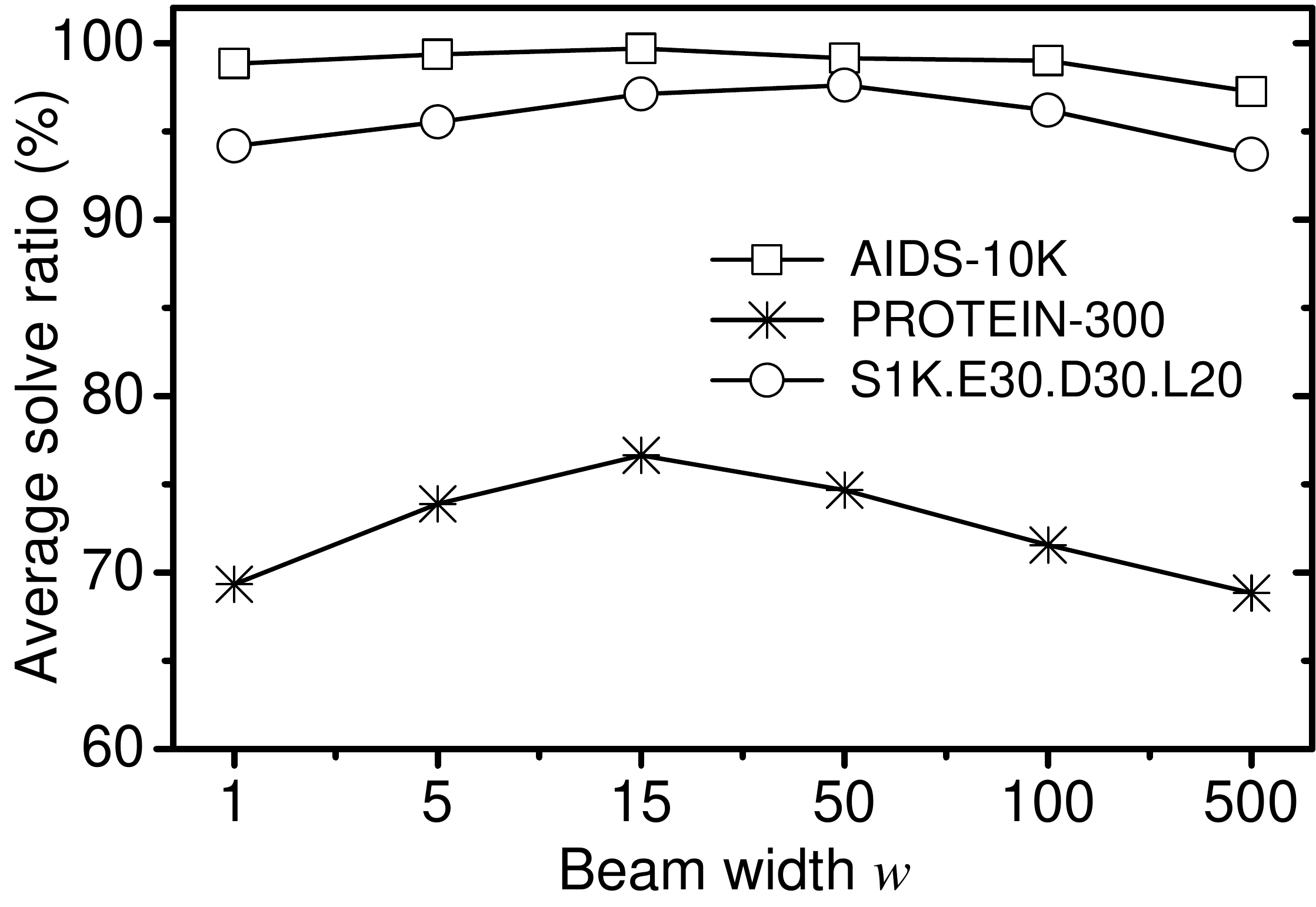}}
	\end{minipage}
	\quad
	\begin{minipage}{0.47\linewidth}
		\centerline{\includegraphics[width=1\textwidth, height=2.7cm]{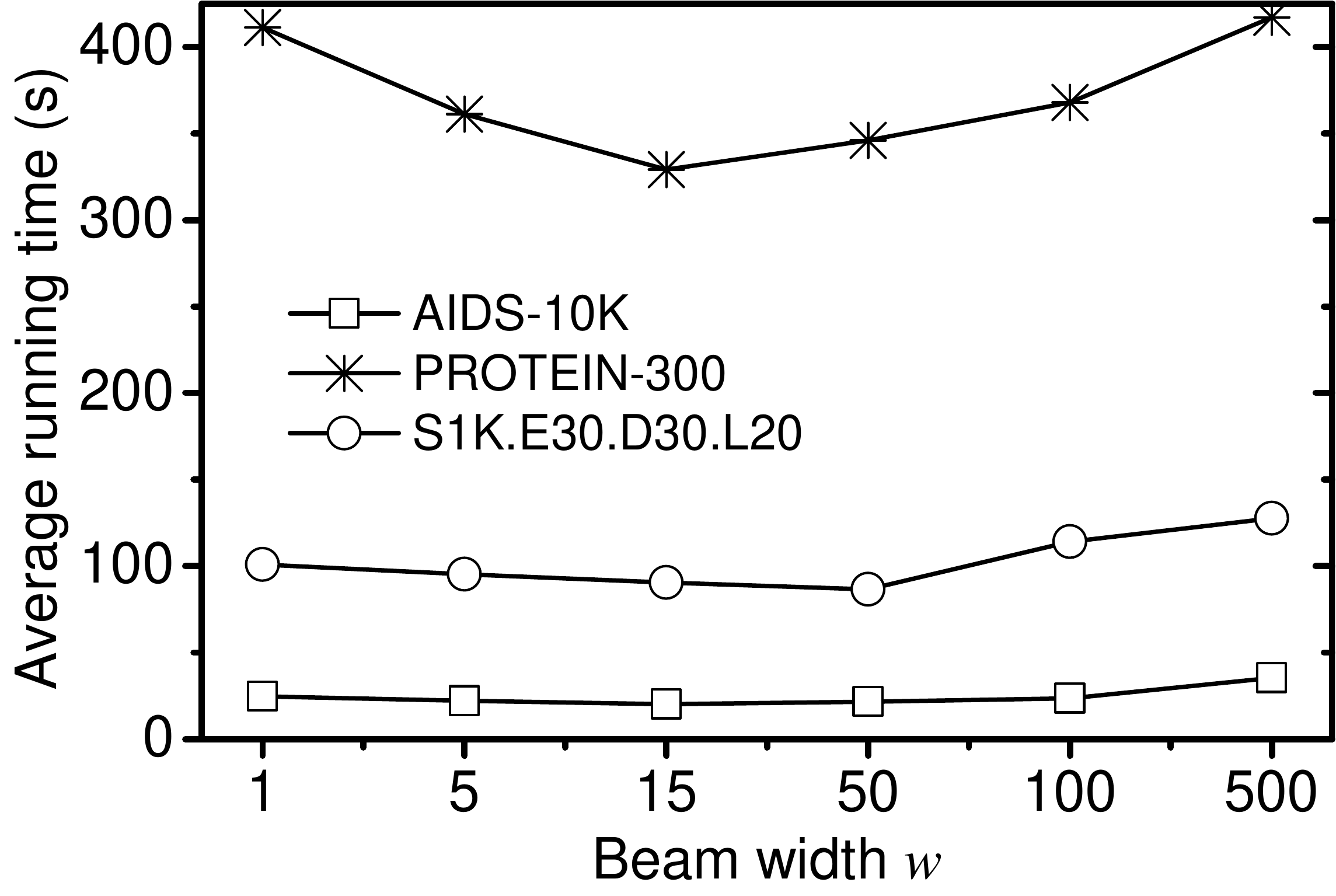}}
	\end{minipage}
    \caption{Effect of $w$ on the performance of {BSS\_GED}.}
	\vspace{-5pt}	
    \label{fig:Fig06}
\end{figure}

By Figure~\ref{fig:Fig06}, we obtain that the average
solve ratio first increases and then decreases, and achieves maximum
when $\id{w = \idrm{15}}$ on AIDS-10K and PROTEIN-300, and $\id{w = \idrm{50}}$
on S1K.E30.D30.L20. There are several factors may contribute to this trend:
(1) When $w$ is too small, {BSS\_GED} may be trapped into
a local suboptimal solution and hence produces lots of backtracking.
(2) When $w$ is too large, {BSS\_GED} expands too many
unnecessary nodes in each layer. Note that, depth-first search is
a special case of beam-stack search when $\id{w = \idrm{1}}$. Thus,
beam-stack search performs better than depth-first search. As
previously demonstrated in~\cite{AbuRRM2015}, depth-first search
performs better than best-first search. Therefore, we conclude that
the beam-stack search paradigm outperforms the best-first and depth-first
search paradigms for the GED computation.
\begin{figure}[htbp]
\centering
	\begin{minipage}{0.47\linewidth}
		\centerline{\includegraphics[width=1\textwidth, height=2.6cm]{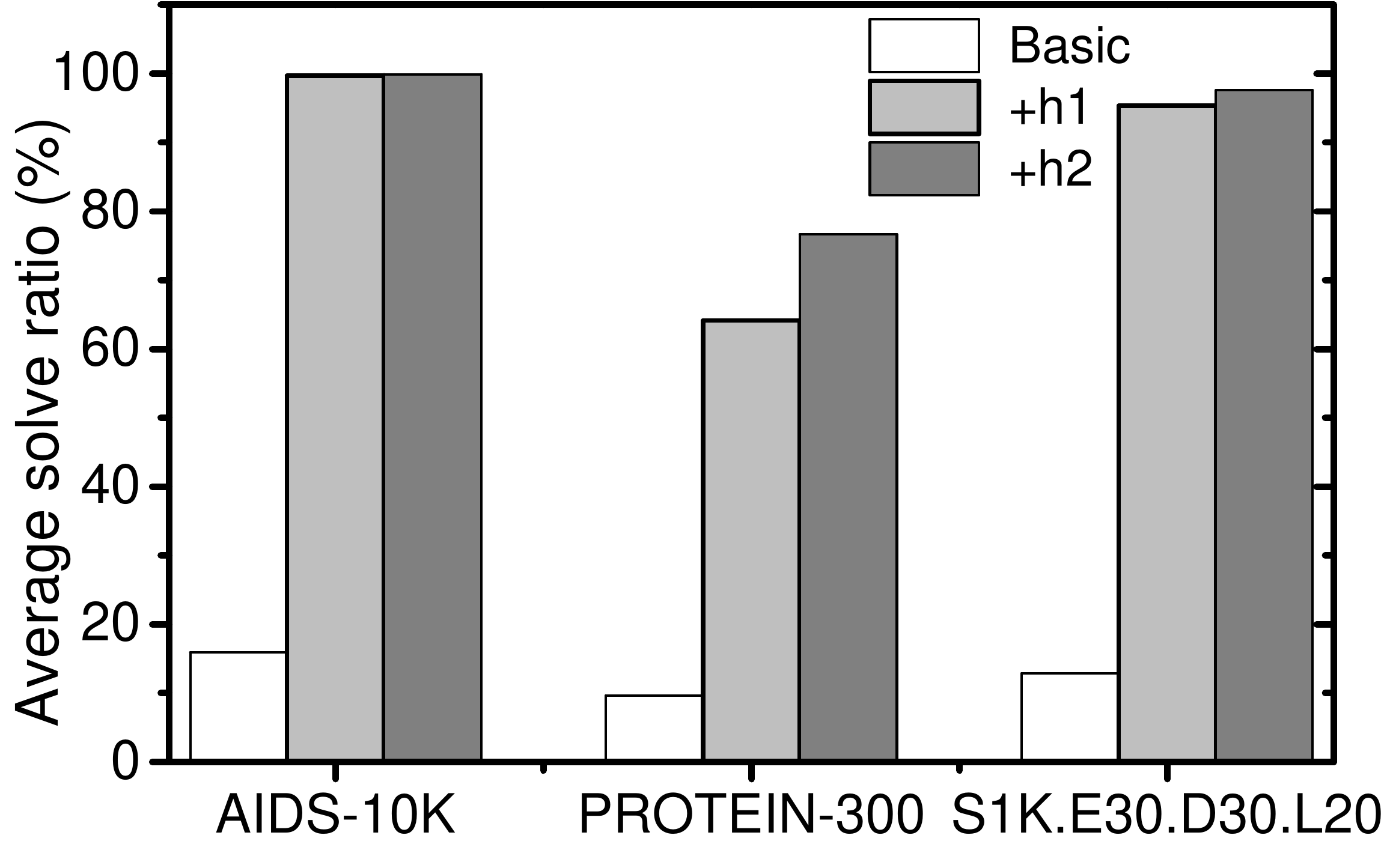}}
	\end{minipage}
	\quad
	\begin{minipage}{0.47\linewidth}
		\centerline{\includegraphics[width=1\textwidth, height=2.7cm]{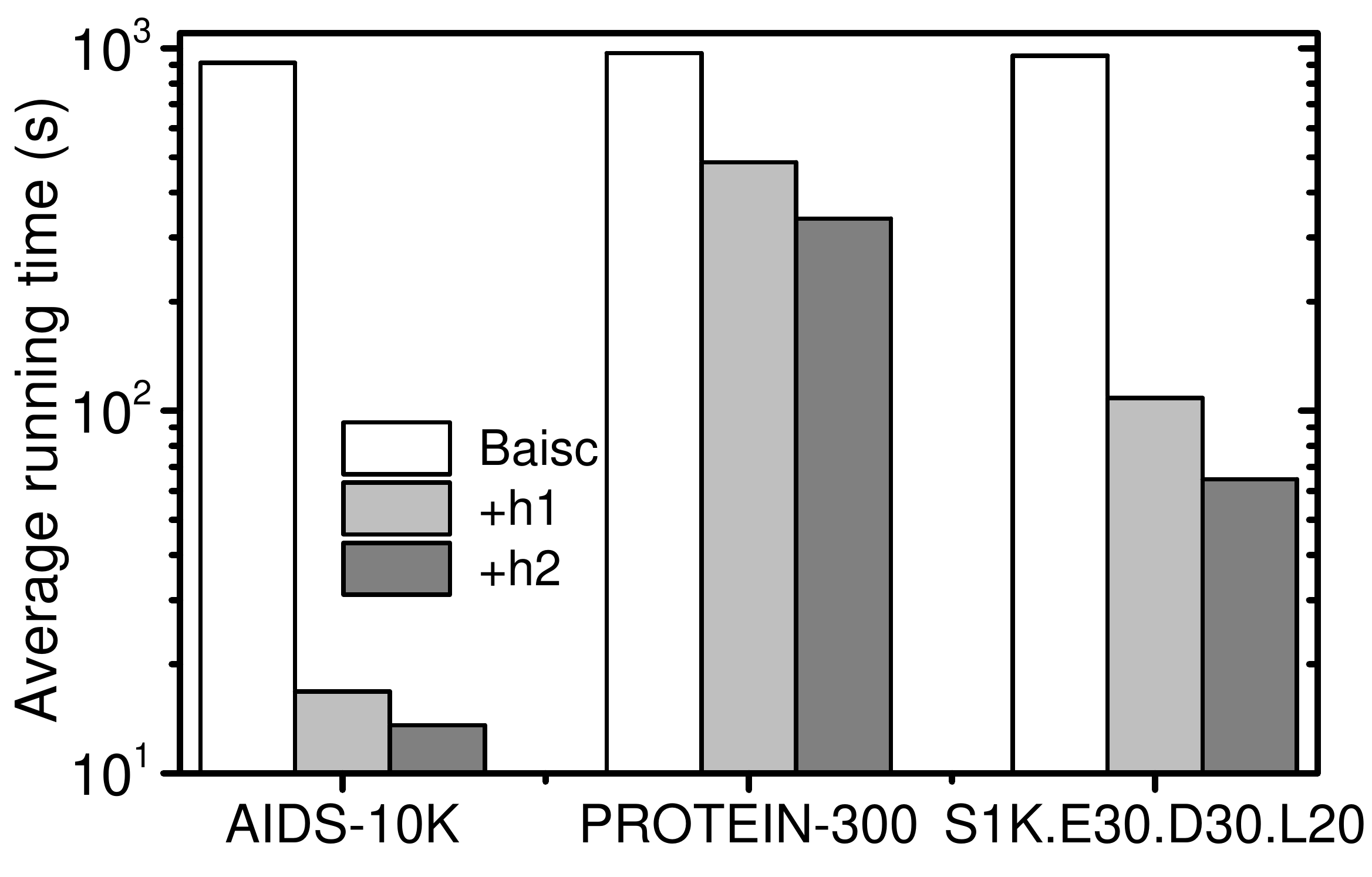}}
	\end{minipage}
\caption{Effect of heuristics on the performance of {BSS\_GED}.}
\vspace{-5pt}
\label{fig:Fig07}
\end{figure}
\begin{figure*}[htbp]
\centering
 \begin{minipage}{0.28\linewidth}
		\centerline{\includegraphics[width=1\textwidth, height=3cm]{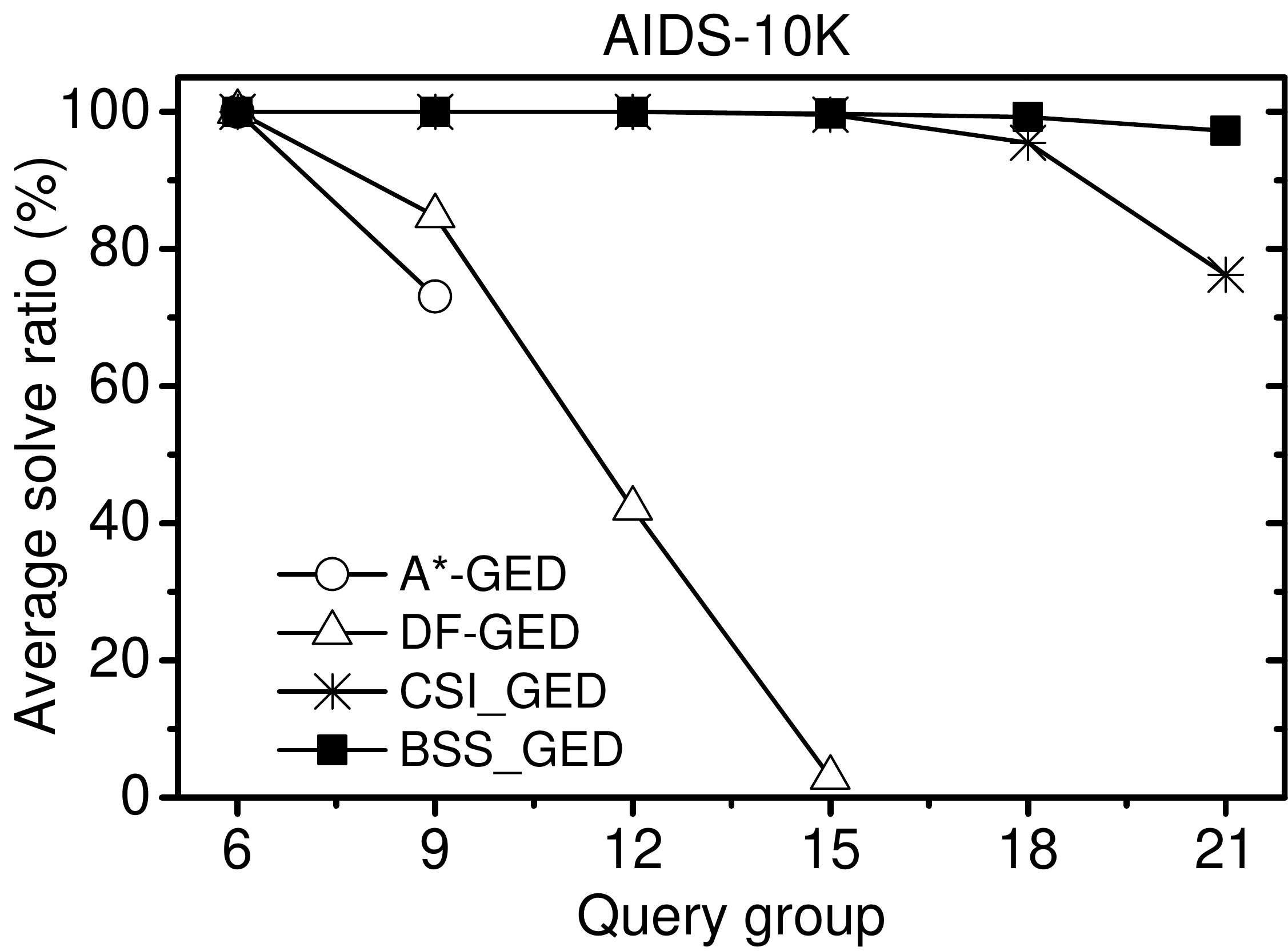}}
	\end{minipage}
	\qquad
	\begin{minipage}{0.28\linewidth}
		\centerline{\includegraphics[width=1\textwidth, height=3cm]{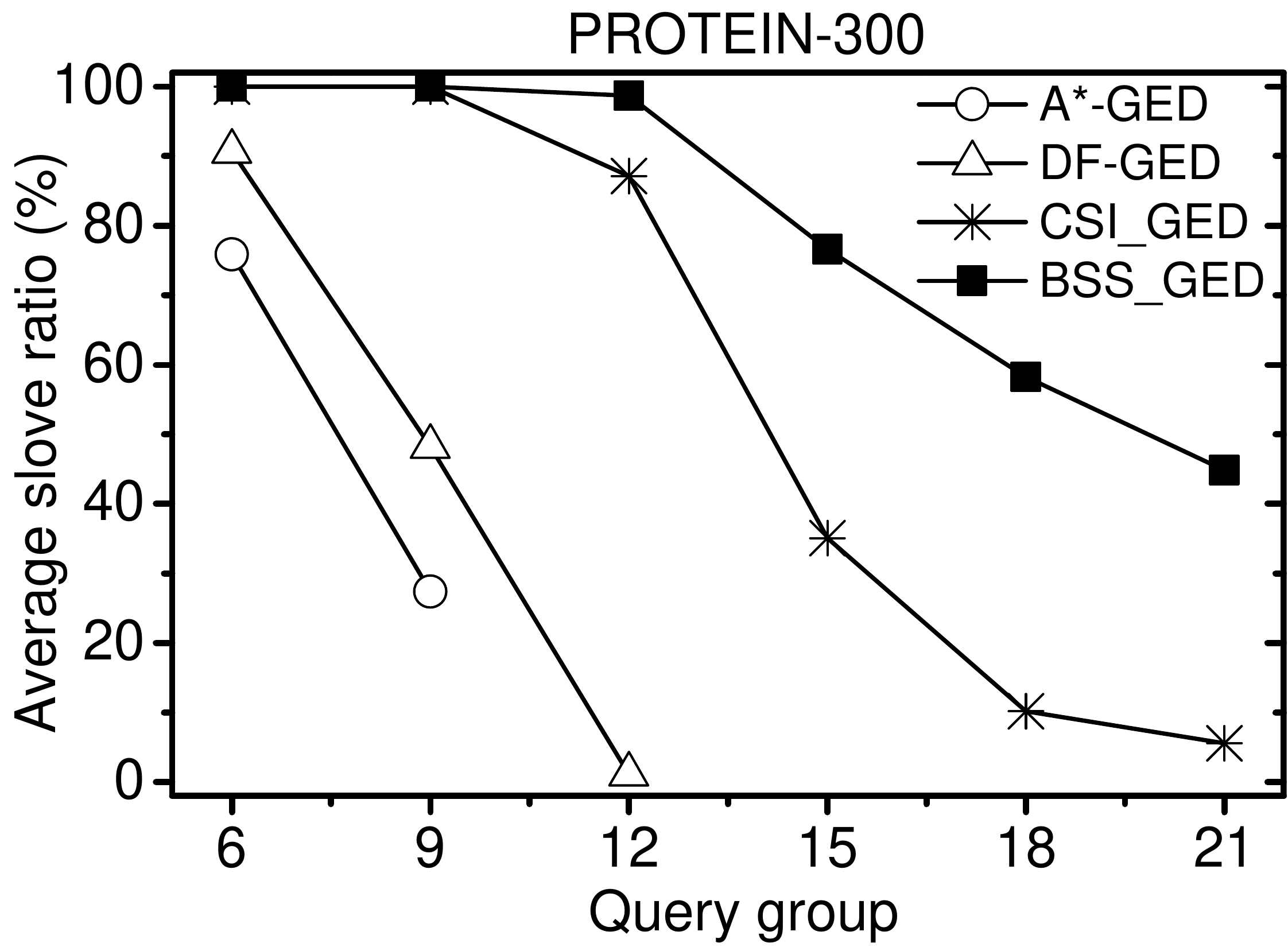}}
	\end{minipage}
	\qquad
	\begin{minipage}{0.28\linewidth}
		\centerline{\includegraphics[width=1\textwidth, height=3cm]{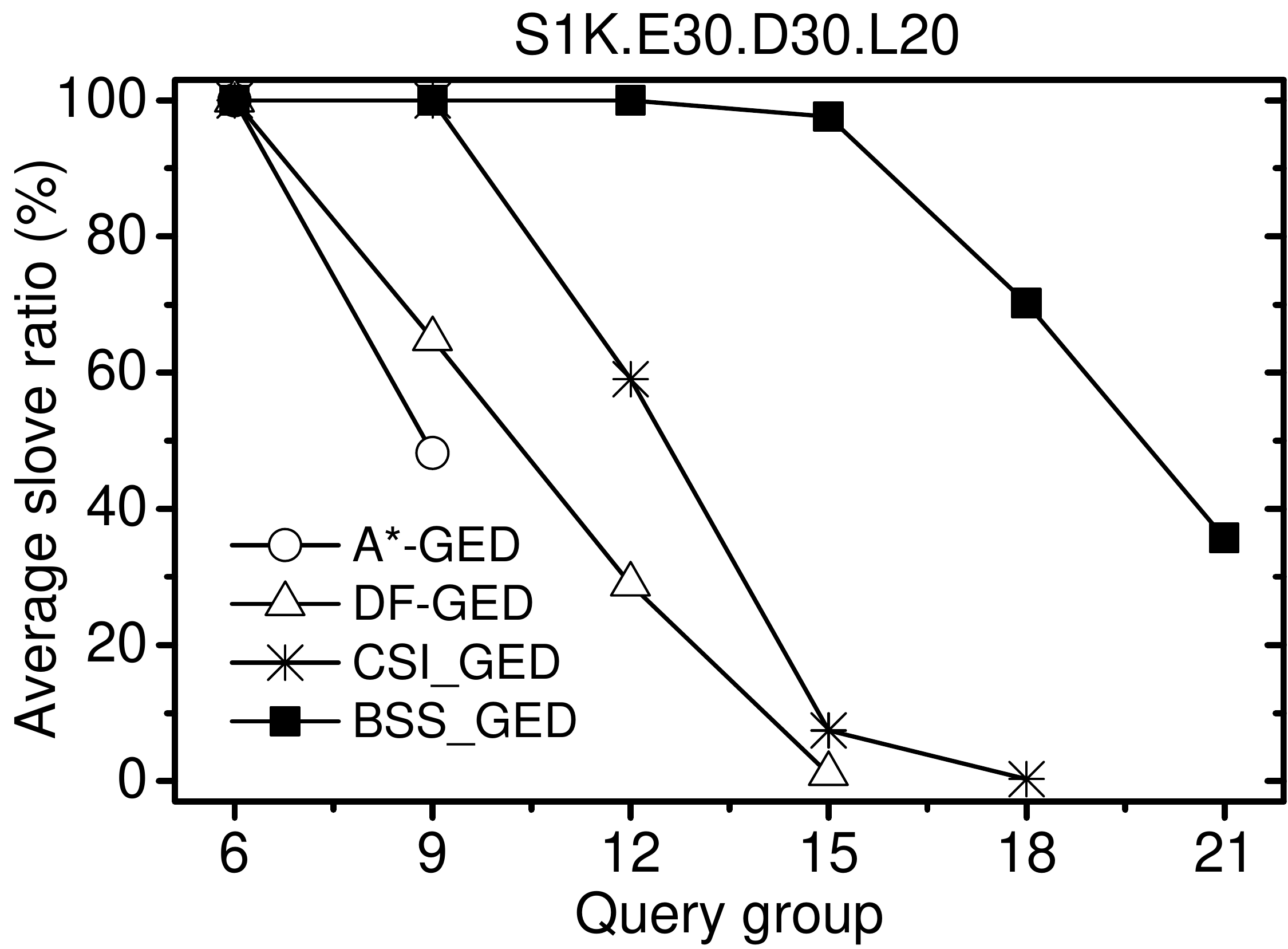}}
    \end{minipage}
    \vfill
    \begin{minipage}{0.28\linewidth}
		\centerline{\includegraphics[width=1\textwidth, height=3cm]{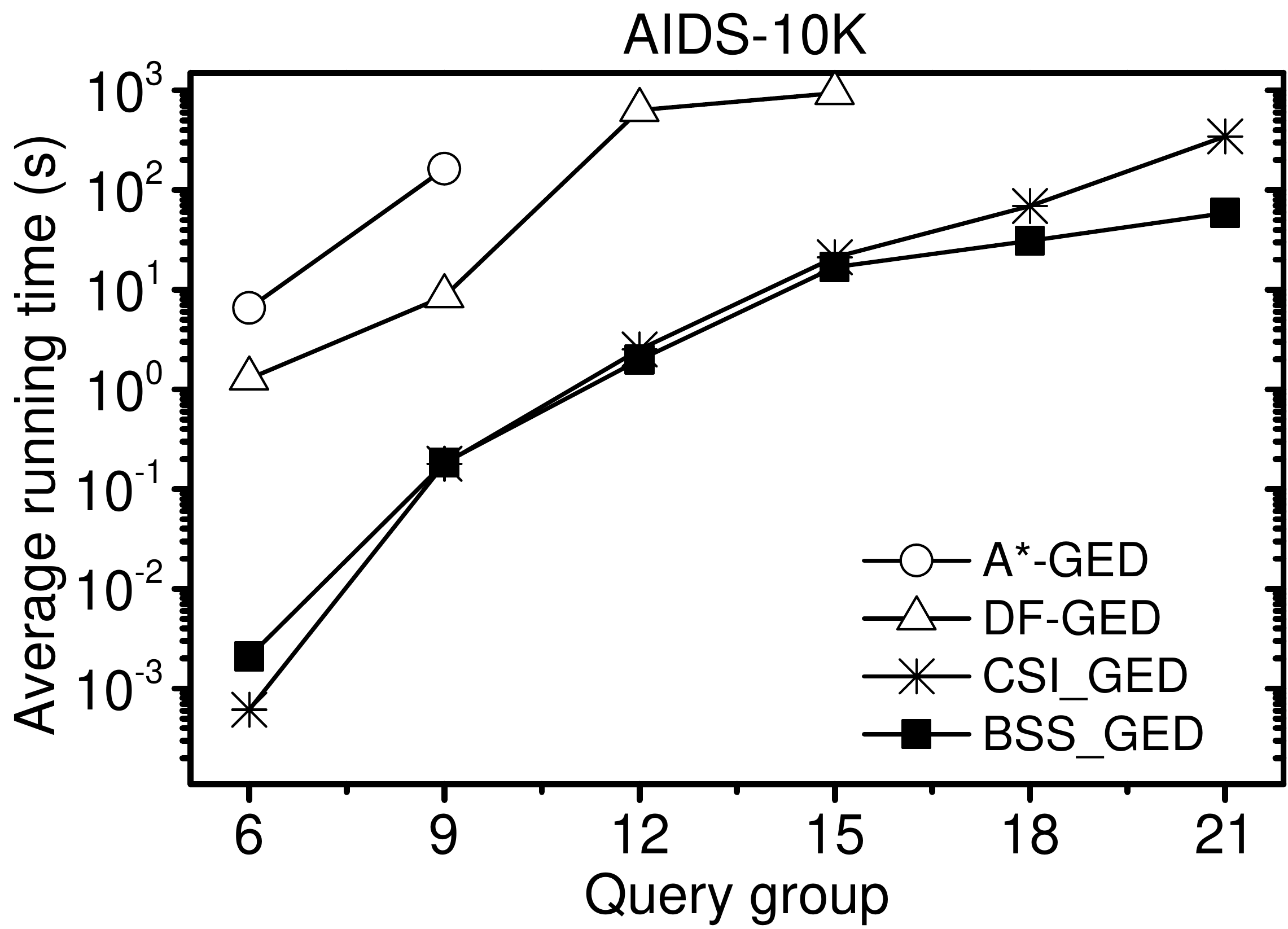}}
	\end{minipage}
	\qquad
	\begin{minipage}{0.28\linewidth}
		\centerline{\includegraphics[width=1\textwidth, height=3cm]{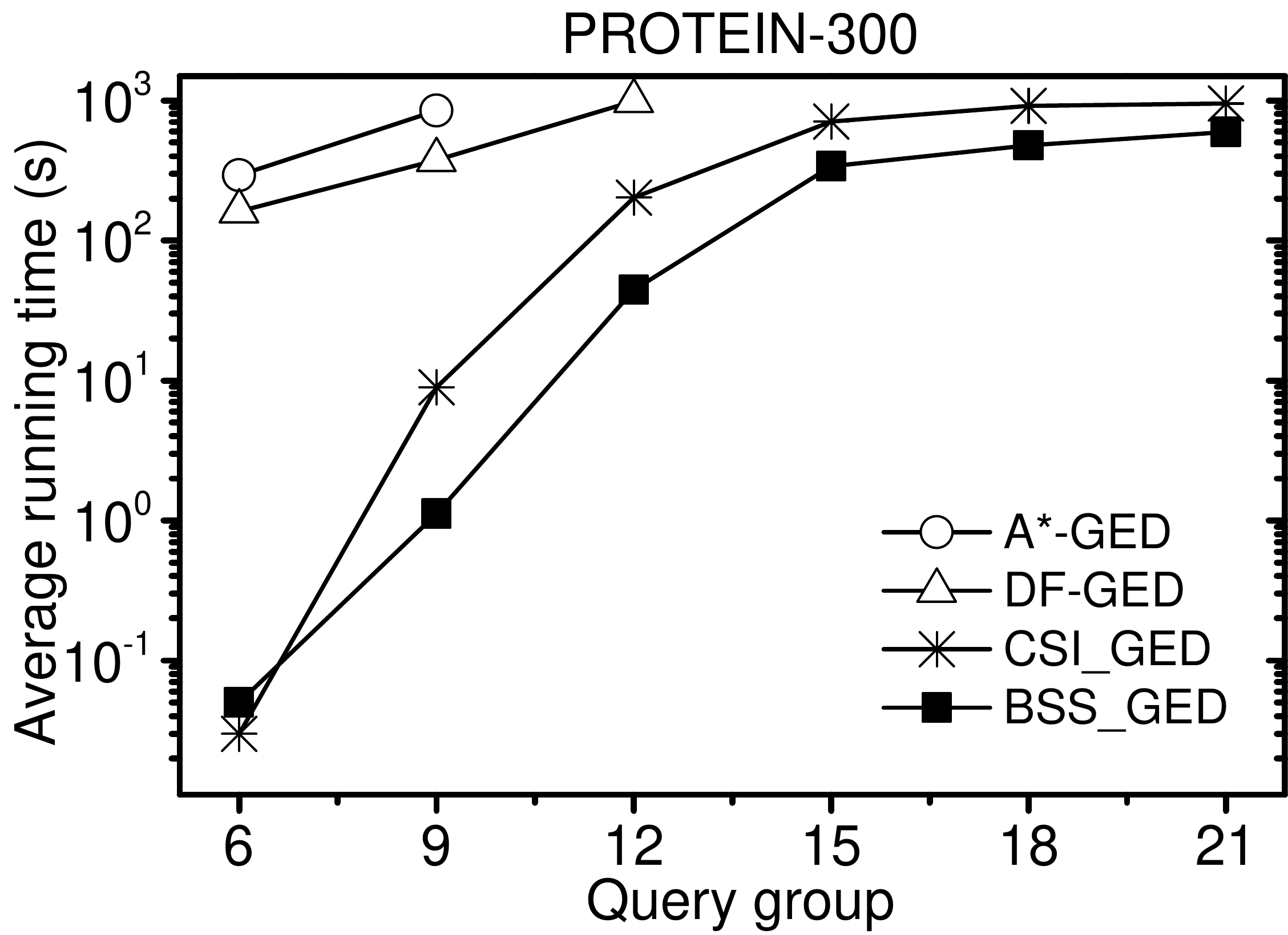}}
	\end{minipage}
	\qquad
	\begin{minipage}{0.28\linewidth}
		\centerline{\includegraphics[width=1\textwidth, height=3cm]{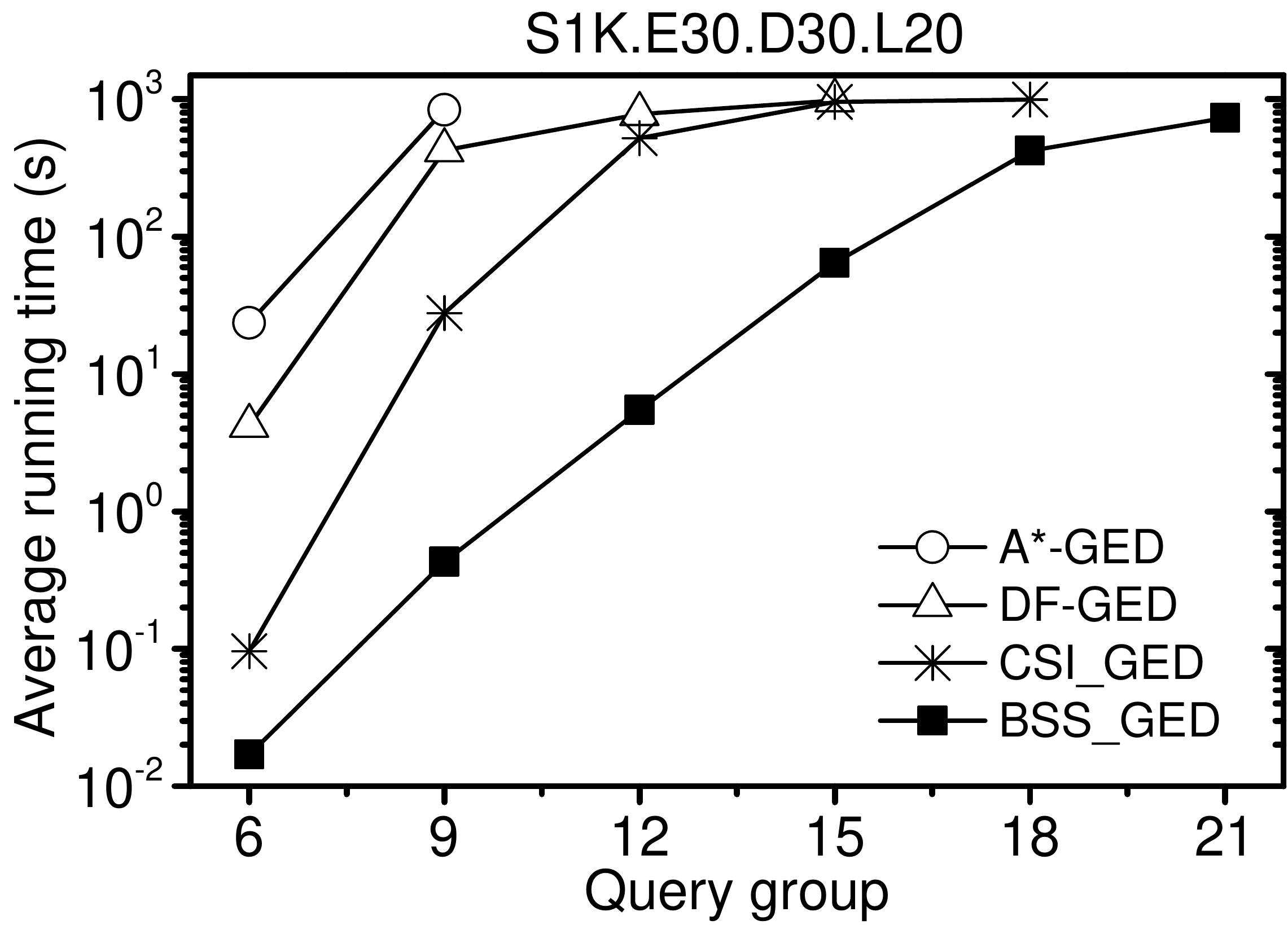}}
	\end{minipage}
    \caption{Performance comparison with existing state-of-the-art GED methods.}
    \label{fig:Fig08}
\end{figure*}

\noindent (2). \emph{Effect of Heuristics}

In this part, we evaluate the effect of the proposed two heuristics
by injecting them one by one into the base algo-rithm.
We use the term {\ttfamily Basic} for the baseline algorithm without
applying any heuristics. {\ttfamily +h1} denotes the improved algorithm
of {\ttfamily Basic} by incorporating the first heuristics (Section~\ref{subsec:lowerBounds}). {\ttfamily +h2} denotes the improved algorithm of {\ttfamily +h1} by
incorporating the second heuristic (Section~\ref{subsec:orderingVertices}).
Figure~\ref{fig:Fig07} plots the average solve ratio and running time.

By Figure~\ref{fig:Fig07}, we know that the average solve ratio
of {\ttfamily Basic} is only 15\% of that of {\ttfamily +h1}.
This means that the proposed heuristic function provides powerful
pruning ability. Considering the running time, {\ttfamily +h1}
brings the respective 50x, 2x and 9x speedup over
{\ttfamily Basic} on AIDS-10K, PROTEIN-300 and S1K.E30.D30.L20. Moreover,
compared with {\ttfamily +h1}, the running time needed by
{\ttfamily +h2} decreases 21\%, 30\% and 41\% on AIDS-10K, PROTEIN-300
and S1K.E30.D30.L20, respectively. Thus, the proposed two heuristics
greatly boost the performance.

\subsection{Comparing with Existing GED Methods}

In this section, we compare {BSS\_GED} with existing \mbox{methods}
{A$^\star$-GED}~\cite{RiesenFB2007}, {DF-GED}~\cite{AbuRRM2015}
and {CSI\_GED}~\cite{GoudaH2016}. Figure~\ref{fig:Fig08} shows
the average solve ratio and running time.

By Figure~\ref{fig:Fig08}, we know that {BSS\_GED}
performs the best in terms of average solve ratio. For {A$^\star$-GED},
it cannot obtain GED of graphs with more than 12 vertices
within~24GB. For {DF-GED}, it cannot finish the GED
computation of graphs with more than 15 vertices in 1000s. Besides,
for density graphs in S1K.E30.D30.L20, the average solve ratio of
{CSI\_GED} drops sharply as the query graph size increases, which
confirms that it is unsuitable for dense graphs.

Regarding the running time, {BSS\_GED} still performs the best
in most cases. {DF-GED} performs better than {A$^\star$-GED}, which is
consistent with the previous results in~\cite{AbuRRM2015}. Compared with {DF-GED},
{BSS\_GED} achieves 50x--500x, 20x--2000x and 15x--1000x speedup on
AIDS-10K, PROTEIN-300 and S1K.E30.D30.L20, respectively.
Though {CSI\_GED} performs better than {BSS\_GED} on AIDS-10K when
the query graph size is less than 9, {BSS\_GED} achieves 2x--5x speedup over
{CSI\_GED} when the graph size is greater than 12. Besides, for S1K.E30.D30.L20,
{BSS\_GED} achieves 5x--95x speedup over {CSI\_GED}. Thus, {BSS\_GED}
is efficient for the GED computation on sparse as well as dense graphs.

\subsection{Performance Evaluation on Graph Similarity Search}

In this part, we evaluate the performance of {BSS\_GED} as a
standard graph similarity search query method by comparing it with
{CSI\_GED} and {GSimJoin}~\cite{ZhaoXLW2012}.
For each dataset described in Section~\ref{subsec:Datasets}, we
use its entire dataset and randomly select 100 graphs from it
as query graphs. Figure~\ref{fig:Fig09} shows the total running
time (i.e., the filtering time plus the verification time).

It is clear from Figure~\ref{fig:Fig09} that {BSS\_GED} has the best
performance in most cases, especially for large~$\tau$. For {GSimJoin},
it cannot finish when $\id{\tau \geq \idrm{8}}$ in AIDS and PROTEIN because of the
huge memory consumption. Compared with {GSimJoin} for $\tau$ values where it can
finish, {BSS\_GED} achieves the respective 1.6x--15000x, 3.8x--800x and 2x--3000x
speedup on \mbox{AIDS}, PROTEIN and S1K.E30.D30.L20.
Considering {CSI\_GED}, it performs slightly better than \mbox{BSS\_GED} when
$\id{\tau \leq \idrm{4}}$ on AIDS. However, {BSS\_GED} performs much
better than {CSI\_GED} when $\id{\tau \geq \idrm{6}}$ and the gap between them becomes
larger as~$\tau$ increases. Specifically, {BSS\_GED} achieves
2x--28x, 1.2x--100000x and 1.1x--187x speedup over {CSI\_GED} on AIDS,
PROTEIN and S1K.E30.D30.L20, respectively. As previously discussed in~\cite{GoudaH2016},
{CSI\_GED} performs much better than the state-of-the-art graph similarity
search query methods. Thus, we conclude that {BSS\_GED} can efficiently finish the graph
similarity search and runs much faster than existing methods.
\begin{figure*}[!htbp]
\centering
\begin{minipage}{0.28\linewidth}
	\centerline{\includegraphics[width=1\textwidth, height=3cm]{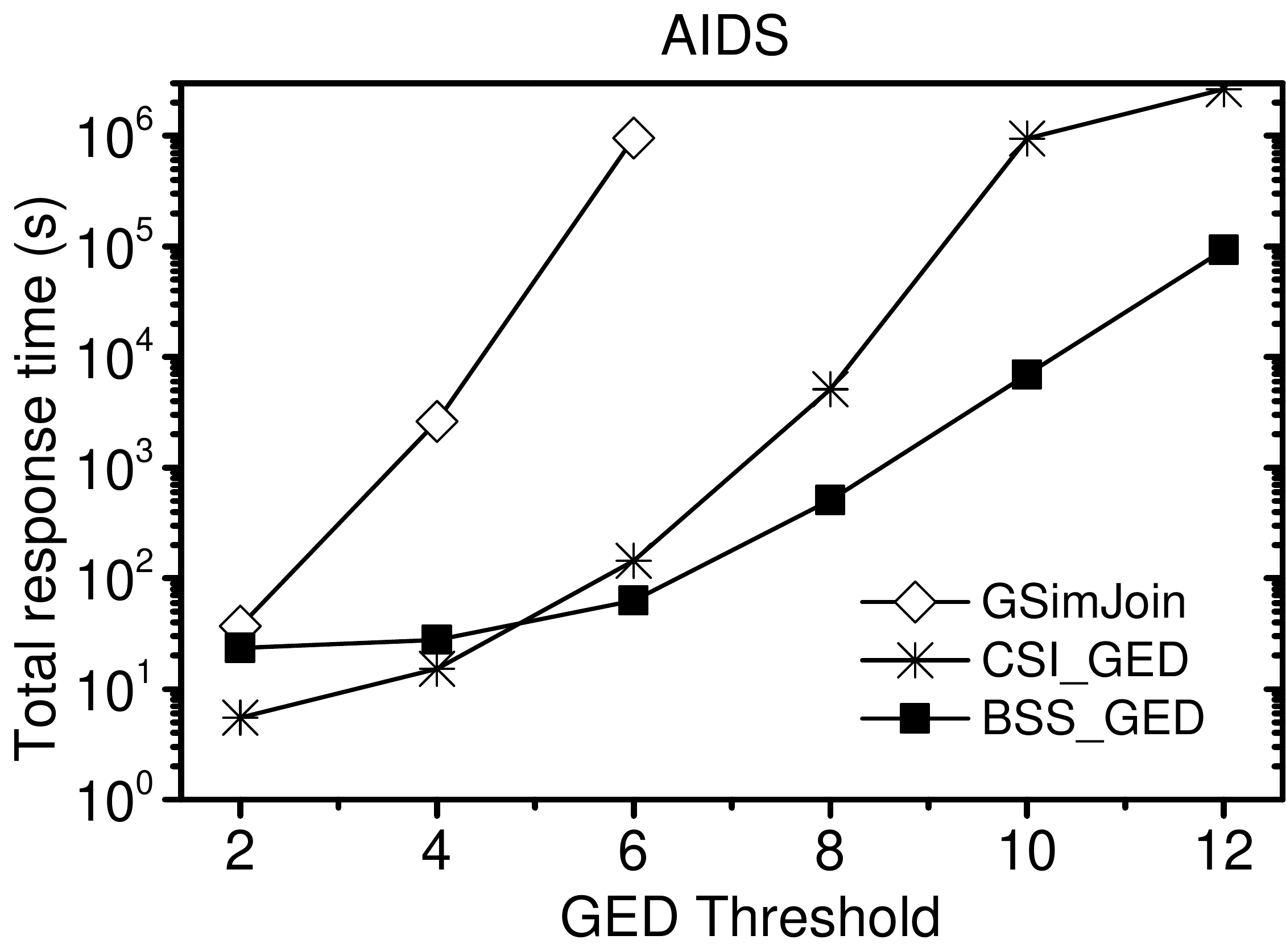}}
\end{minipage}
\qquad
\begin{minipage}{0.28\linewidth}
	\centerline{\includegraphics[width=1\textwidth, height=3cm]{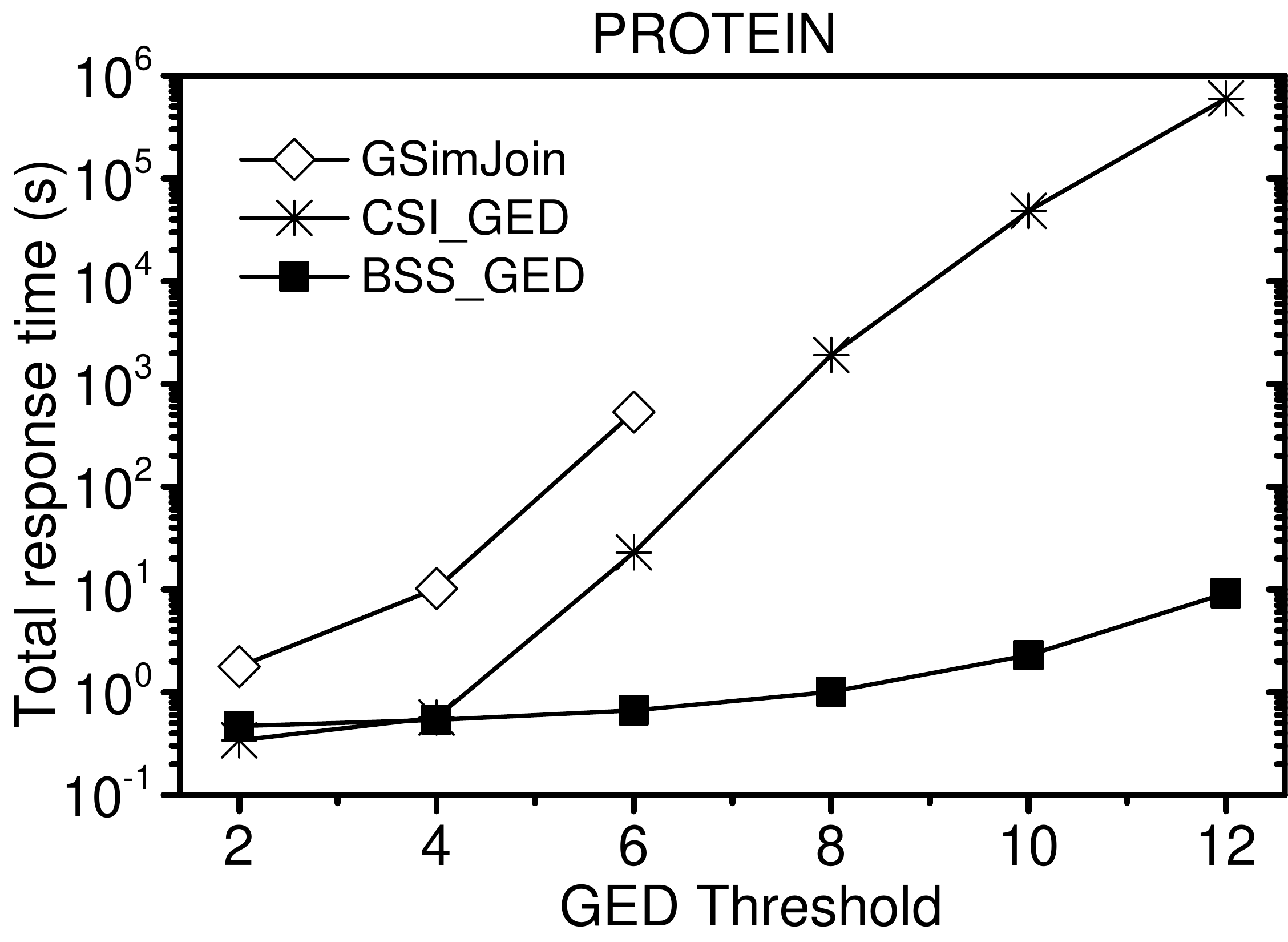}}
\end{minipage}
\qquad
\begin{minipage}{0.28\linewidth}
	\centerline{\includegraphics[width=1\textwidth, height=3cm]{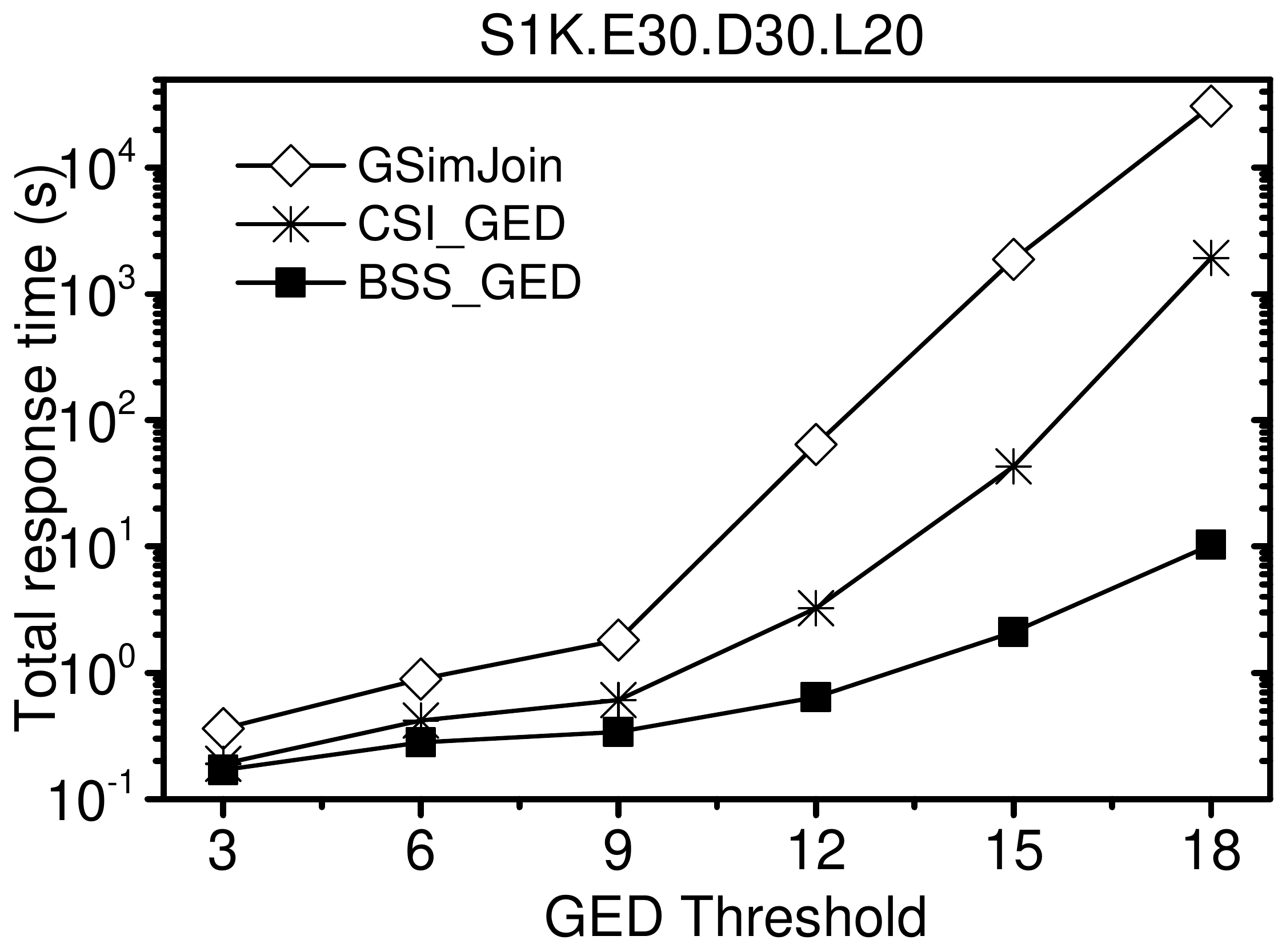}}
\end{minipage}
\caption{Performance comparison with CSI\_GED and GSimJoin.}
\label{fig:Fig09}
\end{figure*}

\section{Related Works}
\label{sec:relatedWorks}

Recently, the GED computation has received considerable attention.
{A$^\star$-GED}~\cite{RiesenFB2007, RiesenEB2013} and {DF-GED}~\cite{AbuRRM2015}
are two major vertex-based mapping methods, which utilize the
best-first and depth-first search paradigms, respectively.
Provided that the heuristic function estimates the lower bound of
GED of unmapped parts, {A$^\star$-GED} guarantees that the first found
complete mapping induces the GED of comparing graphs, which seems very
attractive. However, {A$^\star$-GED} stores numerous partial
mappings, resulting in a huge memory consumption. As a result,
it is only suitable for the small graphs. To overcome this bottleneck,
{DF-GED} performs a depth-first search, which only stores the partial
mappings of a path from the root to a leaf node. However, it may
be easily trapped into a local suboptimal solution, leading to massive expensive
backtracking. {CSI\_GED}~\cite{GoudaH2016} is an edge-based mapping method
through common substructure isomorphism enumeration, which has an
excellent performance on the sparse graphs. However, the edge-based
search space of {CSI\_GED} is exponential with respect to the number
of edges of comparing graphs, making it naturally be unsuitable for
dense graphs. Note that, {CSI\_GED} only works for the uniform model,
and~\cite{Blumenthal2017} generalized it to cover the non-uniform model.

Another work closely related to this paper is the GED based graph similarity
search problem. Due to the hardness of computing GED,
existing graph similarity search query methods~\cite{WangWYY2012, ChenHHJ2016, ZengTWFZ2009, ZhaoXLLZ2013, ZhaoXLW2012, ZhengZLWZ2015} all adopt the filter-and-verify
schema, that is, first filtering graphs to obtain a candidate set
and then verifying those candidate graphs. In the verification phase,
most of the existing methods adopt {A$^\star$-GED} as their verifiers.
As discussed above, {BSS\_GED} greatly outperforms {A$^\star$-GED},
hence it can be also used as a standard verifier to accelerate
those graph similarity search query methods.

\section{Conclusions and Future Work}
\label{sec:conclusion}

In this paper, we present a novel vertex-based mapping method
for the GED computation. First, we reduce the number of invalid and
redundant mappings involved in the GED computation and then create
a small search space. Then, we utilize beam-stack search to
efficiently traverse the search space to compute GED, achieving
a flexible trade-off between available memory and expensive backtracking.
In addition, we also give two efficient heuristics to prune the search space.
However, it is still very hard to compute GED of large graphs within a reasonable
time. Thus, the approximate algorithm that fast suboptimally compute GED 
is left as a future work.

\section{Acknowledgments}
The authors would like to thank Kaspar Riesen and Zeina Abu-Aisheh
for providing their source files, and thank Karam Gouda and Xiang Zhao
for providing their executable files. This work is supported in part
by China NSF grants 61173025 and 61373044, and US NSF grant CCF-1017623.
Hongwei Huo is the corresponding author.


\begin{thebibliography}{10}
\bibitem{Blumenthal2017}
D.~B.~Blumenthal and J.~J.~Gamper.
\newblock Exact computation of graph edit distance for uniform and non-uniform
  metric edit costs.
\newblock In {\em \id{GbRPR}}, pages 211--221, 2017.

\bibitem{ConteFSV2004}
D.~Conte, P.~Foggia, C.~Sansone, and M.~Vento.
\newblock Thirty years of graph matching in pattern recognition.
\newblock {\em Int. J. Pattern Recogn.}, 18(03):265--298, 2004.

\bibitem{RiesenH2009}
K.~H.~Bunke.
\newblock Approximate graph edit distance computation by means of bipartite
  graph matching.
\newblock {\em Image Vision Comput.}, 27(7):950--959, 2009.

\bibitem{GoudaH2016}
K.~Gouda and M.~Hassaan.
\newblock {CSI\_GED}: An efficient approach for graph edit similarity
  computation.
\newblock In {\em \id{ICDE}}, pages 256--275, 2016.

\bibitem{RiesenFB2007}
K.~Riesen, S.~Fankhauser, and H.~Bunke.
\newblock Speeding up graph edit distance computation with a bipartite
  heuristic.
\newblock In {\em \id{MLG}}, pages 21--24, 2007.

\bibitem{HartNB1968}
P.~E.~Hart, N.~J.~Nilsson, and B.~Raphael.
\newblock A formal basis for the heuristic determination of minimum cost paths.
\newblock {\em IEEE Trans.SSC.}, 4(2):100--107, 1968.

\bibitem{RiesenEB2013}
K.~Riesen, S.~Emmenegger, and H.~Bunke.
\newblock A novel software toolkit for graph edit distance computation.
\newblock In {\em \id{GbRPR}}, pages 142--151, 2013.

\bibitem{MarinAD2008}
R.~M.~Mar{\'\i}n, N.~F.~Aguirre, and E.~E.~Daza.
\newblock Graph theoretical similarity approach to compare molecular
  electrostatic potentials.
\newblock {\em J. Chem. Inf. Model.}, 48(1):109--118, 2008.

\bibitem{KellyH2005}
A.~Robles-Kelly and R.~H.~Edwin.
\newblock Graph edit distance from spectral seriation.
\newblock {\em IEEE Trans. Pattern Anal Mach Intell.}, 27(3):365--378, 2005.

\bibitem{RussellN2003}
S.~Russell and P.~Norvig.
\newblock Artificial intelligence: a modern approach (2nd ed.).
\newblock Prentice-Hall, 2002.

\bibitem{ZhouH2005}
R.~Zhou and E.~A.~Hansen.
\newblock Beam-stack search: Integrating backtracking with beam search.
\newblock In {\em \id{ICAPS}}, pages 90--98, 2005.

\bibitem{WangWYY2012}
G.~Wang, B.~Wang, X.~Yang, and G.~Yu.
\newblock Efficiently indexing large sparse graphs for similarity search.
\newblock {\em IEEE Trans. Knowl Data Eng.}, 24(3):440--451, 2012.

\bibitem{WilliamsHW2007}
D.~W.~Williams, J.~Huan, and W.~Wang.
\newblock Graph database indexing using structured graph decomposition.
\newblock In {\em \id{ICDE}}, pages 976--985, 2007.

\bibitem{ChenHHJ2016}
X.~Chen, H.~Huo, J.~Huan, and J.~S.~Vitter.
\newblock {MSQ}-{Index}: A succinct index for fast graph similarity search.
\newblock {\em arXiv preprint arXiv:1612.09155}, 2016.

\bibitem{YanYH2004}
X.~Yan, P.~S.~Yu, and J.~Han.
\newblock Graph indexing: a frequent structure-based approach.
\newblock In {\em \id{SIGMOD}}, pages 335--346, 2004.

\bibitem{AbuRRM2015}
Z.~Abu-Aisheh, R.~Raveaux, J.~Y.~Ramel, and P.~Martineau.
\newblock An exact graph edit distance algorithm for solving pattern
  recognition problems.
\newblock In {\em \id{ICPRAM}}, pages 271--278, 2015.

\bibitem{ZengTWFZ2009}
Z.~Zeng, A.~K.~H. Tung, J.~Wang, J.~Feng, and L.~Zhou.
\newblock Comparing stars: On approximating graph edit distance.
\newblock {\em PVLDB}, 2(1):25--36, 2009.

\bibitem{ZhaoXLLZ2013}
X.~Zhao, C.~Xiao, X.~Lin, Q.~Liu, and W.~Zhang.
\newblock A partition-based approach to structure similarity search.
\newblock {\em PVLDB}, 7(3):169--180, 2013.

\bibitem{ZhaoXLW2012}
X.~Zhao, C.~Xiao, X.~Lin, and W.~Wang.
\newblock Efficient graph similarity joins with edit distance constraints.
\newblock In {\em \id{ICDE}}, pages 834--845, 2012.

\bibitem{ZhengZLWZ2015}
W.~Zheng, L.~Zou, X.~Lian, D.~Wang, and D.~Zhao.
\newblock Efficient graph similarity search over large graph databases.
\newblock {\em IEEE Trans. Knowl Data Eng.}, 27(4):964--978, 2015.
\end{thebibliography}
\end{document}